\documentclass[11pt, leqno]{article}
\usepackage{amsmath,setspace,multirow,lineno}
\usepackage[font=small,labelfont=bf,singlelinecheck=off]{caption}
\usepackage[top=.5in, bottom=.5in, left=.75in, right=.75in]{geometry}

\usepackage[round]{natbib}
\usepackage{color,soul}

\DeclareCaptionStyle{italic}[justification=centering]{labelfont={bf},textfont={it},labelsep=colon}

\usepackage{graphicx,psfrag,epsf,textcomp,epstopdf, amsthm, paralist, amssymb}
\usepackage{enumerate, dsfont, alltt, verbatim}
\usepackage{natbib}
\usepackage{url,xcolor}
\usepackage{booktabs, subfig, bm, paralist,mathpazo,tikz,longtable,microtype, authblk}
\usepackage[linesnumbered,ruled,vlined]{algorithm2e}

\usepackage[pdftex,colorlinks=true]{hyperref}
\definecolor{darkblue}{rgb}{0,0,.6}
\hypersetup{citecolor=darkblue,linkcolor=darkblue,urlcolor=darkblue}
\definecolor{DarkRed}{rgb}{.7,0,.4}

\usepackage{comment,orcidlink}

\newcommand{\blind}{0}

\newcommand{\E}{\text{E}}
\newcommand{\X}{\mathcal{X}}
\newcommand{\Y}{\mathcal{Y}}

\newcommand{\Rlogo}{\protect\includegraphics[height=1.8ex,keepaspectratio]{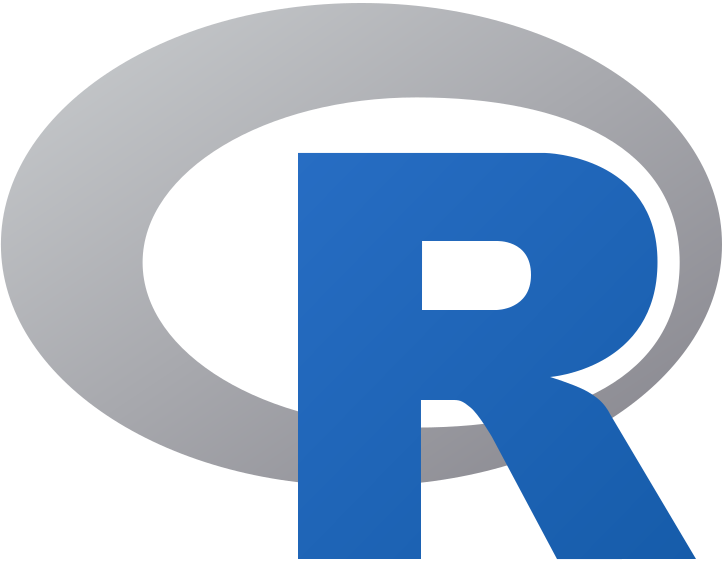}}

\graphicspath{{plots/}}
\DeclareMathOperator*{\argmin}{\arg\!\min}

\newsavebox\CBox

\newtheorem{@definition}{\sc Definition}[section]

\newtheorem{corollary}{\sc Corollary}[section]
\newtheorem{theorem}{\sc Theorem}[section]
\newtheorem{remark}{\sc Remark}[section]
\newtheorem{lemma}{Lemma}[section]
\newtheorem{assumption}{Assumption}

\renewcommand\X{\mathcal{X}}

\date{}

\begin{document}

\def\spacingset#1{\renewcommand{\baselinestretch}{#1}\small\normalsize} \spacingset{1}

\if0\blind
{
\title{\bf Spatial function-on-function quantile regression}}
\author[1]{\normalsize Eylul Fidan}
\author[1]{\normalsize Ufuk Beyaztas\orcidlink{0000-0002-5208-4950}}
\author[2]{\normalsize Soutir Bandyopadhyay\thanks{Corresponding address: Department of Applied Mathematics and Statistics, Colorado School of Mines, Golden, CO; Email: sbandyopadhyay@mines.edu}\orcidlink{0000-0003-2213-3333}}

\affil[1]{\normalsize Department of Statistics, Marmara University, Turkey}
\affil[2]{\normalsize Department of Applied Mathematics and Statistics, Colorado School of Mines, USA}
\maketitle
\fi

\if1\blind
{
\title{\bf Spatial function-on-function quantile regression}
\author{}
} \fi

\maketitle

\begin{abstract}
This paper introduces a novel penalized spatial function-on-function quantile regression framework for analyzing spatially indexed functional data, bridging a critical gap between spatial functional models and quantile regression. Our work makes three key contributions. First, we propose the first spatial function-on-function quantile regression model that jointly accounts for spatial correlation across curves through a functional spatial autoregressive structure while allowing inference on arbitrary conditional quantiles of the functional response. Unlike traditional mean-based alternatives, this approach successfully captures state-dependent volatility and distributional dynamics beyond the conditional mean. Second, we develop a two-stage instrumental-variable estimation strategy to address endogeneity induced by the functional spatial lag. By utilizing tensor-product B-spline expansions with tensor-product roughness penalties, our method ensures optimal smoothness without the destructive information loss inherent in principal component truncation. Third, for fixed spline dimensions, we establish $\sqrt n$-asymptotic normality of the spline coefficient estimators and the induced finite-rank Gaussian-process limits for the reconstructed coefficient surfaces. Extensive Monte Carlo experiments and a high-resolution analysis of Italian PM$_{2.5}$ air quality data demonstrate that spatial function-on-function quantile regression significantly outperforms non-spatial and mean-based competitors, providing a robust and informative tool for environmental risk management and complex functional data analysis. Our method has been implemented in the \texttt{SpatialFoFReg} \Rlogo \ package. 
\end{abstract}
\noindent \textit{Keywords}: Functional linear model; Quantile regression; Penalization; Smoothing; Spatial dependence; Two-stage estimation.

\newpage
\spacingset{1.65} 

\section{Introduction} \label{sec:1}

Air-quality monitoring networks now routinely generate spatially indexed functional observations. Our motivating application is the analysis of PM$_{2.5}$ air pollution using the Italian monitoring network, with particular emphasis on the Lombardy region of Northern Italy, where station coverage is dense and pollution episodes are environmentally important. The data, obtained from the \texttt{ARPALData} \Rlogo\ package \citep{arpaldata}, consist of temporally evolving pollution curves observed at 1481 monitoring stations across Italy, with PM$_{2.5}$ concentration treated as the functional response and PM$_{10}$ concentration as the functional predictor. This application naturally requires a function-on-function regression (FoFR) framework, where both the response and predictor are curves \citep{ramsay1991, RamsaySilverman2006}; it also calls for spatial modeling because nearby stations are connected through common emission sources, atmospheric transport, and regional meteorological conditions \citep{Baccini2011, Maranzano2022, Robotto2021}. Moreover, environmental risk assessment is not only concerned with average PM$_{2.5}$ levels, but also with upper-tail pollution episodes, where health impacts are most severe and where the predictor-response relationship may differ from that at the center of the distribution.

Figure~\ref{fig:motivation} summarizes these motivating features. The spatial distribution of the monitoring sites is irregular, the PM$_{2.5}$ and PM$_{10}$ curves display clear temporal variation, and the functional Moran's $I$ statistic indicates strong positive spatial dependence in PM$_{2.5}$ over the study period. These features create both statistical and computational challenges. One must simultaneously model a bivariate functional regression surface $\beta_\tau(t,s)$ and a bivariate spatial spillover surface $\rho_\tau(t,u)$, while the spatially lagged response is endogenous and therefore requires an instrumental-variable strategy. Computationally, the tensor-product spline representation leads to a high-dimensional nonsmooth quantile optimization problem, repeated over multiple quantile levels and smoothing-parameter choices. These considerations motivate the penalized spatial function-on-function quantile regression (SFoF-QR) framework developed in this paper.

\begin{figure}[!t]
\centering
\includegraphics[width=.42\textwidth]{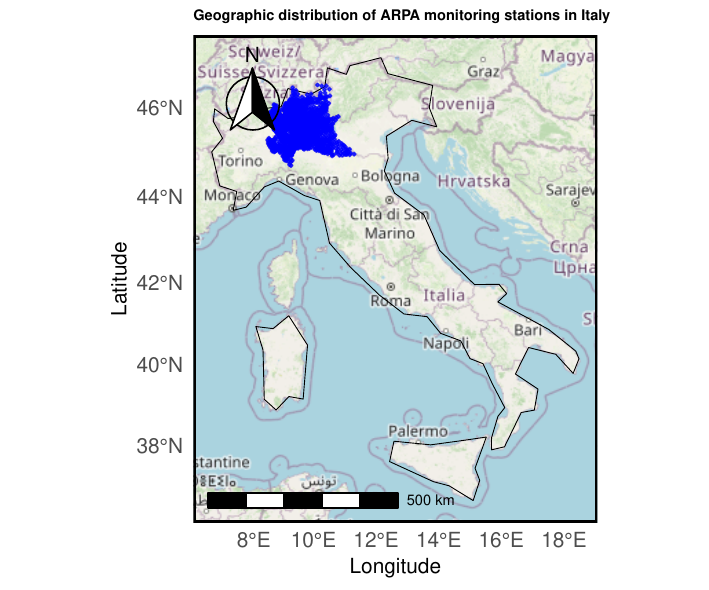}
\quad
\includegraphics[width=.42\textwidth]{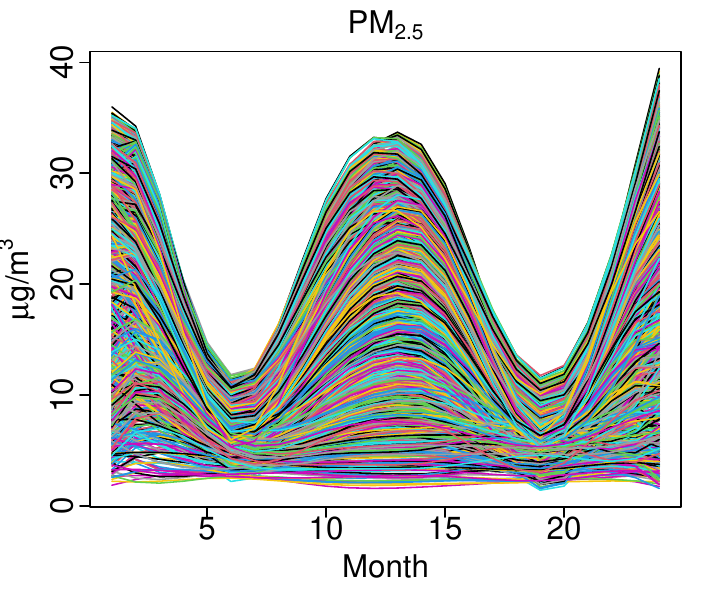}
\\[-0.25em]
\includegraphics[width=.42\textwidth]{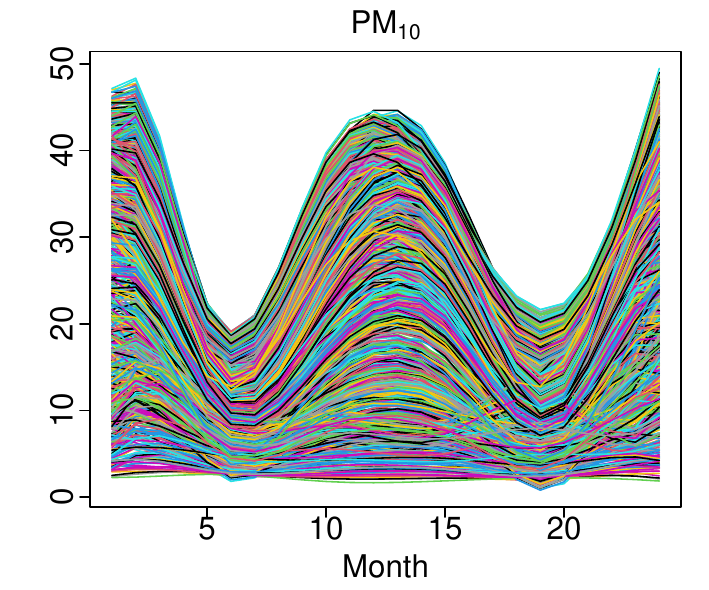}
\quad
\includegraphics[width=.42\textwidth]{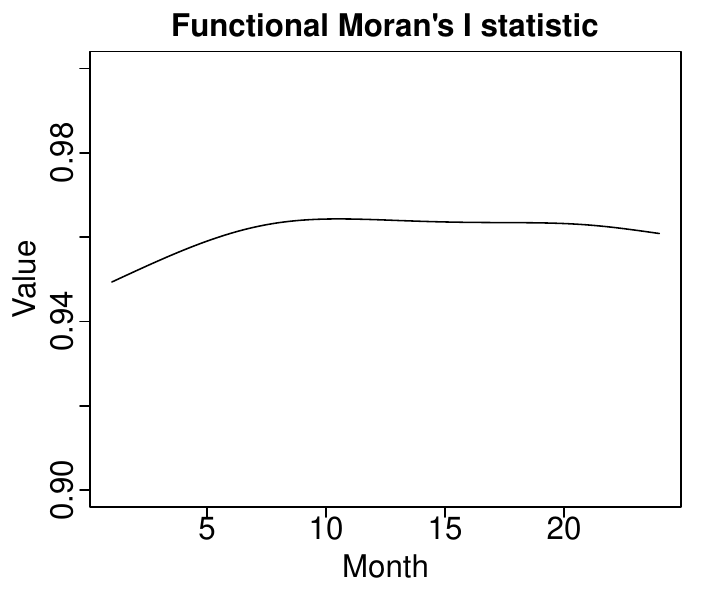}
\caption{Motivating features of the PM$_{2.5}$ air-quality data. The top left panel shows the spatial distribution of the 1481 monitoring stations in Italy. The top right panel shows the smoothed PM$_{2.5}$ response curves. The bottom left panel shows the smoothed PM$_{10}$ predictor curves. The bottom right panel displays the functional Moran's $I$ statistic for PM$_{2.5}$. These plots illustrate irregular spatial sampling, temporally varying functional observations, and strong positive spatial dependence, motivating a spatial function-on-function quantile regression analysis.}
\label{fig:motivation}
\end{figure}

Methodologically, this work builds on the functional data analysis literature, including functional principal component analysis (FPCA)-based and sparse functional regression approaches \citep{MullerYao2008}, penalized/basis-expansion FoFR methods \citep{ivanescu2015}, and general developments summarized in \citet{Ferraty2006}, \citet{Horvath2012}, and \citet{Kokoszka2018}. It is also connected to spatial functional data analysis, where kriging-inspired and spatially regularized methods have been developed for point-referenced functional observations \citep{Aguilera2017, Giraldo2018}. However, these approaches do not provide a spatial autoregressive function-on-function quantile regression model for areal or network-indexed functional responses.

Within spatial econometrics, most specifications fall into three archetypes: the spatial autoregressive (SAR) specification, spatial error, and spatial Durbin models \citep{Lesage2009}. We focus on the SAR paradigm because it injects neighborhood effects directly through a spatially lagged response, scales well computationally, and naturally supports instrumental-variable/two-stage estimation for the endogeneity generated by the spatially lagged functional response \citep[see, e.g.,][]{Kelejian1998, Huang2021}. In functional SAR settings, existing contributions remain somewhat fragmented, including approaches that effectively model each time point separately \citep[e.g.,][]{Zhu2022} or that allow functional responses with only scalar regressors \citep[e.g.,][]{Hoshino2024}. More recently, \citet{Camille2025} proposed a signature-based functional SAR approach that provides a basis-free alternative. Closest in spirit to our setting are spatial function-on-function regression (SFoFR) models, which augment FoFR with a functional spatial lag. In particular, \citet{BSGARC2025} introduced an SFoFR model whose estimation relies on spatial FPCA to reduce the problem to a multivariate SAR model on scores. While effective, FPCA-based strategies require selecting truncation levels, and this choice can materially affect both the recovered shapes and smoothness of the regression coefficient functions. Motivated by these issues, \citet{BSS2025} proposed a penalized SFoFR that represents coefficient surfaces using tensor-product B-splines and estimates them via a functional extension of generalized spatial 2SLS \citep{Kelejian1998}, yielding a penalized spatial two-stage least squares estimator.

A common limitation of the existing SFoFR literature is that it targets conditional means. Mean-based spatial functional models cannot describe distributional features that are central in many applications, such as tail behavior, spatially varying volatility, heteroskedasticity, or asymmetric contamination. Quantile regression \citep{Koenker1978} addresses precisely this limitation by modeling conditional quantiles rather than only conditional expectations. Quantile ideas have recently been extended to function-on-function settings \citep[see, e.g.,][]{Zhu2023, ZhouQ2023, Mutis2025}, providing robust and distributionally informative alternatives to least squares. However, these functional quantile FoFR methods assume independently and identically distributed (i.i.d.) curves and, to our knowledge, do not incorporate spatial dependence among functional observations.

In this paper, we bridge these two lines of research by introducing a penalized SFoF-QR framework for modeling the pointwise conditional quantile process of a spatially dependent functional response. Our contributions are threefold. (i) We propose, to our knowledge, the first FoFR-type model that jointly accounts for spatial dependence via a functional SAR mechanism and enables inference on arbitrary pointwise conditional quantiles of a functional response, thereby capturing distributional dynamics beyond the mean at each grid point of the response domain. (ii) To address the endogeneity induced by the functional spatial lag, we develop a two-stage instrumental-variable quantile estimation strategy. In contrast to FPCA truncation, our approach estimates the quantile coefficient surfaces directly using tensor-product B-splines together with tensor-product roughness penalties, allowing accurate recovery of potentially non-separable interaction surfaces while controlling smoothness. (iii) We establish an asymptotic theory for the proposed IV-based estimators, including fixed-spline $\sqrt n$-asymptotic normality and the induced finite-rank Gaussian-process limits for the reconstructed coefficient surfaces.

The remainder of the paper is organized as follows. Section~\ref{sec:2} introduces the SFoF-QR model and notation. Section~\ref{sec:3} presents the two-stage penalized IV quantile estimation procedure and selection of smoothing parameters. Section~\ref{sec:4} reports Monte Carlo evidence, and Section~\ref{sec:5} provides an empirical analysis of Italian air-quality data. Section~\ref{sec:6} concludes with directions for future research. Additional methodological and computational details, theoretical proofs, complete simulation results, and graphical analyses of the air-quality application are provided in the online supplementary material.

\section{Model and notation}\label{sec:2}

Let $\{\Y_v(t), \X_v(s)\}_{v \in \mathcal{D}}$ denote the functional response and predictor processes observed at $n$ spatial locations $v_1,\ldots,v_n$ in a domain $\mathcal{D}\subset\mathbb{R}^d$, $d\geq1$. For each $v_i$, $\Y_i(t)\in\mathcal{L}^p(\mathcal{I}_Y)$ with $t\in\mathcal{I}_Y\subset\mathbb{R}$ and $\X_i(s)\in\mathcal{L}^p(\mathcal{I}_X)$ with $s\in\mathcal{I}_X\subset\mathbb{R}$, where $2\leq p<\infty$. Without loss of generality, we take $\mathcal{I}_Y=\mathcal{I}_X=[0,1]$. The observations $\{\Y_i,\X_i\}_{i=1}^n$ are not independent; rather, they exhibit spatial dependence. We encode the spatial neighborhood structure through a known $n\times n$ weight matrix $\bm W=(w_{ij})$, where $w_{ij}\geq0$ reflects the proximity or influence of location $j$ on location $i$ and $w_{ii}=0$. The spatially lagged response curve at location $i$ is $\widetilde{\Y}_i(u)=\sum_{j=1}^n w_{ij}\Y_j(u)$, $u\in\mathcal{I}_Y$, which aggregates the neighboring response trajectories over the response domain.

For a scalar random variable $Z$ and a conditioning $\sigma$-field $\mathcal{G}$, we use $Q_\tau(Z\mid\mathcal{G}) = \inf\{q\in\mathbb{R}: \Pr(Z\leq q\mid\mathcal{G})\geq \tau\}$, $0<\tau<1$, to denote the population conditional $\tau$-quantile. Thus, in the present functional setting, the object modeled at each fixed $t\in\mathcal{I}_Y$ is the scalar conditional quantile of $\Y_i(t)$ given the full covariate curves, not the empirical quantile across the observed curves at time $t$. Let $\mathcal{F}_i=\sigma\{\widetilde{\Y}_i(u):u\in\mathcal{I}_Y; \X_i(s):s\in\mathcal{I}_X\}$ denote the information generated by the neighboring response curve and the local predictor curve. Throughout the manuscript, $\tau\in(0,1)$ denotes the quantile level, $t\in\mathcal I_Y$ denotes the response-domain argument, $u\in\mathcal I_Y$ denotes the integration argument of the spatially lagged response, and $s\in\mathcal I_X$ denotes the predictor-domain argument. For a given quantile level $\tau$, we postulate the following pointwise SFoF-QR model:
\begin{equation}\label{eq:quantile-model}
Q_{\tau}\{\Y_i(t)\mid \mathcal{F}_i\} = \int_{\mathcal{I}_Y} \widetilde{\Y}_i(u)\rho_{\tau}(t,u) du + \int_{\mathcal{I}_X} \X_i(s)\beta_{\tau}(t,s) ds, \qquad t\in\mathcal{I}_Y .
\end{equation}
Equivalently, $t\mapsto Q_{\tau}\{\Y_i(t)\mid\mathcal{F}_i\}$ is a conditional quantile function of the response trajectory. No stronger notion of a functional quantile in $\mathcal{L}^p(\mathcal I_Y)$ is invoked; the model consists of scalar conditional quantile restrictions indexed pointwise by $t\in\mathcal I_Y$. The variable $t$ is the target argument at which the quantile of the response is evaluated, whereas $u$ is an integration variable indexing the neighboring response curve. Hence, the kernel $\rho_\tau(t,u)$ maps the entire spatially lagged response trajectory $\widetilde{\Y}_i(\cdot)$ into its contribution to the conditional quantile of $\Y_i(t)$. This allows, for example, neighboring pollution levels at time $u$ to affect the local conditional quantile at time $t$, rather than imposing a purely contemporaneous spatial effect. Similarly, $\beta_\tau(t,s)$ maps the local predictor trajectory $\X_i(\cdot)$ into the conditional quantile at $t$. Model~\eqref{eq:quantile-model} can therefore be viewed as the quantile analogue of the mean-based SFoFR model, with both the predictor effect surface $\beta_\tau$ and the spatial spillover surface $\rho_\tau$ allowed to vary with $\tau$. Our focus is on estimating these quantile-specific coefficient surfaces in the presence of spatial dependence. 

Equation~\eqref{eq:quantile-model} is intended as a structural simultaneous IV quantile specification, not as the quantile of a reduced-form mean-SAR transformation and not merely as a two-stage prediction equation. To make this interpretation explicit, for each quantile level $\tau$, define the structural quantile error process $e_{i,\tau}(t) = \Y_i(t) - \int_{\mathcal I_Y}\widetilde{\Y}_i(u)\rho_\tau(t,u) du - \int_{\mathcal I_X}\X_i(s)\beta_\tau(t,s) ds$. The structural restriction is that, after conditioning on valid instruments generated from the exogenous functional predictor and its spatial lags, the $\tau$th conditional quantile of this error is zero. Thus, the spatially lagged response is allowed to be endogenous, but identification is obtained through the IV quantile restrictions stated in Section~\ref{sec:3}.

This distinction is important because quantiles are nonlinear: in general, the conditional quantile of a spatially transformed response is not obtained by applying the same linear inverse operator to the conditional quantile of the innovation process. Therefore, we do not claim that the quantile operator commutes with the functional SAR inverse. Instead, the boundedness condition on the spatial operator is used only to rule out explosive spatial feedback and to ensure that the underlying structural simultaneous system is well-posed. Specifically, let
$(\mathcal R_{\rho_\tau,\bm W}\bm f)_i(t) = \sum_{j=1}^n w_{ij} \int_{\mathcal I_Y} f_j(u)\rho_\tau(t,u) du$. If $\|\mathcal R_{\rho_\tau,\bm W}\|_{\mathrm{op}}<1$, then for any admissible exogenous functional predictor process and structural error process, the linear structural system $\bm\Y = \mathcal R_{\rho_\tau,\bm W}\bm\Y + \mathcal B_{\beta_\tau}\bm X + \bm e_\tau$ has the unique representation $\bm\Y = (\mathbb I-\mathcal R_{\rho_\tau,\bm W})^{-1} \{\mathcal B_{\beta_\tau}\bm X+\bm e_\tau\}$. This uniqueness concerns the structural simultaneous equation itself, not a reduced-form conditional quantile transformation. The interpretation of $\rho_\tau(t,u)$ is therefore structural: it measures the direct quantile-specific spatial spillover in the simultaneous quantile index, while $\beta_\tau(t,s)$ measures the direct quantile-specific effect of the local functional predictor.

\begin{remark}\label{rem:identifiability}
The coefficient surfaces are identifiable only on the part of the tensor-product function space that is excited by the covariate processes and the spatial lag. At the population level, this means that if a pair of perturbation surfaces $(\Delta_\rho,\Delta_\beta)$ satisfies $\int_{\mathcal I_Y}\widetilde{\Y}_i(u)\Delta_\rho(t,u)du+ \int_{\mathcal I_X}\X_i(s)\Delta_\beta(t,s)ds=0$ a.s. for all $t\in\mathcal I_Y$, then $\Delta_\rho=0$ and $\Delta_\beta=0$ in the working function space. Equivalently, after projection onto the spline sieve, the corresponding finite-dimensional structural design must have full column rank. Because the spatially lagged response is endogenous, this rank condition is imposed after the Stage~1 instrumental-variable projection: the instruments $\X_i,\bm W\X_i,\bm W^2\X_i,\ldots$ must be relevant for the spatial lag and must generate a projected design matrix of full column rank for the spline coefficients of both $\rho_\tau$ and $\beta_\tau$. The roughness penalties stabilize estimation and control smoothness, but they are not used to create identification; identification comes from the injectivity/full-rank condition of the IV-projected spline design. The detailed regularity conditions used for the asymptotic theory are collected in the supplementary material.
\end{remark}

\section{Parameter estimation}\label{sec:3}

Before giving the matrix formulation, we briefly summarize the estimation strategy. For a fixed quantile level $\tau$, the proposed SFoF-QR model contains two unknown smooth coefficient surfaces: $\rho_\tau(t,u)$ and $\beta_\tau(t,s)$. The estimation proceeds in four steps. First, we represent both coefficient surfaces by tensor-product B-splines, reducing the infinite-dimensional problem to a finite-dimensional one. Second, because the spatially lagged response is endogenous, we regress it on exogenous functional instruments generated from $\X_i$, $\bm W\X_i$, and $\bm W^2\X_i$ to obtain an instrumented lag-score vector. Third, we plug this instrumented lag into a penalized quantile regression criterion and estimate the spline coefficients of $\rho_\tau$ and $\beta_\tau$. Finally, we select the smoothing parameters by a Bayesian information criterion (BIC) and reconstruct the estimated coefficient surfaces from the fitted spline coefficients.

\subsection{Tensor product B-spline basis expansion and penalized representation}

Direct nonparametric estimation of a bivariate coefficient surface is challenging due to the infinite-dimensional nature of functional variables. We address this by expanding $\rho_{\tau}$ and $\beta_{\tau}$ in a tensor-product B-spline basis, which provides a flexible yet parsimonious representation. Fix two univariate B--spline systems on $[0,1]$: $\phi_1, \phi_2, \ldots$ for the $\Y$-direction and $\psi_1, \psi_2, \ldots$ for the $\X$-direction, each of predetermined order with quasi-uniform knot sequences. For integers $K_y$ and $K_x$, let $\mathcal S_y(K_y)=\operatorname{span} \{\phi_1, \ldots, \phi_{K_y} \}$, $\mathcal S_x(K_x)=\operatorname{span} \{\psi_1, \ldots, \psi_{K_x}\}$, and elevate these marginal spaces to bivariate sieves via tensor products: $\mathcal S_\rho(K_y)=\mathcal S_y(K_y) \otimes \mathcal S_y(K_y)$ and $\mathcal S_\beta(K_y,K_x)=\mathcal S_y(K_y) \otimes \mathcal S_x(K_x)$, respectively. Within these finite-rank manifolds, we represent the latent surfaces by their spline coordinates,
\begin{equation}\label{eq:beta-rho-expansion}
\rho_{\tau}(t,u) \approx \sum_{\ell=1}^{K_y} \sum_{m=1}^{K_y} \rho^{(\tau)}_{\ell m} \phi_{\ell}(t) \phi_{m}(u), \qquad \beta_{\tau}(t,s) \approx \sum_{\ell=1}^{K_y} \sum_{k=1}^{K_x} b^{(\tau)}_{\ell k} \phi_{\ell}(t) \psi_{k}(s),
\end{equation}
so the inferential task becomes the estimation of the coefficient arrays $\rho^{(\tau)}_{\ell m}$ and $b^{(\tau)}_{\ell k}$, from which $\widehat\rho_\tau$ and $\widehat\beta_\tau$ are reconstructed by the same expansions. Crucially, the true kernels $\rho_\tau$ and $\beta_\tau$ are not presumed to live exactly inside these spans. Our asymptotic theory treats $K_y$ and $K_x$ as fixed working dimensions and centers the estimators at the population working-model (pseudo-true) sieve targets $\rho_{\tau,K}$ and $\beta_{\tau,K}$ defined in the supplementary material; the deterministic approximation errors $\|\rho_\tau-\rho_{\tau,K}\|$ and $\|\beta_\tau-\beta_{\tau,K}\|$ are fixed and are not required to vanish (see Assumption~\ref{ass:fixed-spline} and Remark~\ref{rem:fixed-spline-scope}). If the spline dimensions are instead increased along an auxiliary sieve sequence, classical Jackson-type bounds imply that these approximation errors shrink at smoothness-governed rates; see Remark~S4.1 in the online supplementary material. We can think of $\bm{\rho}^{(\tau)} = [\rho^{(\tau)}_{\ell m}]$ as a $K_y \times K_y$ matrix of basis coefficients for the spatial autocorrelation surface, and $\bm{b}^{(\tau)} = [b^{(\tau)}_{\ell k}]$ as a $K_y \times K_x$ coefficient matrix for the predictor effect surface.

By substituting the expansions \eqref{eq:beta-rho-expansion} into the model \eqref{eq:quantile-model}, we obtain a finite-dimensional representation of the SFoF-QR model. To write it compactly, it is helpful to discretize the problem. Suppose that each response curve $\Y_i$ is observed at a fine grid of $R$ points $t_{i1},\ldots,t_{iR} \in [0,1]$ (for dense functional data, we can take these points to be common across $i$, i.e. $t_{i \ell} = t_{\ell}$). Likewise, let the predictor $\X_i$ be observed on a grid of $G$ points $s_{1},\ldots,s_{G}$. In the estimation procedure, we will also need values of the spatial lag $\widetilde{\Y}_i$ on a grid; for convenience, we take the same $R$-point grid for $u$ as for $t$ (since $u$ and $t$ range over the same domain). With these grids, we approximate the integrals in \eqref{eq:quantile-model} using the left-endpoint Riemann rule. Let $0=u_1<\cdots<u_R=1$ and define $\Delta_r=u_{r+1}-u_r$, $r=1,\ldots,R-1$. Then, $\int_0^1 \widetilde{\Y}_i(u)\rho_\tau(t,u) du \approx \sum_{r=1}^{R-1} \Delta_r\widetilde{\Y}_i(u_r)\rho_\tau(t,u_r)$. Substituting the tensor-product expansion of $\rho_\tau$ gives
\begin{equation*}
\sum_{r=1}^{R-1} \Delta_r\widetilde{\Y}_i(u_r) \left\{\sum_{\ell=1}^{K_y}\sum_{m=1}^{K_y} \rho_{\ell m}^{(\tau)} \phi_\ell(t)\phi_m(u_r) \right\} = \sum_{\ell=1}^{K_y}\phi_\ell(t) \left\{\sum_{m=1}^{K_y} \rho_{\ell m}^{(\tau)} \underset{:= \widetilde{\phi}_{m,i}} {\underbrace{ \sum_{r=1}^{R-1} \Delta_r\phi_m(u_r)\widetilde{\Y}_i(u_r)}} \right\}
\end{equation*}
Define $\widetilde{\phi}_{m,i} = \sum_{r=1}^{R-1} \Delta_r \phi_{m}(u_r) \widetilde{\Y}_i(u_r)$, which can be interpreted as the coefficient of the $m$-th basis function in an $\mathcal{L}^2$ projection of the lagged curve $\widetilde{\Y}_i$ onto the B-spline basis $\{\phi_m\}$. Stacking these for $m \in \{1,\dots,K_y \}$, we obtain a vector $\widetilde{\bm{\phi}}_{i} = (\widetilde{\phi}_{1,i}, \ldots,\widetilde{\phi}_{K_y,i})^\top$ of length $K_y$ for each $i$. Analogously, for $0=s_1<\cdots<s_G=1$ and $\Delta_g=s_{g+1}-s_g$, $g=1,\ldots,G-1$, the predictor integral is approximated using the same left-endpoint rule:
\begin{equation*}
\int_{0}^{1} \X_i(s) \beta_{\tau}(t,s) ds \approx \sum_{g=1}^{G-1} \Delta_g \X_i(s_g) \beta_{\tau}(t, s_g) = \sum_{\ell=1}^{K_y} \phi_{\ell}(t) \left\{\sum_{k=1}^{K_x} b_{\ell k}^{(\tau)} \underset{:= \widetilde{\psi}_{k,i}}{\underbrace{\sum_{g=1}^{G-1} \Delta_g \psi_k(s_g) \X_i(s_g)}} \right\},
\end{equation*}
where the $\Delta_g$s are the $s$-grid spacings, and we define $\widetilde{\psi}_{k,i} = \sum_{g=1}^{G-1} \Delta_g \psi_{k}(s_g) \X_i(s_g)$ as the $k$-th basis coefficient for $\X_i$. Let $\widetilde{\bm{\psi}}_{i} = (\widetilde{\psi}_{1,i}, \ldots, \widetilde{\psi}_{K_x,i})^\top$ be the $K_x$-vector of scores for $\X_i$. Substituting both approximations into \eqref{eq:quantile-model} and collecting terms by $\phi_{\ell}$, we obtain the finite-dimensional system:
\begin{align}
Q_{\tau}\{\Y_i(t_{r}) \mid \mathcal{F}_i\} &\approx \sum_{\ell=1}^{K_y} \phi_{\ell}(t_r) \Bigg(\sum_{m=1}^{K_y} \rho^{(\tau)}_{\ell m} \widetilde{\phi}_{m,i} + \sum_{k=1}^{K_x} b^{(\tau)}_{\ell k} \widetilde{\psi}_{k,i}\Bigg), \quad i=1, \ldots, n;~ r = 1, \ldots, R \nonumber \\
&= \left\lbrace \widetilde{\bm{\phi}}_{i}^\top \otimes \bm{\phi}(t_r)^{\top} \right\rbrace \mathrm{vec} \left\lbrace \bm{\rho}^{(\tau)} \right\rbrace + \left\lbrace \widetilde{\bm{\psi}}_{i}^\top \otimes \bm{\phi}(t_r)^{\top} \right\rbrace \mathrm{vec} \left\lbrace \bm{b}^{(\tau)} \right\rbrace \label{eq:approx-model}
\end{align}

The approximation given in~\eqref{eq:approx-model} can be interpreted as a linear quantile regression model for the discretized responses $\Y_i(t_r)$, where the regression variables are the basis functions weighted by the ``covariate scores'' $\widetilde{\phi}_{m,i}$ and $\widetilde{\psi}_{k,i}$. If we concatenate the coefficient vectors, $\bm{\theta}^{(\tau)} = \{\mathrm{vec}(\bm{b}^{(\tau)})^\top, \mathrm{vec}(\bm{\rho}^{(\tau)})^\top\}^\top$, and define the design matrix $\bm{\Pi}$ appropriately, the entire system for all $i,r$ can be written as $\mathrm{vec}(\Y) \approx \bm{\Pi} \bm{\theta}^{(\tau)}$. In particular, $\bm{\Pi}$ has dimension $(n R) \times (K_y K_x + K_y^2)$, and each row of $\bm{\Pi}$ corresponds to a specific observation $\Y_i(t_r)$ and contains the segment $\widetilde{\bm{\psi}}_{i}^\top \otimes \bm{\phi}(t_r)^\top$ in the columns for $\mathrm{vec}(\bm{b}^{(\tau)})$ and $\widetilde{\bm{\phi}}_{i}^\top \otimes \bm{\phi}(t_r)^\top$ in the columns for $\mathrm{vec}(\bm{\rho}^{(\tau)})$. In implementation, the design matrix is additionally augmented by an unpenalized functional-intercept block with $K_0$ basis coefficients $\bm b_0$, accommodating the nonzero level of the response; the full coefficient vector used in Section~\ref{sec:3}.2 is therefore $\bm\theta = \{\bm b_0^\top, \mathrm{vec}(\bm b^{(\tau)})^\top, \mathrm{vec}(\bm\rho^{(\tau)})^\top\}^\top$ of dimension $K_0 + K_yK_x + K_y^2$. For notational simplicity, the intercept block is suppressed in the remainder of this subsection. In the finite-dimensional estimation problem, identifiability therefore requires the IV-projected Stage~2 design to have full column rank, or equivalently that no nonzero linear combination of the spline coefficients of $\rho_\tau$ and $\beta_\tau$ produces the same fitted conditional quantile surface after projection on the instrument space. This condition is the sample analogue of Remark~\ref{rem:identifiability} and is enforced in practice by choosing $K_y$ and $K_x$ so that the number of spline coefficients is compatible with the effective sample size and by using the spatially lagged instruments $\X_i,\bm W\X_i,\bm W^2\X_i$ to identify the endogenous spatial-lag component. In more compact notation, we can write the quantile model as $Q_\tau\{\Y_i(t)\mid \mathcal{F}_i\} = \left\lbrace \widetilde{\bm{\phi}}_{i}^{\top} \otimes \bm{\phi}(t)^{\top} \right\rbrace \widetilde{\bm{\rho}}^{(\tau)} + \left\lbrace \widetilde{\bm{\psi}}_{i}^{\top} \otimes \bm{\phi}(t)^{\top} \right\rbrace \widetilde{\bm{b}}^{(\tau)}$, where $\widetilde{\bm{\rho}}^{(\tau)} = \mathrm{vec} (\bm{\rho}^{(\tau)})$ and $\widetilde{\bm{b}}^{(\tau)} = \mathrm{vec}(\bm{b}^{(\tau)})$. This linear algebra formulation will be convenient for describing the estimation procedure. Moreover, it highlights that our model is linear in the coefficients $\bm{\theta}^{(\tau)}$, even though it is nonlinear in the original data (due to the use of $\widetilde{\Y}_i$ which itself depends on $\Y_j$). The endogeneity of $\widetilde{\Y}_i$ will be handled by an instrumental variable approach in the estimation stage.

Finally, to avoid overfitting and to ensure that the estimated surfaces $\rho_\tau$ and $\beta_\tau$ are smooth in both arguments, we use integrated squared-second-derivative roughness penalties. Define the marginal roughness matrices $\bm D_t,\bm D_u\in\mathbb R^{K_y\times K_y}$ and $\bm D_s\in\mathbb R^{K_x\times K_x}$ by $(\bm D_t)_{\ell\widetilde{\ell}} = \int_0^1 \phi_{\ell}''(t) \phi_{\widetilde{\ell}}''(t) dt$, $(\bm D_u)_{m\widetilde m} = \int_0^1 \phi_m''(u)\phi_{\widetilde m}''(u) du$, and $(\bm D_s)_{k\widetilde k} = \int_0^1 \psi_k''(s)\psi_{\widetilde k}''(s) ds$. Thus, $\bm D_t$, $\bm D_u$, and $\bm D_s$ are Gram matrices of the second derivatives of the corresponding B-spline bases, rather than finite-difference matrices. Define the coefficient-space roughness matrices $\bm R_\rho = \bm\Phi_u\otimes\bm D_t + \bm D_u\otimes\bm\Phi_t$ and $\bm R_\beta = \bm\Psi\otimes\bm D_t + \bm D_s\otimes\bm\Phi_t$, where $\bm{\Phi}_u = \int_0^1 \bm{\phi}(u) \bm{\phi}^\top(u) du$ with $\bm{\phi}(u) = \{\phi_1(u), \ldots, \phi_{K_y}(u) \}^\top$, $\bm{\Phi}_t = \int_0^1 \bm{\phi}(t) \bm{\phi}^\top(t) dt$, and $\bm{\Psi} = \int_0^1 \bm{\psi}(s) \bm{\psi}^\top(s) ds$ with $\bm{\psi}(s) = \{\psi_1(s), \ldots, \psi_{K_x}(s) \}^\top$. The corresponding scalar roughness functionals are
\begin{align*}
\mathcal J_\rho(\rho_\tau) &:= \int_0^1\int_0^1 \{\partial_{tt}\rho_\tau(t,u)\}^2 du dt + \int_0^1\int_0^1 \{\partial_{uu}\rho_\tau(t,u)\}^2 du dt = \widetilde{\bm\rho}^{(\tau)\top} \bm R_\rho \widetilde{\bm\rho}^{(\tau)}, \\ \mathcal J_\beta(\beta_\tau) &:= \int_0^1\int_0^1 \{\partial_{tt}\beta_\tau(t,s)\}^2 ds dt + \int_0^1\int_0^1 \{\partial_{ss}\beta_\tau(t,s)\}^2 ds dt = \widetilde{\bm b}^{(\tau)\top} \bm R_\beta \widetilde{\bm b}^{(\tau)}.
\end{align*}
Thus, $\mathcal J_\rho$ and $\mathcal J_\beta$ are scalar penalty functionals, whereas $\bm R_\rho$ and $\bm R_\beta$ are the associated coefficient-space roughness matrices. In the implemented formulation, $\lambda_\rho$ jointly scales the $t$- and $u$-direction roughness components of $\rho_\tau$, while $\lambda_\beta$ jointly scales the $t$- and $s$-direction components of $\beta_\tau$; thus, smoothing may differ between the two surfaces, but not between the marginal directions within a given surface. Both penalty operators are symmetric positive semidefinite; if the bases include low–order polynomials, the nullspaces correspond to functions with zero curvature in the penalized direction (e.g., linear drift).

Adopting a full tensor–product B–spline expansion, $\beta_{\tau}(t,s) \approx \sum_{\ell=1}^{K_y} \sum_{k=1}^{K_x} b^{(\tau)}_{\ell k} \phi_{\ell}(t) \psi_{k}(s)$, places no separability constraint on the surface. A separable coefficient field of the form $\beta_{\tau}(t,s) = \beta_{\tau,1}(t) \beta_{\tau, 2}(s)$ would imply that the coefficient array $\bm{b}^{(\tau)} = [b_{\ell k}^{(\tau)}] \in \mathbb{R}^{K_y \times K_x}$ factors as an outer product; hence ${\rm rank}(\bm{b}^{(\tau)}) = 1$. Our estimator instead allows $\bm{b}^{(\tau)}$ to be of unrestricted rank, so $\beta_\tau$ may inhabit the entire smooth tensor space spanned by $\phi_\ell \otimes \psi_k $, admitting arbitrary interactions between the $t$- and $s$-margins. 

The penalization contribution may therefore be written either in functional form as $\frac{\lambda_\rho}{2}\mathcal J_\rho(\rho_\tau) + \frac{\lambda_\beta}{2}\mathcal J_\beta(\beta_\tau)$ or in coefficient-matrix form. Let $\bm\vartheta^{(\tau)} = \bigl(\widetilde{\bm b}^{(\tau)\top}, \widetilde{\bm\rho}^{(\tau)\top}\bigr)^{\top}$ and $\bm{\mathcal P}_{\mathrm{surf},\lambda_\beta,\lambda_\rho} = \begin{bmatrix} \lambda_\beta\bm R_\beta & \bm0\\ \bm0 & \lambda_\rho\bm R_\rho \end{bmatrix}$. Then, $\frac{\lambda_\rho}{2}\mathcal J_\rho(\rho_\tau) + \frac{\lambda_\beta}{2}\mathcal J_\beta(\beta_\tau) = \frac12 \bm\vartheta^{(\tau)\top} \bm{\mathcal P}_{\mathrm{surf}, \lambda_\beta, \lambda_\rho} \bm\vartheta^{(\tau)}$. The scalar functionals and matrix penalties are therefore distinct objects.

When data-driven smoothing-parameter selection is used, the smoothing parameters $(\lambda_\beta,\lambda_\rho)$ are selected by a two-dimensional BIC grid search. For every candidate pair, the model is fitted by minimizing the smoothed penalized Stage~2 criterion introduced in Section~\ref{sec:3}.2, and the resulting fit is evaluated using $\operatorname{BIC}(\lambda_\beta,\lambda_\rho) = \log\left[ \frac{1}{N} \sum_{a=1}^{N} \ell_\tau(\widehat u_a) \right] + \frac{\log N}{N} d$, where $N=nR$, $\ell_\tau(u)=u\{\tau-\mathbf 1(u<0)\}$ is the exact check loss, $\widehat u_a$ denotes the fitted residual, and $d=K_0+K_yK_x+K_y^2$ is the number of fitted coefficients, including the functional intercept, regression-surface, and spatial-surface coefficients. Thus, the candidate models are fitted using the differentiable approximation to the check loss, whereas the BIC is evaluated using the exact check loss. The pair minimizing the BIC is selected, and the final model is refitted using that pair. Further implementation details are provided in Section~S3 of the online supplementary material.

\subsection{Two-stage estimation via instrumental quantile regression}

Estimating \eqref{eq:quantile-model} is nontrivial because the spatially lagged response $\widetilde{\Y}_i(u)=\sum_j w_{ij}\Y_j(u)$ is endogenous. As in scalar SAR models, $\Y_i$ and its neighboring responses are jointly determined through the spatial feedback mechanism, so the lag-score vector constructed from $\widetilde{\Y}_i$ is generally correlated with the structural quantile error. Treating $\widetilde{\Y}_i$ as an ordinary covariate in a one-stage quantile regression would therefore violate the quantile-score orthogonality condition and could bias estimation of the spatial spillover surface $\rho_\tau$. To address this, we use a two-stage instrumental quantile regression strategy, extending \citet{Kim_Muller2004} to the present spatial function-on-function setting. The first stage projects the endogenous spatial lag onto instruments generated from the exogenous functional predictor and its spatial lags, such as $\X_i$, $(\bm W\X)_i$, and $(\bm W^2\X)_i$; the second stage replaces the endogenous lag-score vector by its instrumented version in the penalized quantile objective. This preserves the spatial feedback information needed to estimate $\rho_\tau$ while removing the component correlated with the structural quantile error.

We emphasize that this two-stage procedure is not justified by a generic plug-in argument. It is interpreted as a finite-dimensional sieve extension of the two-stage IV quantile framework of \citet{Kim_Muller2004}. Let $\bm Z_i=\{\bm\gamma_i^{(0)\top},\ldots,\bm\gamma_i^{(H-1)\top}\}^\top$ collect the instrument scores generated from $\X_i,\bm W\X_i,\ldots,\bm W^{H-1}\X_i$, and let $\bm\theta_{\tau,K}^{0}$ denote the pseudo-true working-model spline coefficient vector at quantile level $\tau$, defined as the unique solution of the population IV quantile moment condition below (see Assumption~2 in the supplementary material). Identification is based on the structural IV quantile restriction $\Pr\{e_{ir,\tau}(\bm\theta_{\tau,K}^{0})\le0\mid\bm Z_i\}=\tau$, $r=1,\ldots,R$, or equivalently, $\E\left[\bm Z_i \left\{\tau-\mathbf{1} (e_{ir,\tau}(\bm\theta_{\tau,K}^{0})<0) \right\} \right] = \bm0$. Thus, the instruments must be exogenous with respect to the structural quantile error, relevant for the endogenous lag-score vector, and sufficiently rich so that the IV-projected Stage~2 design has full column rank. The first-stage quantile level is matched to the second-stage level because the structural moment condition is quantile-specific; for each $\tau$, the first stage estimates the $\tau$-specific reduced-form component of the endogenous lag scores that enters the corresponding structural quantile equation. The generated-regressor error from this first stage is accounted for in the asymptotic expansion, as detailed in the supplementary proof.

In the first stage, we obtain an instrumented or predicted version of the spatial lag function for each location by regressing $\widetilde{\Y}_i$ on exogenous spatially lagged covariates. Specifically, we construct a set of instrumental functions that are correlated with $\widetilde{\Y}_i$ but uncorrelated with the noise in $\Y_i$. A natural choice of instruments in spatial models are the spatial lags of the original predictor $\X_i$. For example, we can use $\bm{W} \X_i$ and $\bm{W}^2 \X_i$ as instrumental functional predictors. Denote these instrument functions as $\X_i^{(1)} = (\bm{W} \X)_i = \sum_{j}w_{ij} \X_j$ and $\X_i^{(2)} = (\bm{W}^2 \X)_i = \sum_{j,k }w_{ij}w_{jk} \X_k$, and we may also include the original $\X_i$ itself as an instrument. In general, one can consider $H$ instrument functions $\X_i^{(h)} = (\bm{W}^h \X)_i$ for $h=0, 1, \ldots, H-1$ (with $\X_i^{(0)} = \X_i$). Each $\X_i^{(h)}$ is a function on $[0,1]$ and can be expanded in the B-spline basis $\psi_k$, $k \in \{1, \ldots, K_x \}$. We represent $\X_i^{(h)}(s) \approx \sum_{k=1}^{K_x} \gamma_{ik}^{(h)} \psi_k(s)$ and compute the coefficient vector $\bm{\gamma}_{i}^{(h)} = (\gamma^{(h)}_{i1}, \ldots, \gamma^{(h)}_{i K_x})^\top$ by numerical integration (analogous to $\widetilde{\bm{\psi}}_{i}$). Stage~1 then performs a FoF-QR of $\widetilde{\Y}_i$ on the set of instrument functions $\{\X_i^{(h)}\}_{h=0}^{H-1}$. Concretely, for each fixed $u$, consider the $\tau$-quantile of $\widetilde{\Y}_i$ given the instrument processes. We assume a linear model $Q_{\tau} \left\{ \widetilde{\Y}_i(u) \Big| \X_i^{(0)} \X_i^{(1)}, \ldots, \X_i^{(H-1)} \right\} = \sum_{h=0}^{H-1} \int_{0}^{1} \X_i^{(h)}(s) \alpha_{\tau}^{(h)}(u, s) ds$, $u\in\mathcal I_Y$, with appropriate coefficient surfaces $\alpha_{\tau}^{(h)}$. All these surfaces can be expanded in $\phi_m$ and $\psi_k$ bases similar to \eqref{eq:beta-rho-expansion}. Fitting this model via quantile regression (which can be done by existing FoF-QR methods, since it is a standard FoFR without spatial response on the left side) yields estimated surfaces $\widehat{\alpha}_{\tau}^{(h)}$ and thereby an estimated spatial lag $\widehat{\widetilde{\Y}}_i(u) = \sum_{h=0}^{H-1} \int_{0}^{1} \X_i^{(h)}(s) \widehat{\alpha}_{\tau}^{(h)}(u, s) ds$. 

In practice, we carry out this Stage~1 quantile regression in the B-spline space: the instrument scores $\bm{\gamma}_i^{(h)}$ serve as covariates, and the response $\widetilde{\Y}_i$ is expanded in the $\phi$ basis. This is analogous to performing a multivariate quantile regression of the coefficient vectors of $\widetilde{\Y}_i$ on the instrument score vectors. The result is a set of predicted B-spline coefficients for $\widetilde{\Y}_i$, which we denote by $\widehat{\widetilde{\bm{\phi}}}_{i} = (\widehat{\widetilde{\phi}}_{i1}, \ldots, \widehat{\widetilde{\phi}}_{i K_y})^\top$. In other words, we obtain an instrumented version of the lagged-response score vector for each location. This serves as a plug-in replacement for $\widetilde{\bm{\phi}}_{i}$ in the next stage. We emphasize that Stage~1 is crucial for addressing endogeneity: by regressing $\widetilde{\Y}_i$ on $\X$-based instruments, we aim to capture the variation in $\widetilde{\Y}_i$ that is exogenous and filter out the endogenous part correlated with $\Y_i$’s errors. Note that practical guidance on selecting the lag depth $H$, as well as diagnostics and optional dimension reduction for the instrument set, are provided in the online supplementary material \citep[see also][]{Hoshino2024}.

In the second stage, we treat the predicted spatial lag functions $\widehat{\widetilde{\Y}}_i$ (from Stage~1) as known covariates and proceed to estimate the coefficient surfaces $\beta_{\tau}$ and $\rho_{\tau}$ by solving a penalized quantile regression problem. Intuitively, we now fit the model $Q_{\tau}\{\Y_i\} \approx \int \widehat{\widetilde{\Y}}_i(u) \rho_{\tau}(t,u) du + \int \X_i(s) \beta_{\tau}(t,s) ds$ in the least-absolute-deviation sense (for general $\tau$) with penalties to enforce smoothness. Formally, we define a loss function for a given $\tau$:
\begin{equation}\label{eq:loss}
L_{\tau}(\rho_{\tau},\beta_{\tau}) = \frac{1}{n} \sum_{i=1}^n \int_0^1 \eta_{\tau} \left\{\Y_i(t) - \int_0^1 \widehat{\widetilde{\Y}}_i(u) \rho_{\tau}(t,u) du - \int_0^1 \X_i(s) \beta_{\tau}(t,s) ds\right\} dt,
\end{equation}
where $\eta_{\tau}(c) = c \varphi_\tau(c)$ with standard check (pinball) loss function for quantile $\tau$, $\varphi_\tau(c) = \{\tau - \mathrm{1}(c < 0) \}$. This objective $L_{\tau}$ sums the residual quantile losses over all subjects $i$ and all time points $t$. In practice, the integral in \eqref{eq:loss} is approximated by a sum over the time grid. We then add the roughness penalties for $\rho_{\tau}$ and $\beta_{\tau}$ discussed above, with tuning parameters $\lambda_{\rho}$ and $\lambda_{\beta}$. The penalized quantile criterion is:
\begin{equation}\label{eq:pen-criterion}
Q_\tau(\rho_\tau,\beta_\tau) = L_\tau(\rho_\tau,\beta_\tau) + \frac{\lambda_\rho}{2}\mathcal J_\rho(\rho_\tau) + \frac{\lambda_\beta}{2}\mathcal J_\beta(\beta_\tau).
\end{equation}
We seek $\widehat{\bm{\theta}}^{(\tau)} = \arg\min_{\bm{\theta}^{(\tau)}} Q_\tau(\bm{\theta}^{(\tau)})$. This minimization is performed under the linear model approximation developed in the previous subsection, with $\widetilde{\bm{\phi}}_{i}$ replaced by the Stage~1 fitted $\widehat{\widetilde{\bm{\phi}}}_{i}$. For the matrix formulation, let $\bm\theta^{(\tau)} = \left\{\bm b_0^\top, \operatorname{vec}\bigl(\bm b^{(\tau)}\bigr)^\top, \operatorname{vec}\bigl(\bm\rho^{(\tau)}\bigr)^\top \right\}^{\top}$ where $\bm b_0$ denotes the functional-intercept basis coefficients. Define the full block penalty matrix
\begin{equation*}
\bm{\mathcal P}_{\lambda_\beta,\lambda_\rho} = 
\begin{bmatrix}
\bm0 & \bm0 & \bm0\\
\bm0 & \lambda_\beta\bm R_\beta & \bm0\\
\bm0 & \bm0 & \lambda_\rho\bm R_\rho
\end{bmatrix}.
\end{equation*}
where the leading zero block has dimension equal to that of $\bm b_0$. The discretized penalized criterion is then $Q_\tau(\bm\theta^{(\tau)}) = \frac{1}{n} \sum_{i=1}^{n}\sum_{r=1}^{R} \eta_\tau \left\{\Y_i(t_r) - [\widehat{\bm\Pi}\bm\theta^{(\tau)}]_{ir} \right\} + \frac12 \bm\theta^{(\tau)\top} \bm{\mathcal P}_{\lambda_\beta,\lambda_\rho} \bm\theta^{(\tau)}$. The term inside $\eta_{\tau}( \cdot )$ is the quantile regression residual for observation $(i,t_r)$. For computation, we use a differentiable approximation to the check loss. Let $u_{ir}(\bm\theta)=\Y_i(t_r)-[\widehat{\bm\Pi}\bm\theta]_{ir}$ denote the residual corresponding to the IV-projected Stage~2 design. Instead of the nonsmooth check loss $\ell_\tau(u)=u\{\tau-\mathbf{1}(u<0)\}$, we minimize its smooth approximation $\ell_{\tau,\alpha}(u) = \tau u+\alpha\log\{1+\exp(-u/\alpha)\}$, $\alpha>0$, which converges pointwise to $\ell_\tau(u)$ as $\alpha\downarrow0$. Thus, for fixed $(\lambda_\rho,\lambda_\beta)$, the final Stage~2 optimization criterion is
\begin{equation}\label{eq:final-smooth-objective}
\widehat{\bm\theta}^{(\tau)} = \argmin_{\bm\theta} \left[ \frac{1}{n}\sum_{i=1}^{n}\sum_{r=1}^{R} \ell_{\tau,\alpha}\{u_{ir}(\bm\theta)\} + \frac{1}{2}\bm\theta^\top \bm{\mathcal P}_{\lambda_\beta,\lambda_\rho} \bm\theta
\right],
\end{equation}
where $\bm\theta = \left\{\bm b_0^\top, \mathrm{vec}(\bm b^{(\tau)})^\top, \mathrm{vec}(\bm\rho^{(\tau)})^\top \right\}^{\top}$ and $\bm{\mathcal P}_{\lambda_\beta,\lambda_\rho}$ is the block roughness penalty, with the intercept block unpenalized, the $\beta_\tau$ block multiplied by $\lambda_\beta$, and the $\rho_\tau$ block multiplied by $\lambda_\rho$. The derivative of the smooth loss is $\ell'_{\tau,\alpha}(u) = \tau-\{1+\exp(u/\alpha)\}^{-1}$ so the analytic gradient used in the optimization is $\nabla_{\bm\theta} = -\frac{1}{n}\widehat{\bm\Pi}^{\top} \bm\psi_{\tau,\alpha}(\bm u) + \bm{\mathcal P}_{\lambda_\beta,\lambda_\rho}\bm\theta$ where $\bm\psi_{\tau,\alpha}(\bm u)$ stacks $\ell'_{\tau,\alpha}\{u_{ir}(\bm\theta)\}$ over all $(i,r)$. In all numerical studies we used $\alpha=10^{-2}$ and optimized \eqref{eq:final-smooth-objective} by L-BFGS-B with this analytic gradient; a small ridge constant $10^{-8}$ was added to the penalty matrix only for numerical stabilization. Further computational details of the smooth L-BFGS-B optimization, including the objective, gradient, initialization, convergence criterion, and tuning grid, are provided in the online supplementary material.

After convergence of the Stage~2 optimization, we obtain the estimates $\widehat{\rho}_{\tau}$ and $\widehat{\beta}_{\tau}$ by plugging the estimated coefficients $\widehat{\rho}^{(\tau)}_{\ell m}$ and $\widehat{b}^{(\tau)}_{\ell k}$ into the expansions \eqref{eq:beta-rho-expansion}, as follows:
\begin{equation*}
\widehat{\rho}_{\tau}(t,u) = \{\bm{\phi}^\top(u) \otimes \bm{\phi}^\top(t)\} \widehat{\widetilde{\bm{\rho}}}^{(\tau)}, \qquad \widehat{\beta}_{\tau}(t,s) = \{\bm{\psi}^\top(s) \otimes \bm{\phi}^\top(t)\} \widehat{\widetilde{\bm{b}}}^{(\tau)}.
\end{equation*}

The following result describes the large-sample behavior of the proposed estimators in the working spline space. The detailed regularity conditions and proofs are provided in the online supplementary material; the conditions below highlight those most relevant for interpreting the result.

\begin{assumption}\label{ass:fixed-spline} 
Let $\widehat{\rho}_{\tau}$ and $\widehat{\beta}_{\tau}$ denote the proposed two-stage penalized IV quantile estimators, and let $\rho_{\tau,K}$ and $\beta_{\tau,K}$ denote the corresponding population spline-sieve targets in the tensor-product spaces with dimensions $K_y$ and $K_x$. We assume that: \emph{(i)} the spatial weight matrix $\bm W$ is fixed or conditionally fixed, has zero diagonal and uniformly bounded row and column sums, and the structural spatial operator satisfies $\|\mathcal R_{\rho_\tau,\bm W}\|_{\mathrm{op}}\leq1-\delta$ for some $\delta>0$; \emph{(ii)} the true coefficient surfaces are sufficiently smooth that the population spline-sieve targets are well defined, with approximation error $a_K=\|\rho_\tau-\rho_{\tau,K}\|_\infty+ \|\beta_\tau-\beta_{\tau,K}\|_\infty$; for the fixed-spline result below, $a_K$ is not required to vanish, whereas $a_K\to0$ if the spline dimensions increase along an auxiliary sieve sequence; \emph{(iii)} the response and predictor grids are sufficiently dense that the numerical quadrature and discretization errors are $o(n^{-1/2})$ at the fixed evaluation points considered below; \emph{(iv)} the smoothing parameters satisfy $ \lambda_\rho\vee\lambda_\beta=o(n^{-1/2})$, so that the penalty stabilizes estimation without introducing first-order bias; \emph{(v)} the instruments satisfy the structural IV quantile exogeneity and relevance/rank conditions described in Section~\ref{sec:3}; \emph{(vi)} if $\widetilde{\bm\phi}_{i,\tau}^{IV}$ denotes the population first-stage instrumented lag-score vector, then $\widehat{\widetilde{\bm\phi}}_i- \widetilde{\bm\phi}_{i,\tau}^{IV}=O_p(n^{-1/2})$ uniformly for fixed spline dimension, and its asymptotically linear representation is included in $\bm\Sigma_{\theta,\tau}$; \emph{(vii)} the conditional density of the structural quantile error at zero exists, is bounded away from zero and infinity, and is Lipschitz in a neighborhood of zero; and \emph{(viii)} the smoothing constant in the differentiable check-loss approximation satisfies $\alpha=\alpha_n\downarrow0$ and $\sqrt n\,\alpha_n\longrightarrow0$. 
\end{assumption}

\begin{theorem}\label{th:1} 
Under Assumption~\ref{ass:fixed-spline}, for fixed spline dimensions $K_y$ and $K_x$, $\sqrt n \left( \widehat{\bm\theta}^{(\tau)} - \bm\theta_{\tau,K}^{0} \right) \xrightarrow{d} N \left( \bm0,\bm\Sigma_{\theta,\tau} \right)$ where $\bm\theta_{\tau,K}^{0}$ is the pseudo-true working-model spline coefficient vector defined by the population IV quantile moment condition (Assumption~2 in the supplementary material), and $\bm\Sigma_{\theta,\tau}$ includes both the second-stage quantile-score variability and the first-stage generated-regressor contribution.\end{theorem}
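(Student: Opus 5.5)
The argument is a fixed-dimensional M-estimation (Z-estimation) argument for the smoothed Stage~2 criterion \eqref{eq:final-smooth-objective}, with the Stage~1 generated regressor handled by a stochastic-equicontinuity expansion in the style of \citet{Kim_Muller2004}. Write $d=K_0+K_yK_x+K_y^2$ and $\bm\Pi_{ir}(\bm\phi)$ for the row of the design built from a lag-score vector $\bm\phi$. Define the population design row $\bm\Pi^{IV}_{ir}$ obtained from $\widetilde{\bm\phi}^{IV}_{i,\tau}$, and the population map
\[
\bm M(\bm\theta)=\E\Bigl[\textstyle\sum_{r}\bm\Pi^{IV\top}_{ir}\{\tau-\mathbf 1(\Y_i(t_r)-\bm\Pi^{IV}_{ir}\bm\theta<0)\}\Bigr].
\]
By Assumption~\ref{ass:fixed-spline}(v) and the supplementary Assumption~2, this map has the unique zero $\bm\theta^0_{\tau,K}$. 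By (vii), it is differentiable there, with Jacobian $-\bm G=-\E[\sum_r f_{ir}(0\mid\bm Z_i)\bm\Pi^{IV\top}_{ir}\bm\Pi^{IV}_{ir}]$, which is nonsingular by the rank condition of Remark~\ref{rem:identifiability}.

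\textbf{Step 1 (consistency).} Because $d$ is fixed, the smoothed loss is convex in $\bm\theta$, and the penalty is $o(n^{-1/2})$ by (iv). The sample criterion therefore converges pointwise in probability to the population check-loss criterion evaluated at the IV design. The substitutions responsible are $|\ell_{\tau,\alpha}-\ell_\tau|\le\alpha\log2$, the bound $\widehat{\widetilde{\bm\phi}}_i-\widetilde{\bm\phi}^{IV}_{i,\tau}=O_p(n^{-1/2})$ from (vi), and the Lipschitz property of the check loss. Pointwise convergence of convex functions then yields uniform convergence on compacts (Pollard's convexity lemma). Identification then gives $\widehat{\bm\theta}^{(\tau)}\to_p\bm\theta^0_{\tau,K}$. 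The LLN for spatially dependent arrays is supplied by (i): bounded row and column sums of $\bm W$, together with $\|\mathcal R\|_{\mathrm{op}}\le1-\delta$, give a near-epoch/mixing-type representation of $\bm\Y$ in the exogenous inputs.

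\textbf{Step 2 (Bahadur expansion).} The first-order condition reads $-n^{-1}\widehat{\bm\Pi}^\top\bm\psi_{\tau,\alpha}(\bm u)+\bm{\mathcal P}\widehat{\bm\theta}=0$. I would expand it as
\[
0=\bm S_n(\bm\theta^0)+\{\bm S_n(\widehat{\bm\theta})-\bm S_n(\bm\theta^0)\}+R_{\alpha}+R_{\lambda}.
\]
Here $\bm S_n$ is the unsmoothed score built with $\widehat{\bm\Pi}$. The remainder $R_\alpha$ collects the smoothing error, whose contribution to the score is $O_p(\alpha_n)=o(n^{-1/2})$ by (viii) and a density bound, since $\ell'_{\tau,\alpha}-\varphi_\tau$ is concentrated on $|u|\lesssim\alpha$. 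The remainder $R_\lambda=\bm{\mathcal P}\widehat{\bm\theta}=o_p(n^{-1/2})$ follows from (iv), and quadrature errors are $o(n^{-1/2})$ by (iii). Stochastic equicontinuity of the indicator class indexed by $(\bm\theta,\bm\phi)$ in a shrinking neighborhood of $(\bm\theta^0,\widetilde{\bm\phi}^{IV})$ gives
\[
\bm S_n(\widehat{\bm\theta};\widehat{\bm\phi})-\bm S_n(\bm\theta^0;\widetilde{\bm\phi}^{IV})=-\bm G(\widehat{\bm\theta}-\bm\theta^0)-\bm G_\phi\,\mathrm{vec}(\widehat{\widetilde{\bm\phi}}-\widetilde{\bm\phi}^{IV})+o_p(n^{-1/2}).
\]
The class is VC since $d$ is fixed, and the equicontinuity is established via a bracketing bound for dependent data. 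The matrix $\bm G_\phi$ is the derivative of the population moment with respect to the lag scores. It involves the $\rho$-block coefficients and the conditional density, and the density's Lipschitz property (vii) controls the second-order term.

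\textbf{Step 3 (generated regressor and CLT).} Stage~1 is itself a fixed-dimensional linear quantile regression of lag scores on $\bm Z_i$. Its standard Bahadur representation gives
\[
\widehat{\widetilde{\bm\phi}}_i-\widetilde{\bm\phi}^{IV}_{i,\tau}=(\bm I\otimes\bm Z_i^\top)\,\bm G_1^{-1}\,n^{-1}\sum_j\bm\xi_j+o_p(n^{-1/2}),
\]
with $\bm\xi_j$ the first-stage quantile scores. This is the asymptotically linear form assumed in (vi). Substituting yields
\[
\sqrt n(\widehat{\bm\theta}-\bm\theta^0)=\bm G^{-1}n^{-1/2}\sum_i(\bm s_i+\bm A\bm\xi_i)+o_p(1),
\]
where $\bm s_i$ is the Stage~2 score at the truth and $\bm A$ is the averaged cross-derivative. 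A CLT for spatially dependent triangular arrays, such as Jenish--Prucha for near-epoch-dependent random fields, under the conditions in (i), gives normality. The limiting covariance is $\bm\Sigma_{\theta,\tau}=\bm G^{-1}\bm\Omega\bm G^{-1}$, where $\bm\Omega$ is the long-run variance of $\bm s_i+\bm A\bm\xi_i$; it contains both the second-stage and the generated-regressor terms, as stated.

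\textbf{Main obstacle.} I expect the main difficulty to be Step~2's stochastic equicontinuity under spatial dependence, jointly in $\bm\theta$ and the estimated lag scores. The scores are discontinuous, and the observations are dependent through $(\mathbb I-\mathcal R)^{-1}$. I would first establish a maximal inequality by blocking the random field into approximately independent groups using the geometric decay implied by $\|\mathcal R\|_{\mathrm{op}}\le1-\delta$, and then apply a bracketing chaining argument with $L^2$-bracket sizes controlled by the bounded density in (vii).
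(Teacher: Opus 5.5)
Your Z-estimation architecture is a reasonable alternative to the paper's route: convexity for consistency, an expansion of the exactly solvable smoothed first-order condition, a generated-regressor correction through the lag-score derivative $\bm G_\phi$, then a CLT. But the step everything rests on fails as written.

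\textbf{Spatial dependence.} You claim that Assumption~\ref{ass:fixed-spline}(i) gives a near-epoch-dependence representation of $\bm\Y$. That assumption only says that $\bm W$ has bounded row and column sums and that $\|\mathcal R_{\rho_\tau,\bm W}\|_{\mathrm{op}}\le 1-\delta$. You then use this NED claim for the LLN in Step~1, the stochastic equicontinuity in Step~2, and the Jenish--Prucha CLT in Step~3. Conditions on $\bm W$ cannot deliver this:
\begin{itemize}
\item The contraction bound gives geometric decay in the order $k$ of the Neumann series $\sum_k\mathcal R_{\rho_\tau,\bm W}^k$, not decay in spatial distance.
\item To be usable in Jenish--Prucha's LLN/CLT, NED must be relative to an independent or $\alpha$-mixing input field on an increasing-domain lattice with a minimum distance, with the weights controlled as functions of that distance.
\item Assumption~\ref{ass:fixed-spline} says nothing about the cross-sectional dependence of $(\X_i,e_{i,\tau})$ or about any spatial geometry.
\item The instrument scores built from $\bm W\X$ and $\bm W^2\X$ are themselves dependent across locations.
\end{itemize}
So the blocking/bracketing argument you flag as the main obstacle has no foundation, and you never show that the long-run variance $\bm\Omega$ exists or is nondegenerate.

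The paper does not attempt this. Its supplementary Assumption~4 simply posits that the reduced-form errors and instrument scores $(\bm\nu_i,\bm\zeta_i,\bm\Gamma_i)$ are i.i.d.\ across locations; this is a strong simplification, but it is what makes classical tools applicable. Equicontinuity of the class $\bm\Gamma_{i,r}\varphi_\tau(a\nu_{i,r}-\bm\Gamma_{i,r}^\top e)$ then follows from Andrews' type-I class results. The Bahadur representation comes directly from Koenker and Zhao's Lemma~A.4 via convexity, with no separate consistency step. The CLT is Lindeberg--Feller with $\bm\Sigma_{\theta,\tau}=\bm J\,\E(\bm Z_i\bm Z_i^\top)\bm J^\top$. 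You need either that i.i.d.\ reduced-form assumption, after which your Steps~1--3 go through with classical tools, or explicit mixing and geometry conditions under which the random-field machinery actually applies.

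\textbf{Identification of the centering.} There is a second, smaller gap. You assert that the zero of your $\bm M(\bm\theta)$, which has the IV-projected row $\bm\Pi^{IV}_{ir}$ inside the indicator, is $\bm\theta^0_{\tau,K}$. But $\bm\theta^0_{\tau,K}$ is defined through the structural row $\bm d_{ir}$, which carries the endogenous lag scores. Hence $\Y_i(t_r)-\bm\Pi^{IV}_{ir}\bm\theta^0_{\tau,K}$ equals the structural error plus the first-stage lag-score residual weighted by the $\rho$-block. The conditional $\tau$-quantile of that sum given the instruments need not be zero; this is exactly the non-commutation of quantiles the paper warns about.

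The paper gets the link from the Kim--M\"uller reduced-form structure. The $\tau$-quantiles of $\mathrm{vec}(\Y)$ and of the lag given $\bm\Gamma$ are linear, $\E\{\varphi_\tau(\nu_{i,r})\mid\bm\Gamma_i\}=0$, and the second-stage residual at the target is the reduced-form error $\nu_{i,r}$. That is why the first-stage correction takes the form $\widehat{\triangle}_0=(a-1)\sqrt n(\widehat{\bm\Xi}-\bm\Xi)+\sqrt n(\widehat{\bm\Theta}-\bm\Theta)\widetilde{\bm\rho}^{(\tau),0}$. You should state the corresponding restriction, $Q_\tau\{\Y_i(t_r)\mid\bm Z_i\}=\bm\Pi^{IV}_{ir}\bm\theta^0_{\tau,K}$, explicitly. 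Without it your estimator is consistent only for the zero of $\bm M$, which can differ from $\bm\theta^0_{\tau,K}$.
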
 

\begin{corollary}\label{cor:surface-limits}
Under the conditions of Theorem~\ref{th:1}, for any fixed finite collection of evaluation points $(t,u)$ and $(t,s)$, $\sqrt n \left\{ \widehat{\rho}_{\tau}(t,u)-\rho_{\tau,K}(t,u) \right\}$ and $\sqrt n \left\{ \widehat{\beta}_{\tau}(t,s)-\beta_{\tau,K}(t,s) \right\}$ converge jointly to a mean-zero Gaussian vector. Consequently, the estimators are pointwise asymptotically normal and jointly asymptotically normal over every fixed finite grid. Because $K_y$ and $K_x$ are fixed and the spline basis functions are continuous, the reconstruction maps from the coefficient vectors to the coefficient surfaces are continuous and finite dimensional. Therefore, the coefficient central limit theorem in Theorem~\ref{th:1} also implies $\sqrt n \left( \widehat{\rho}_{\tau}-\rho_{\tau,K} \right) \rightsquigarrow \mathbb G_{\rho,\tau}$ in $C(\mathcal I_Y\times\mathcal I_Y)$ and $\sqrt n \left( \widehat{\beta}_{\tau}-\beta_{\tau,K} \right) \rightsquigarrow \mathbb G_{\beta,\tau}$ in $C(\mathcal I_Y\times\mathcal I_X)$, where $\mathbb G_{\rho,\tau}$ and $\mathbb G_{\beta,\tau}$ are mean-zero finite-rank Gaussian processes with covariance kernels $\Sigma_{\rho,\tau}(t,u;t',u') = \bm a_\rho(t,u)^\top \bm\Sigma_{\rho,\tau}\, \bm a_\rho(t',u')$ and $\Sigma_{\beta,\tau}(t,s;t',s') = \bm a_\beta(t,s)^\top \bm\Sigma_{\beta,\tau}\, \bm a_\beta(t',s')$, respectively, where $\bm a_\rho(t,u)=\bm\phi(u)\otimes\bm\phi(t)$ and $\bm a_\beta(t,s)=\bm\psi(s)\otimes\bm\phi(t)$ are the tensor-product basis evaluation vectors, and $\bm\Sigma_{\rho,\tau}$ and $\bm\Sigma_{\beta,\tau}$ denote the $K_y^2\times K_y^2$ and $K_yK_x\times K_yK_x$ blocks of $\bm\Sigma_{\theta,\tau}$ corresponding to $\widetilde{\bm\rho}^{(\tau)}$ and $\widetilde{\bm b}^{(\tau)}$. If the true coefficient surfaces belong to the fixed working spline spaces, then $\rho_{\tau,K}=\rho_\tau$ and $\beta_{\tau,K}=\beta_\tau$, and hence the true surfaces may be used as the centering functions.
\end{corollary}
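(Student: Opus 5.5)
The plan is to deduce the corollary from Theorem~\ref{th:1} via the continuous mapping theorem applied to a finite-dimensional linear reconstruction map. Since $K_y$ and $K_x$ are fixed, the reconstructions are exactly linear in the coefficients:
\[
\widehat\rho_\tau(t,u)-\rho_{\tau,K}(t,u)=\bm a_\rho(t,u)^\top\bigl(\widehat{\widetilde{\bm\rho}}^{(\tau)}-\widetilde{\bm\rho}^{0}_{\tau,K}\bigr),
\]
\[
\widehat\beta_\tau(t,s)-\beta_{\tau,K}(t,s)=\bm a_\beta(t,s)^\top\bigl(\widehat{\widetilde{\bm b}}^{(\tau)}-\widetilde{\bm b}^{0}_{\tau,K}\bigr).
\]
Here the pseudo-true surfaces $\rho_{\tau,K},\beta_{\tau,K}$ are, by construction, the spline expansions of the corresponding blocks of $\bm\theta^0_{\tau,K}$. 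First I would record this identity, noting that there is no approximation-error term, since the centering is at the sieve targets rather than at the true surfaces.

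For the finite-grid statement, fix points $(t_1,u_1),\dots,(t_p,u_p)$ and $(t'_1,s_1),\dots,(t'_q,s_q)$. Stack the evaluation vectors into a deterministic matrix $\bm A$, of size $(p+q)\times(K_0+K_yK_x+K_y^2)$, which selects the $\rho$- and $b$-blocks and has zeros in the intercept columns. The vector of centered, $\sqrt n$-scaled surface values is then $\bm A\sqrt n(\widehat{\bm\theta}^{(\tau)}-\bm\theta^0_{\tau,K})$. By Theorem~\ref{th:1} and Slutsky/continuous mapping, this converges to $N(\bm 0,\bm A\bm\Sigma_{\theta,\tau}\bm A^\top)$. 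Reading off the entries gives the covariances $\bm a_\rho^\top\bm\Sigma_{\rho,\tau}\bm a_\rho$ and $\bm a_\beta^\top\bm\Sigma_{\beta,\tau}\bm a_\beta$, together with the cross-covariances from the off-diagonal block. Pointwise normality is the case $p+q=1$.

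For the process statement, define $T_\rho:\mathbb R^{K_y^2}\to C(\mathcal I_Y\times\mathcal I_Y)$ by $T_\rho(\bm c)(t,u)=\bm a_\rho(t,u)^\top\bm c$, and define $T_\beta$ analogously. These maps are linear. They are bounded because B-splines are continuous on compact domains, so $\sup_{t,u}\|\bm a_\rho(t,u)\|<\infty$ and $\|T_\rho\bm c\|_\infty\le C\|\bm c\|$. The joint map $(T_\rho,T_\beta)\circ\pi$, where $\pi$ is the coordinate projection onto the $\rho$ and $b$ blocks, is therefore continuous into the separable Banach space $C\times C$. The continuous mapping theorem then gives $\sqrt n(\widehat\rho_\tau-\rho_{\tau,K},\widehat\beta_\tau-\beta_{\tau,K})\rightsquigarrow(T_\rho\bm G_\rho,T_\beta\bm G_\beta)$, where $(\bm G_\rho,\bm G_\beta)$ is the corresponding Gaussian block of the limit. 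Each limit $\mathbb G_{\rho,\tau}=T_\rho\bm G_\rho$ is a Gaussian process, since every finite-dimensional marginal is a linear image of a Gaussian vector. It is finite rank, with range contained in the $K_y^2$-dimensional spline span, and it has the stated covariance kernel. Measurability is not an issue because the estimators take values in a finite-dimensional subspace.

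Finally, if $\rho_\tau\in\mathcal S_\rho(K_y)$ and $\beta_\tau\in\mathcal S_\beta(K_y,K_x)$, I would check that the true coefficient vector satisfies the population IV quantile moment condition. Then uniqueness, from the rank condition in Remark~\ref{rem:identifiability} and Assumption~\ref{ass:fixed-spline}(v), forces $\bm\theta^0_{\tau,K}$ to equal the true coefficients, and hence $\rho_{\tau,K}=\rho_\tau$ and $\beta_{\tau,K}=\beta_\tau$. This last identification step is the only nonroutine point. The difficulty is confirming that the structural restriction $\Pr\{e_{ir,\tau}\le0\mid\bm Z_i\}=\tau$ holds exactly at the true coefficients when the Stage~2 design uses the IV-projected lag scores rather than $\widetilde{\bm\phi}_i$. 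I would address this by appealing to the definition of $\bm\theta^0_{\tau,K}$ as the solution of the structural moment condition, which is stated in terms of the structural error and not the projected design, together with Assumption~\ref{ass:fixed-spline}(ii) in the case $a_K=0$.
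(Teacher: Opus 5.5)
Your proposal is correct and is essentially the paper's argument. The supplementary proof of Theorem~\ref{th:1} closes by writing the centered surfaces as the fixed evaluation vectors $\bm\phi^\top(u)\otimes\bm\phi^\top(t)$ and $\bm\psi^\top(s)\otimes\bm\phi^\top(t)$ applied to the coefficient deviations, then invokes the continuous mapping theorem, first at fixed points and then for the finite-dimensional linear maps into $C([0,1]^2)$; your stacked-matrix treatment of joint finite-grid convergence is a cleaner version of this. The paper only asserts the final centering claim, and your route for it (the true coefficients solve the moment condition, then uniqueness applies) is the right one, with three corrections: global uniqueness of $\bm\theta^0_{\tau,K}$ is postulated in Assumption~2(i) of the supplement rather than implied by the local rank condition; appealing to $a_K=0$ is circular, since $a_K=0$ is exactly the conclusion; and exact equality needs the population design row $\bm d_{ir}$ built from exact integrals rather than Riemann-sum scores, otherwise it holds only up to the quadrature error controlled by Assumption~\ref{ass:fixed-spline}(iii).
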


\begin{remark}\label{rem:fixed-spline-scope} 
The function-space convergence in Corollary~\ref{cor:surface-limits} is a fixed-spline result. If $K_y$ or $K_x$ increases with $n$, additional high-dimensional and uniform-tightness conditions would be required to establish a whole-surface $\sqrt n$ weak limit; no such growing-dimension function-space result is claimed here.
\end{remark}

\section{Monte Carlo experiments}\label{sec:4}

To systematically evaluate the finite-sample performance of the proposed SFoF-QR estimator, we conduct a Monte Carlo study in which all data are generated from a functional SAR model on the compact domains $\mathcal{I}_x=\mathcal{I}_y=[0,1]$. For $n$ spatial locations, we observe paired functions $\{(\X_i,\Y_i)\}_{i=1}^n$ and generate responses according to $\Y_{i}(t) = \int_0^1 \Bigg\{\sum_{j=1}^{n} w_{ij} \Y_{j}(u) \Bigg\} \rho(t,u) du + \int_0^1 \X_{i}(s)\beta(t,s) ds + \epsilon_{i}(t)$, $t\in[0,1]$. For compactness, define the stacked response vector $\Y=(\Y_1,\ldots,\Y_n)^\top$ and the stacked error $\epsilon=(\epsilon_1,\ldots,\epsilon_n)^\top$. Let $\mathcal{G}(t)=\Big\{\int_0^1 \X_1(s)\beta(t,s) ds, \ldots, \int_0^1 \X_n(s)\beta(t,s) ds\Big\}^\top$ denote the functional linear component, and define the functional SAR operator $\mathcal{R}_{\bm W}$ by its action on an $n$-vector of functions $\bm f(\cdot)$: $(\mathcal{R}_{\bm W}\bm f)(t)=\bm W\int_0^1 \bm f(u) \rho(t,u) du$. Then the data-generating model can be written as $\Y(t)=\mathcal{R}_{\bm W}\{\Y\}(t)+\mathcal{G}(t)+\epsilon(t)$. When $(\mathbb{I}-\mathcal{R}_{\bm W})$ is invertible (e.g., $\|\mathcal{R}_{\bm W}\|<1$), the unique reduced-form solution is $\Y(t)=(\mathbb{I}-\mathcal{R}_{\bm W})^{-1}\{\mathcal{G}(t)+\epsilon(t)\}$. 
In implementation, we generate $\Y$ by fixed-point iteration (Neumann-series recursion) $\Y_{k+1}=\mathcal{R}_{\bm W}(\Y_k)+\mathcal{G}+\epsilon$, which converges to the reduced-form solution under $\|\mathcal{R}_{\bm W}\|<1$. The predictors $\X_i$ are generated independently from a smooth $\mathcal{L}^2$ process using a truncated Fourier expansion with $n_\phi=10$ cosine and $n_\phi=10$ sine terms: $\X_i(s)=\sum_{j=1}^{n_{\phi}} \xi_{i,j}^{(1)}\Big\{j^{-1.5}\sqrt{2}\cos(j\pi s)\Big\}
+\sum_{j=1}^{n_{\phi}} \xi_{i,j}^{(2)}\Big\{j^{-1.5}\sqrt{2}\sin(j\pi s)\Big\}$ where $\xi_{i,j}^{(1)},\xi_{i,j}^{(2)}\stackrel{i.i.d.}{\sim}N(0,1)$. The $j^{-1.5}$ decay yields trace-class covariance and smooth sample paths.

Spatial dependence is induced through a row-normalized inverse-distance weight matrix $\bm W$. Specifically, for $i\neq j$, let $\widetilde w_{ij}=(1+|i-j|)^{-1}$ and $\widetilde w_{ii}=0$, and set $w_{ij}=\widetilde w_{ij}/\sum_{k=1}^n \widetilde w_{ik}$. The regression surface is chosen to be smooth and non-separable, $\beta(t,s)=2+s+t+0.5\sin(2\pi st)$, and the spatial kernel is $\rho(t,u)=s_d \frac{1+ut}{1+|u-t|}$, where $s_d\in\{0.1,0.5,0.9\}$ corresponds to weak, moderate, and strong spatial dependence.

We focus on a challenging error design (Case 2) combining signal-dependent heteroskedasticity and upper-tail contamination. Let $\mu_i$ denote the signal (non-error) component of $\Y_i$ and $\mu_i^{(a)}=|\mu_i|$. After rescaling $\mu_i^{(a)}$ to $[0,1]$ we set a time-location specific scale $\sigma_i=\sigma_\epsilon(1+a_\sigma \mu_i^{a,\star})$ and generate $\epsilon_i=\sigma_i\iota_i(t)+B_i c_{\text{out}}\sigma_i$, where $\iota_i\stackrel{i.i.d.}{\sim}N(0,1)$ and $B_i \sim \text{Bernoulli} (p_i)$ with $p_i=p_0+p_1\mu_i^{a,\star}$. Thus, both the variance and the probability of a one-sided positive shock increase with the local signal strength, producing a strongly right-skewed, heteroskedastic conditional error distribution. For completeness, we also considered ideal Gaussian errors (Case 1) and symmetric heavy-tailed errors (Case 3); detailed designs and results are reported in the online supplementary material.

We benchmark the proposed method (M$_6$) against five competitors: the FPCA-based SFoFR of \citet{BSGARC2025} (M$_1$), the penalized mean-based SFoFR of \citet{BSS2025} (M$_2$), the penalized FoFR of \citet{ivanescu2015} (M$_3$), boosted mean regression via \texttt{FDboost} \citep{FDboost} (M$_4$), and function-on-function linear quantile regression \citep{BSMMA} (M$_5$). The proposed method $\mathrm{M}_6$ and the penalized SFoFR model $\mathrm{M}_2$ are both implemented with $K_y=K_x=10$ and $\lambda_\rho=\lambda_\beta=0.001$. We additionally report $\mathrm{M}_6^{(\mathrm{Opt})}$, which differs from $\mathrm{M}_6$ only in that $(\lambda_\rho,\lambda_\beta)$ is selected by minimizing BIC over a prespecified candidate grid. We consider training sample sizes $n\in\{100,250,500\}$ and evaluate on an independent test set of size $n_{\text{test}}=1000$. Each configuration is repeated for $N_{\text{sim}}=250$ Monte Carlo replications across the three dependence levels and three error cases. All competing methods were evaluated on the same Monte Carlo replications; the main paper reports the Case~2 results for $n=500$, while the full set of results is provided in the supplementary material. The supplementary material also reports representative computing times for all methods under Case~2; for the proposed M$_6$, one complete replication ranges from 4.31 seconds for $n=100$ and weak dependence to 54.34 seconds for $n=500$ and strong dependence.

For $\mathrm{M}_6^{(\mathrm{Opt})}$, the spline dimensions were fixed at $K_y=K_x=10$, while $(\lambda_\rho,\lambda_\beta)$ was selected from $\Lambda_\rho=\Lambda_\beta= \{10^{-4},10^{-3},10^{-2},10^{-1},1\}$ by minimizing BIC for each Monte Carlo replication and quantile level.

Performance is assessed using $\mathcal{L}^2$-based metrics. For surface estimation, we report the relative root integrated squared percentage error (RRISPEE) for $\beta$ and $\rho$, $\mathrm{RRISPEE}(\widehat{\beta}) = 100\left\{\frac{\|\beta- \widehat{\beta}\|_2^2}{\|\beta\|_2^2} \right\}^{1/2}$, $\mathrm{RRISPEE}(\widehat{\rho}) = 100\left\{\frac{\|\rho-\widehat{\rho}\|_2^2}{\|\rho\|_2^2} \right\}^{1/2}$. For prediction, we use the root mean squared percentage error on independent test functions, $\mathrm{RMSPE} = 100\left\{\frac{\|\Y^{\mathrm{new}}-\widehat{\Y}^{\mathrm{new}}\|_2^2} {\|\Y^{\mathrm{new}}\|_2^2} \right\}^{1/2}$. For the quantile methods, the reported surface-recovery metrics compare $\widehat{\beta}_{0.5}$ and $\widehat{\rho}_{0.5}$ with the known structural data generating process surfaces $\beta$ and $\rho$ appearing in the additive SAR data-generating equation. Under Cases~1 and~3, the errors are symmetric with conditional median zero, so these structural surfaces also coincide with the coefficient surfaces of the conditional median equation. Under the heteroskedastic contaminated Case~2, however, the error scale and the probability of the one-sided positive contamination depend on the signal. Consequently, $Q_{0.5}\{\epsilon_i(t)\mid\text{signal}\}$ is generally nonzero and signal-dependent, and the exact conditional median need not be represented by the same additive coefficient surfaces. Thus, for M$_5$ and M$_6$ in Case~2, $\mathrm{RRISPEE}(\widehat{\beta})$ and $\mathrm{RRISPEE}(\widehat{\rho})$ are structural-DGP-surface recovery measures, not estimation errors relative to separately derived median coefficient surfaces. Prediction accuracy is assessed separately through RMSPE.

Figure~\ref{fig:tab2_graphical_summary} summarizes the main Case~2 simulation results for $n=500$. The three panels display $\mathrm{RRISPEE}(\widehat\beta)$, $\mathrm{RRISPEE}(\widehat\rho)$, and RMSPE across the spatial-dependence levels $s_d=0.1,0.5,0.9$. These are the main surface-recovery and prediction metrics common to the competing methods. The graphical summary shows that the proposed spatial quantile estimator M$_6$ and its BIC-tuned version M$_6^{(\mathrm{Opt})}$ recover the structural regression and spatial-spillover surfaces accurately under asymmetric contamination. The advantage of spatial methods becomes clearer as spatial dependence strengthens, while non-spatial competitors deteriorate because they omit the SAR component. The RMSPE panel also shows that incorporating the spatial lag improves test-set prediction accuracy, especially under moderate and strong dependence.

Although Figure~\ref{fig:tab2_graphical_summary} focuses on the heteroskedastic upper-tail contamination setting, the supplementary results for Gaussian errors (Case~1) and symmetric heavy-tailed errors (Case~3) lead to the same qualitative conclusion. Under Gaussian errors, where mean-based spatial FoFR methods are expected to perform well, M$_6$ remains competitive with the penalized mean-based spatial estimator M$_2$. Under symmetric heavy-tailed errors, M$_6$ retains the benefit of spatial modeling while preserving the quantile interpretation. Thus, the gains in Case~2 do not arise solely from choosing a setting favorable to quantile regression; rather, M$_6$ remains competitive in the Gaussian benchmark and becomes especially useful when the error distribution is heavy-tailed or asymmetric. We refer readers to the supplementary material for complete numerical results across all simulation settings and for the finite-sample check of the pointwise asymptotic confidence intervals for $\beta_\tau$ and $\rho_\tau$.

\begin{figure}[!htb]
\centering
\includegraphics[width=.98\textwidth]{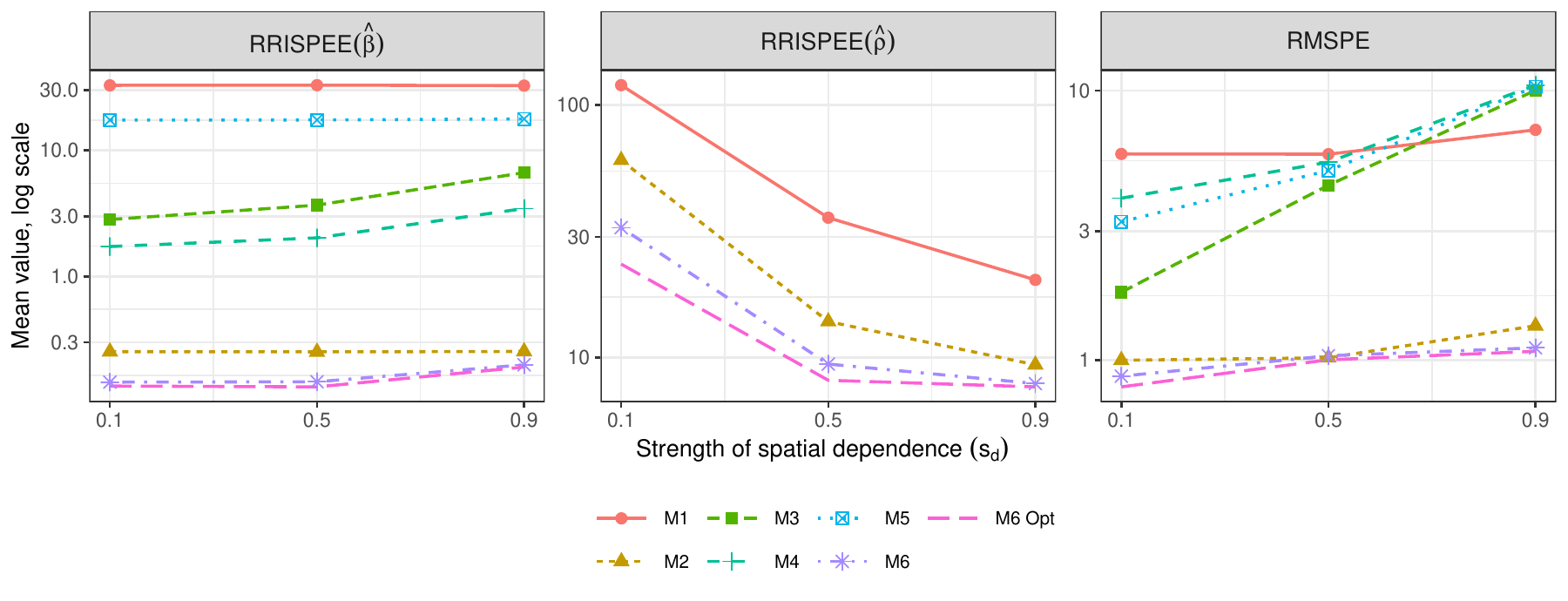}
\caption{Graphical summary of the main Case~2 simulation metrics for $n=500$ across three spatial-dependence levels, $s_d=0.1,0.5,0.9$. The panels report the Monte Carlo mean $\mathrm{RRISPEE}(\widehat{\beta})$, $\mathrm{RRISPEE}(\widehat{\rho})$, and RMSPE for the competing methods. The vertical axis is shown on a logarithmic scale to improve readability because the methods differ substantially in magnitude.}
\label{fig:tab2_graphical_summary}
\end{figure}
\vspace{-1cm}

\section{Air quality data analysis}\label{sec:5}

The empirical analysis uses the PM$_{2.5}$/PM$_{10}$ air-quality data introduced in Section~\ref{sec:1} and Figure~\ref{fig:motivation}. Briefly, the data are obtained from the \texttt{ARPALData}  \Rlogo \ package \citep{arpaldata} and cover January~1,~2023 to December~31,~2024. For station $i$, the functional response $\Y_i$ is the monthly mean PM$_{2.5}$ concentration $(\mu g/m^3)$, and the functional predictor $\X_i$ is the monthly mean PM$_{10}$ concentration $(\mu g/m^3)$. This predictor is meaningful because PM$_{2.5}$ is a subset of PM$_{10}$, although their relationship varies over time according to source composition and meteorological conditions \citep{Gianquintieri2023}. The monthly series are observed at the grid points $t_r=s_r=r$, $r=1,\ldots,24$, and are converted to functional objects by cubic B-spline smoothing with 13 equally spaced knots over the continuous common domain $\mathcal I_Y=\mathcal I_X=[1,24]$. The PM$_{2.5}$ and PM$_{10}$ curves are fitted on their original concentration scales and are not centered globally, by month, or by station; the nonzero response level is accommodated by the unpenalized functional intercept included in the Stage~2 model.

For the SAR component, we construct a row-standardized $K$-nearest-neighbors weight matrix $\bm W=(w_{ij})$ with $K_n=8$, selected by cross-validation. Letting $N_8(i)$ denote the eight nearest stations to site $i$, we define $w_{ij} = d_{ij}^{-1}/\sum_{k\in N_8(i)} d_{ik}^{-1}$ when $j\in N_8(i)$, and $w_{ij}=0$ otherwise, where $d_{ij}$ is the Euclidean distance between sites $i$ and $j$. This yields the functional spatial lag $\widetilde{\Y}_i(u)=\sum_{j=1}^n w_{ij}\Y_j(u)$, which is interpreted as a locally weighted average of neighboring PM$_{2.5}$ trajectories. As shown in Figure~\ref{fig:motivation}, the functional Moran's $I$ statistic remains approximately between 0.94 and 0.96, confirming strong positive spatial dependence throughout the study period.

For the PM$_{2.5}$ application, we used $K_y=K_x=10$ and selected $(\lambda_\rho,\lambda_\beta)$ separately for each $\tau\in\{0.1,0.2,\ldots,0.9\}$ from $\Lambda_\rho=\Lambda_\beta=\{10^{-4},10^{-3},10^{-2},10^{-1},1\}$ using the BIC criterion in Section~\ref{sec:3}.

Using the available air-quality data, we fit the proposed SFoF-QR model $Q_{\tau}\{\Y_i(t)\mid \widetilde{\Y}_i,\X_i\} = \int_{1}^{24}\widetilde{\Y}_i(u)\rho_{\tau}(t,u) du + \int_{1}^{24}\X_i(s)\beta_{\tau}(t,s) ds$, $t\in[1,24]$, for each quantile level $\tau\in\{0.1,0.2,\ldots,0.9\}$. In computation, this pointwise functional model is evaluated at the observed monthly grid points $t_r=r$, $r=1,\ldots,24$. Because an independent external test sample is not available for this application, the numerical results in Table~\ref{tab:tab_6} should be interpreted as in-sample fitted-performance summaries rather than out-of-sample prediction errors. We also note that the out-of-sample prediction in spatial autoregressive models requires specifying how the prediction locations are connected to the training locations. In the simulation study, training and test samples are generated as separate sets with no common sample points, so test prediction is well defined from the new functional predictors and the corresponding spatial structure. If training and prediction samples share spatial units or if the prediction set is spatially linked to the training response through the SAR lag, additional prediction strategies, such as the trend-corrected approach of \citet{Goulard2017}, may be used; incorporating such spatial forecasting corrections is beyond the scope of the present paper.

At the median level $\tau=0.5$, we compare the proposed method M$_6$ with five competing approaches. M$_1$ denotes the FPCA-based spatial function-on-function mean regression estimator; M$_2$ denotes the penalized B-spline spatial function-on-function mean regression estimator; M$_3$ denotes a classical non-spatial function-on-function mean regression estimator; M$_4$ denotes a boosted non-spatial function-on-function regression estimator; and M$_5$ denotes a non-spatial function-on-function quantile regression estimator. Hence, M$_1$ and M$_2$ are spatial mean-regression competitors, M$_3$ and M$_4$ are non-spatial mean-regression competitors, and M$_5$ is the non-spatial quantile-regression analogue of the proposed spatial quantile method M$_6$. To examine distributional heterogeneity, we also compare M$_5$ and M$_6$ over $\tau\in\{0.1,0.2,\ldots,0.9\}$.

The fitted performance is summarized by the relative functional RMSE and the functional coefficient of determination, $\mathrm{RMSE} = 100\times \sqrt{ \frac{\sum_i\int_{1}^{24}\{\Y_i(t)-\widehat{\Y}_i(t)\}^2dt} {\sum_i\int_{1}^{24}\{\Y_i(t)\}^2dt}}$, and $R^2 = 1- \frac{\sum_{i=1}^{n}\int_{1}^{24} \{\Y_i(t)-\widehat{\Y}_i(t)\}^2 dt}{ \sum_{i=1}^{n}\int_{1}^{24}\{\Y_i(t)-\overline{\Y}(t)\}^2 dt}$ where $\widehat{\Y}_i$ is the fitted PM$_{2.5}$ curve for station $i$ and
$\overline{\Y}(t)=n^{-1}\sum_{i=1}^{n}\Y_i(t)$ is the pointwise mean PM$_{2.5}$ curve across all stations. For quantile-regression methods, the fitted curve is not a conditional mean but the estimated conditional quantile curve $\widehat{Q}_{\tau,i}(t)=\widehat{Q}_{\tau}\{\Y_i(t)\mid\widetilde{\Y}_i,\X_i\}$. Hence, the reported $R^2$ for M$_5$ and M$_6$ is an $R^2$-type squared-error agreement measure and should not be interpreted as the usual conditional-mean variance-explained measure. Quantile-specific performance is therefore also evaluated using quantile-oriented criteria such as check loss, empirical coverage, CPD, and interval score. Because these quantities are computed on the same data used for fitting, they measure in-sample fitted accuracy and should not be interpreted as independent test-set prediction performance.

The results in Table~\ref{tab:tab_6} show that accounting for spatial dependence is important for explaining the observed PM$_{2.5}$ curves. Interpreted as in-sample fitted accuracy, the proposed SFoF-QR method gives the smallest RMSE, $6.316$, and an $R^2$-type agreement measure of $0.970$, closely matching the penalized spatial mean model M$_2$ and clearly improving over the non-spatial alternatives M$_3$-M$_5$. In particular, the non-spatial quantile model M$_5$ yields RMSE $=8.556$, whereas incorporating the functional spatial lag in M$_6$ reduces this error by about 26\%. We also compare M$_5$ and M$_6$ in terms of prediction interval performance where the interval summaries are obtained from the constructed 95\% prediction intervals from $\tau\in\{0.025,0.975\}$ and are evaluated via functional empirical coverage (EC), coverage probability deviance (CPD), and the integrated interval score (score). M$_6$ provides substantially better in-sample interval summaries than M$_5$, with EC $=0.949$, CPD $=0.071$, and score $=4.365$. Additional fitted-curve comparisons for all six methods are
provided in the online supplementary material.

\begin{table}[!htb]
\tabcolsep 0.3in
\caption{\small In-sample fitted performance for the air-quality data. The table reports RMSE, $R^2$, 95\% nominal EC, CPD, and interval score. The RMSE and $R^2$ values for the quantile methods are obtained at $\tau=0.5$, whereas EC, CPD, and interval score are computed from the fitted lower and upper quantile curves at $\tau=0.025$ and $0.975$. Metrics EC, CPD, and score are not applicable for mean-regression models and are reported as ----.}\label{tab:tab_6}
\begin{center}
\begin{small}
\begin{tabular}{@{}lcccccc@{}} 
\toprule
{Metric} & \multicolumn{6}{c}{Method} \\
& M$_1$ & M$_2$ & M$_3$ & M$_4$ & M$_5$ & M$_6$ \\
\midrule
RMSE & 14.711 & 6.333 & 12.211 & 34.996 & 8.556 & 6.316 \\
$R^2$ & 0.887 & 0.970 & 0.926 & 0.499 & 0.950 & 0.970 \\
EC & ---- & ---- & ---- & ---- & 0.000 & 0.949 \\
CPD  & ---- & ---- & ---- & ---- & 0.950 & 0.071 \\
score & ---- & ---- & ---- & ---- & 99.484 & 4.365 \\
\bottomrule
\end{tabular}
\end{small}
\end{center}
\end{table}

Table~\ref{tab:tab_7} shows that M$_6$ outperforms the non-spatial quantile model M$_5$ at every quantile level. Both methods exhibit the expected U-shaped RMSE pattern, with larger fitted errors in the tails, but the advantage of M$_6$ persists throughout the conditional distribution. This improvement remains important in the upper tail, where pollution-risk assessment is particularly relevant: at $\tau=0.9$, the RMSE decreases from 15.366 for M$_5$ to 12.706 for M$_6$. Figure~\ref{fig:Fig_4} displays representative lower, median, and upper fitted quantile curves. Their separation is substantially greater in winter than in summer, indicating pronounced state-dependent heteroscedasticity that cannot be adequately represented by homoscedastic mean regression. The complete fitted-curve display for all nine quantile levels is provided in the online
supplementary material.

\begin{table}[!htb]
\tabcolsep 0.02in
\caption{\small Computed RMSE and $R^2$ values for the quantile regression models (M$_5$ and M$_6$) from the air quality data. The performance metrics are obtained for several quantile levels, i.e., $\tau\in\{0.1,0.2,\ldots,0.9\}$. }\label{tab:tab_7}
\begin{center}
\begin{scriptsize}
\begin{tabular}{@{}lcccccccccccccccccc@{}} 
\toprule
{Method} & \multicolumn{9}{c}{M$_5$} & \multicolumn{9}{c}{M$_6$} \\ 
\cmidrule(lr){2-10}\cmidrule(lr){11-19} 
{$\tau$} & 0.1 & 0.2 & 0.3 & 0.4 & 0.5 & 0.6 & 0.7 & 0.8 & 0.9 &  0.1 & 0.2 & 0.3 & 0.4 & 0.5 & 0.6 & 0.7 & 0.8 & 0.9 \\
\cmidrule(lr){2-10}\cmidrule(lr){11-19} 
RMSE & 15.420 & 10.924 & 9.573 & 8.806 & 8.556 & 8.827 & 10.133 & 12.199 & 15.366 & 11.861 & 9.255 & 8.042 & 7.458 & 6.316 & 7.593 & 8.485 & 9.955 & 12.706 \\
$R^2$ & 0.843 & 0.921 & 0.936 & 0.945 & 0.950 & 0.948 & 0.936 & 0.916 & 0.871 & 0.897 & 0.935 & 0.951 & 0.958 & 0.970 & 0.958 & 0.948 & 0.929 & 0.892 \\
\bottomrule
\end{tabular}
\end{scriptsize}
\end{center}
\end{table}
\vspace{-0.4cm}

\begin{figure}[!t]
\centering
\includegraphics[width=.31\textwidth]{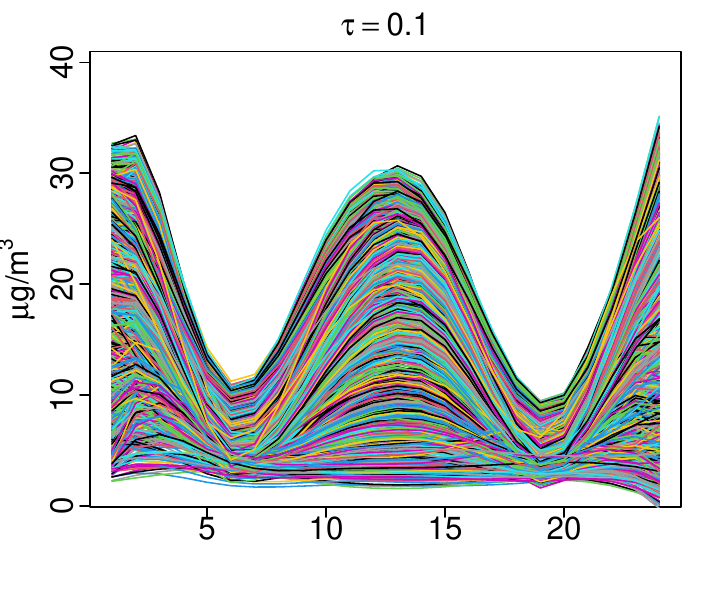}
\hfill
\includegraphics[width=.31\textwidth]{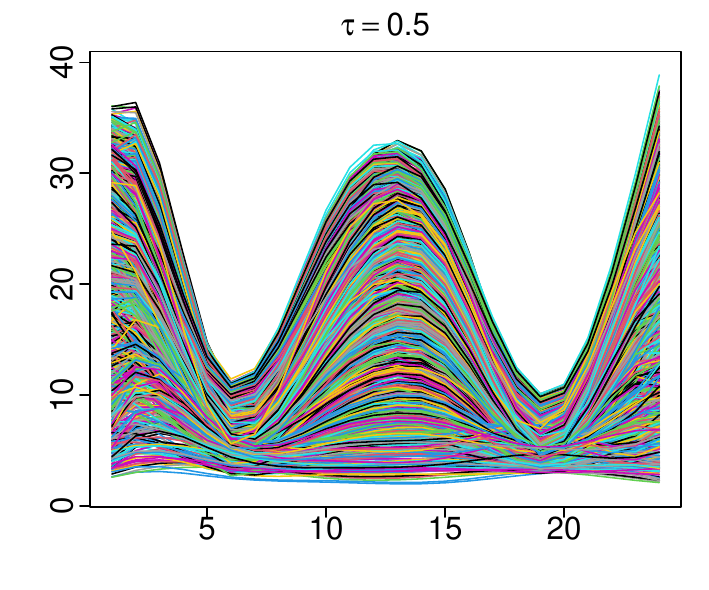}
\hfill
\includegraphics[width=.31\textwidth]{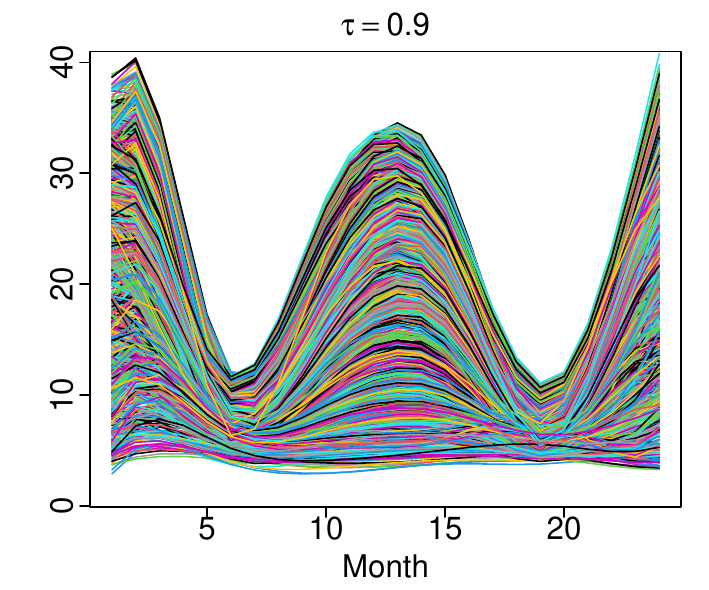}
\caption{Fitted conditional PM$_{2.5}$ quantile curves obtained
by the proposed method at representative quantile levels
$\tau=0.1$, $0.5$, and $0.9$, shown from left to right. The complete
display for $\tau\in\{0.1,0.2,\ldots,0.9\}$ is provided in Figure~S2
of the online supplementary material.}
\label{fig:Fig_4}
\end{figure}

The estimated coefficient surfaces provide additional insight. Figures~\ref{fig:Fig_5} and~\ref{fig:Fig_6} display the heatmap representations of the estimated $\beta(t,s)$ and $\rho(t,u)$, respectively. Figure~\ref{fig:Fig_5} shows that the spatial spline estimators, especially M$_2$ and M$_6$, recover smooth and localized regression surfaces for the PM$_{10}$-PM$_{2.5}$ relationship. The stronger color intensity near the diagonal region $s\approx t$ indicates that PM$_{10}$ at a given month contributes most strongly to PM$_{2.5}$ at nearby months, while off-diagonal effects are weaker and smoother. The proposed M$_6$ gives a structure broadly consistent with the spatial mean-based estimator M$_2$, while retaining the median quantile interpretation. Figure~\ref{fig:Fig_6} similarly shows that M$_2$ and M$_6$ produce smooth spatial spillover surfaces, whereas the FPCA-based M$_1$ is less stable. Overall, the heatmaps support the empirical conclusion that spatial modeling and penalized spline surface estimation are both important for interpretable recovery of the regression and spillover structures.

\begin{figure}[!htb]
\centering
\includegraphics[width=5.05cm]{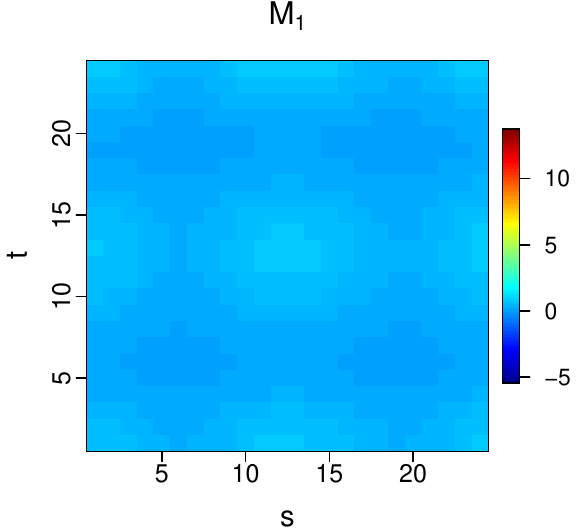}
\quad
\includegraphics[width=5.05cm]{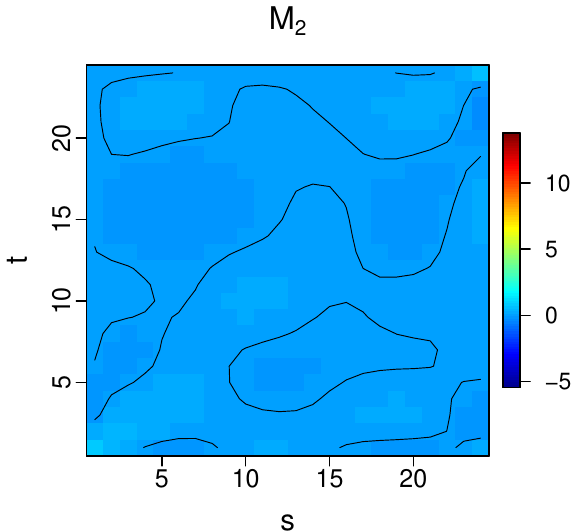}
\quad
\includegraphics[width=5.05cm]{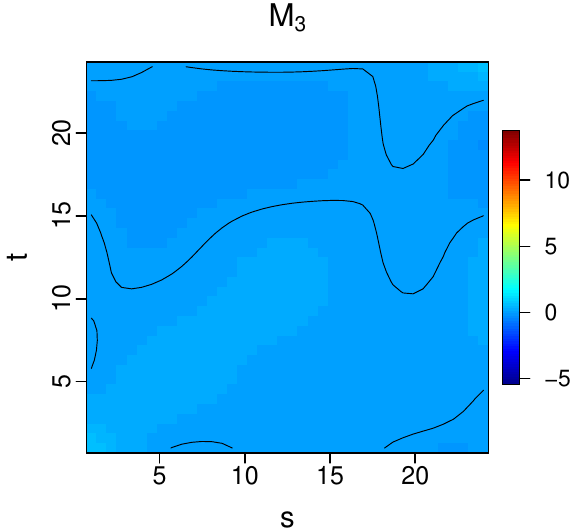}
\\
\includegraphics[width=5.05cm]{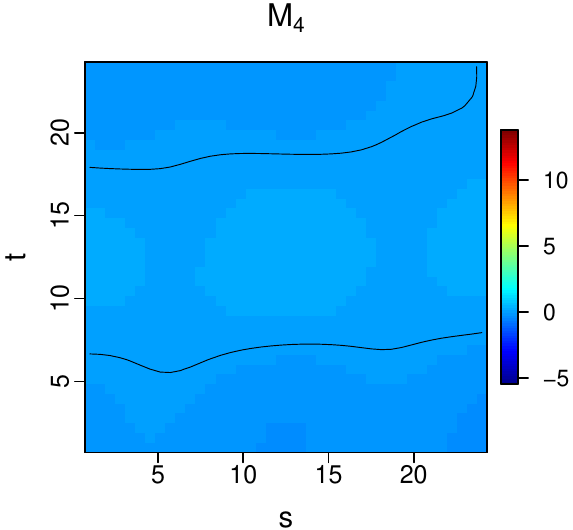}
\quad
\includegraphics[width=5.05cm]{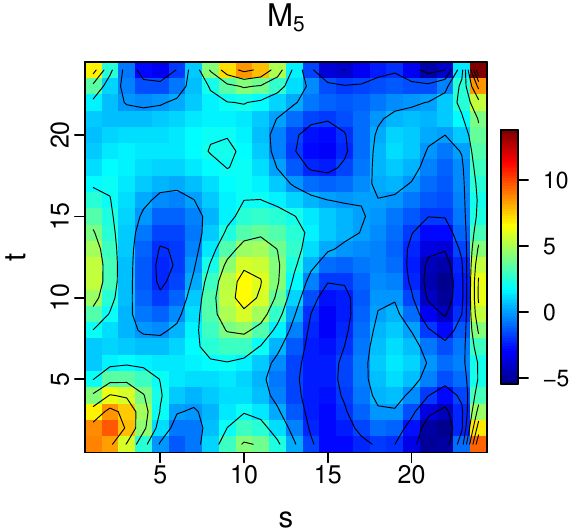}
\quad
\includegraphics[width=5.05cm]{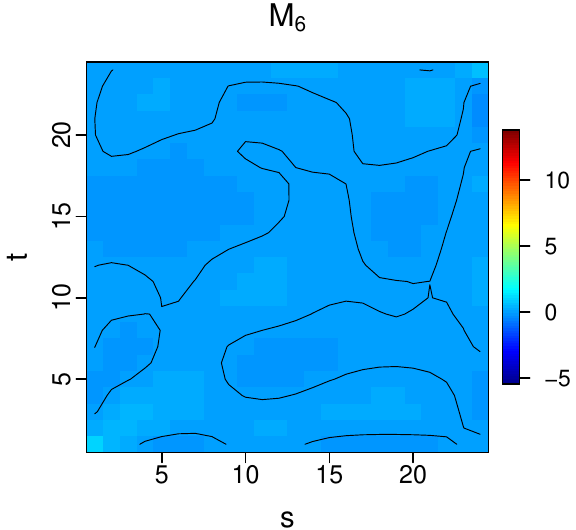}
\caption{Heatmap representation of the estimated regression coefficient surface $\beta(t,s)$ by all six methods: M$_1$, M$_2$, M$_3$, M$_4$, M$_5$, and M$_6$. The horizontal axis $s$ denotes the PM$_{10}$ predictor month, and the vertical axis $t$ denotes the PM$_{2.5}$ response month. The estimated regression coefficient surfaces for the quantile methods are obtained at $\tau=0.5$. The same color scale is used across all panels, allowing direct comparison of both magnitude and temporal structure.}\label{fig:Fig_5}
\end{figure}

\begin{figure}[!htb]
\centering
\includegraphics[width=5.05cm]{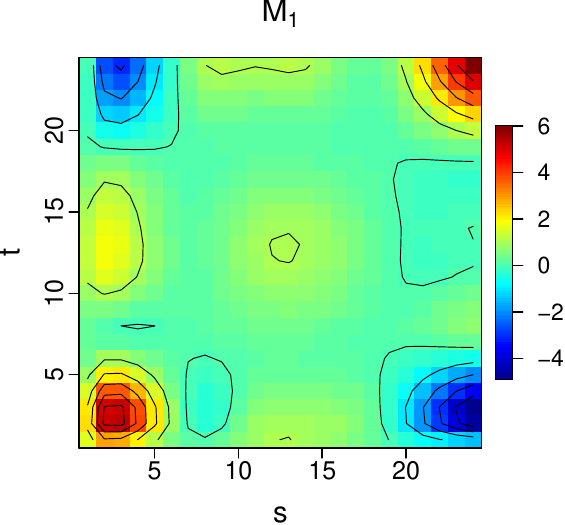}
\quad
\includegraphics[width=5.05cm]{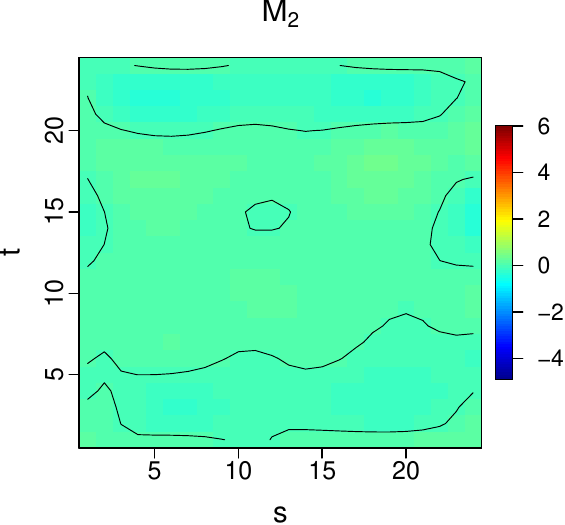}
\quad
\includegraphics[width=5.05cm]{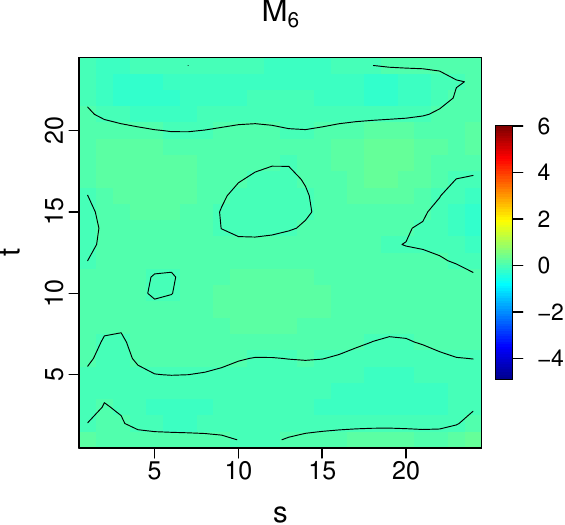}
\caption{Heatmap representation of the estimated spatial spillover surface $\rho(t,u)$ by M$_1$ (left panel), M$_2$ (middle panel), and M$_6$ (right panel). The horizontal axis $u$ denotes the neighboring-response month entering the spatial lag, and the vertical axis $t$ denotes the local PM$_{2.5}$ response month. The estimated spatial spillover surface for the quantile method is obtained at $\tau=0.5$. The same color scale is used across all panels, allowing direct comparison of the magnitude and temporal structure of the estimated spatial-dependence surfaces.}\label{fig:Fig_6}
\end{figure}

Although Figures~\ref{fig:Fig_5} and~\ref{fig:Fig_6} provide visual summaries of the estimated coefficient surfaces, uncertainty for the functional parameters can, in principle, be quantified directly from the asymptotic distribution in Theorem~\ref{th:1}. Let $\widehat{\bm\theta}_{\beta,\tau}=\mathrm{vec} (\widehat{\bm b}^{(\tau)})$ and $\widehat{\bm\theta}_{\rho,\tau}=\mathrm{vec}(\widehat{\bm\rho}^{(\tau)})$ denote the estimated spline coefficient vectors, and let $\widehat{\bm\Sigma}_{\beta,\tau}$ and $\widehat{\bm\Sigma}_{\rho,\tau}$ be the corresponding plug-in blocks of the asymptotic covariance $\bm\Sigma_{\theta,\tau}$ in Theorem~\ref{th:1}, whose explicit sandwich form is given in the supplementary proof. For fixed evaluation points, define $\bm a_\beta(t,s)=\bm\psi(s)\otimes\bm\phi(t)$ and $\bm a_\rho(t,u)=\bm\phi(u)\otimes\bm\phi(t)$, where $\bm\phi(\cdot)$ and $\bm\psi(\cdot)$ are the response- and predictor-domain B-spline basis vectors. Then $\widehat{\mathrm{se}}\{\widehat\beta_\tau(t,s)\} = \{n^{-1}\bm a_\beta(t,s)^\top \widehat{\bm\Sigma}_{\beta,\tau} \bm a_\beta(t,s)\}^{1/2}$, with an analogous expression for $\widehat\rho_\tau(t,u)$, where the factor $n^{-1}$ appears because $\bm\Sigma_{\theta,\tau}$ is the covariance of the $\sqrt n$-scaled estimator. Pointwise normal intervals follow directly, while simultaneous bands may be constructed using a multiplier or spatially structured bootstrap. The online supplementary material reports a finite-sample coverage assessment for both coefficient surfaces; since plug-in standard-error surfaces are not yet part of the accompanying software implementation, that assessment estimates the pointwise sampling variability from the Monte Carlo replications.

\section{Conclusion}\label{sec:6}

We proposed a SFoF-QR framework for spatially indexed functional data, aimed at modeling the full conditional distribution of a functional response rather than only its conditional mean. The model extends functional SAR ideas to quantiles by allowing the conditional $\tau$-th quantile of $\Y_i$ to depend on both a functional predictor $\X_i$ and a spatially lagged functional response $\widetilde{\Y}_i$ through two bivariate coefficient surfaces, $\beta_\tau$ and $\rho_\tau$. Estimation combines tensor-product B-spline representations with quadratic roughness penalties and a two-stage instrumental-variable strategy to address the endogeneity induced by the spatial lag. This construction avoids FPCA truncation, accommodates non-separable surfaces, and remains feasible for dense functional grids and large spatial samples.

Across both simulation and empirical studies, two conclusions are consistent. First, explicitly modeling spatial dependence is essential: methods that ignore the functional SAR component deteriorate rapidly as spatial dependence strengthens. Second, penalized spline surface estimation can be substantially more reliable than FPCA-based pipelines for recovering complex bivariate regression and spatial-kernel surfaces, especially when tail behavior is of interest and variance-explained truncation discards regression-relevant structure. In the air-quality application, SFoF-QR yields competitive median predictions and improves materially across quantiles, while also revealing pronounced state-dependent volatility and quantile-varying covariate and spillover effects that are not accessible under mean regression. A limitation of the real-data application is that an independent external test sample was not available; consequently, the empirical comparison in the application is based on in-sample fitted performance, while out-of-sample predictive behavior is evaluated through the simulation study.

Several limitations suggest clear directions for future work. The approach assumes a fixed prespecified weight matrix $\bm W$, so sensitivity to neighborhood choice, scale, and normalization warrants further investigation and possibly data-adaptive estimation of $\bm W$. Computation can become demanding as basis dimension, grid resolution, and the number of fitted quantiles increase; scalable optimization and warm-start strategies would broaden applicability. Since estimation is performed separately for each $\tau$, quantile crossing may occur in finite samples. However, no crossing was observed on the evaluation grids in any of the simulation settings or in the air-quality application; in particular, $\widehat Q_{0.025,i}(t_r)\leq\widehat Q_{0.975,i}(t_r)$ for all evaluated curves and grid points, so no rearrangement was required when constructing the 95\% prediction intervals. Joint estimation with noncrossing constraints and information sharing across $\tau$ remains a natural extension. Although quantile regression provides robustness to outcome contamination, leverage and spatial propagation through $\widetilde{\Y}_i$ can still affect both stages; developing a fully robust SFoF-QR procedure with robust first-stage IV/quantile components remains an important goal. Finally, the current model is linear with smooth coefficient surfaces; extensions to multiple functional/scalar predictors with structured penalties, to alternative dependence structures (e.g., spatial Durbin or spatial error analogues), and to spatiotemporal dynamics would further enhance flexibility. On the inferential side, building formal testing tools for features of $\beta_\tau$ and $\rho_\tau$ (e.g., separability, spatial neutrality, or regime changes across quantiles) would complement the asymptotic distribution theory developed here.

\clearpage

\setcounter{section}{0}
\setcounter{subsection}{0}
\setcounter{subsubsection}{0}
\setcounter{equation}{0}
\setcounter{table}{0}
\setcounter{figure}{0}
\setcounter{algocf}{0}
\setcounter{assumption}{0}

\renewcommand{\thesection}{S\arabic{section}}
\renewcommand{\thesubsection}{\thesection.\arabic{subsection}}
\renewcommand{\thesubsubsection}{\thesubsection.\arabic{subsubsection}}
\renewcommand{\thetable}{S\arabic{table}}
\renewcommand{\thefigure}{S\arabic{figure}}
\renewcommand{\thealgocf}{S\arabic{algocf}}
\numberwithin{equation}{section}

\renewcommand{\theHsection}{supp.\arabic{section}}
\renewcommand{\theHsubsection}{supp.\arabic{section}.\arabic{subsection}}
\renewcommand{\theHsubsubsection}{supp.\arabic{section}.\arabic{subsection}.\arabic{subsubsection}}
\renewcommand{\theHequation}{supp.\arabic{section}.\arabic{equation}}
\renewcommand{\theHtable}{supp.\arabic{table}}
\renewcommand{\theHfigure}{supp.\arabic{figure}}
\makeatletter
\@ifundefined{theHalgocf}
  {\newcommand{\theHalgocf}{supp.\arabic{algocf}}}
  {\renewcommand{\theHalgocf}{supp.\arabic{algocf}}}
\@ifundefined{theHassumption}
  {\newcommand{\theHassumption}{supp.\arabic{assumption}}}
  {\renewcommand{\theHassumption}{supp.\arabic{assumption}}}
\makeatother

\captionsetup[figure]{style=italic,format=hang,singlelinecheck=true}
\captionsetup[table]{style=italic,format=hang,singlelinecheck=true}

\begin{center}
{\Large\bfseries Supplementary material on\\[0.4em]
``Spatial function-on-function quantile regression''\par}
\end{center}
\vspace{1.5em}

\spacingset{1.5} 

This supplementary material provides the technical proofs, computational details, comprehensive numerical evidence, and additional graphical results supporting the main manuscript. First, we discuss instrument selection, identification, and implementation of the two-stage instrumental-variable quantile estimator. Second, we provide the formal proof of Theorem~3.1 and the complete Monte Carlo results across the considered sample sizes, spatial-dependence strengths, and error distributions. Finally, we present additional graphical results from the air-quality application, including fitted-curve comparisons, the complete quantile-specific fitted trajectories, and the full comparison of the estimated regression coefficient surfaces.

\section{Identification in the two-stage IV quantile formulation}\label{subsec:ivqr-identification}

This subsection clarifies the identification conditions underlying the two-stage estimator. After projection onto the tensor-product B-spline spaces, the SFoF-QR model becomes a finite-dimensional IV quantile problem. Let $\bm d_{ir}$ denote the structural Stage~2 design row corresponding to location $i$ and response grid point $t_r$, where $\bm d_{ir}$ contains both the endogenous lag-score block and the exogenous predictor-score block. Let $\bm Z_i=\{\bm\gamma_i^{(0)\top},\ldots,\bm\gamma_i^{(H-1)\top}\}^\top$ be the instrument-score vector generated from $\X_i,\bm W\X_i,\ldots,\bm W^{H-1}\X_i$. For the pseudo-true spline coefficient vector $\bm\theta_{\tau,K}^{0}$ of Assumption~\ref{ass:A2}(i), define the structural quantile error $e_{ir,\tau}(\bm\theta_{\tau,K}^{0}) = \Y_i(t_r)-\bm d_{ir}^\top\bm\theta_{\tau,K}^{0}$. The identifying restriction is the structural IV quantile condition $\Pr\{e_{ir,\tau}(\bm\theta_{\tau, K}^0)\le0\mid\bm Z_i\}=\tau$, $r=1,\ldots,R$, or, equivalently, $\E\left[\bm Z_i\{\tau-\mathbf{1}(e_{ir,\tau}(\bm\theta_{\tau, K}^0)<0)\} \right]=\bm0$. This condition is stronger than requiring the instruments to be correlated with the spatial lag: it requires the instruments to be exogenous for the structural quantile error.

The relevance condition requires that the instruments explain non-negligible variation in the endogenous lag-score vector. Equivalently, the conditional quantile projection of the lag-score block on $\bm Z_i$ must be nondegenerate. The rank condition requires the IV-projected structural design to have full column rank. In matrix form, a sufficient local identification condition is $\lambda_{\min}\left[\E\{f_{e,\tau}(0\mid\bm Z_i) \widehat{\bm d}_{ir}\widehat{\bm d}_{ir}^{\top}\} \right]>0$, where $\widehat{\bm d}_{ir}$ is the Stage~2 design row after replacing the endogenous lag-score block by its first-stage fitted value, and $f_{e,\tau}(0\mid\bm Z_i)$ is the conditional density of the structural quantile error at zero. This condition is the quantile analogue of the full-rank design condition in linear IV regression, with the conditional density weighting arising from the local expansion of the quantile score.

The first-stage quantile level is chosen to match the second-stage quantile level because the structural quantile restriction is itself indexed by $\tau$. Thus, for each target $\tau$, the first stage estimates the $\tau$-specific reduced-form component of the endogenous lag scores that is relevant for the $\tau$-specific structural moment equation. A first stage at a different quantile level would generally target a different reduced-form feature of the endogenous lag process and would not correspond to the structural moment condition at the target quantile. Therefore, the proposed algorithm is run separately for each $\tau$.

Finally, the first-stage fitted lag is a generated regressor. The asymptotic distribution therefore contains not only the second-stage quantile score but also the first-stage estimation error. In the proof of Theorem~3.1, this effect appears through the local perturbation term $\widehat{\triangle}_0 = (a-1)\sqrt n(\widehat{\bm\Xi}-\bm\Xi) + \sqrt n(\widehat{\bm\Theta}-\bm\Theta)\widetilde{\bm\rho}^{(\tau)}$, where $\widehat{\bm\Xi}$ and $\widehat{\bm\Theta}$ are the first-stage reduced-form estimators. Substituting the asymptotic linear representations of these first-stage estimators into the second-stage Bahadur expansion yields the final limiting distribution. Hence the covariance kernel in Theorem~3.1 incorporates the generated-regressor uncertainty from the first stage, rather than treating the fitted spatial lag as fixed.

\section{Instrument selection for the spatial lag}

The spatially lagged response term induces endogeneity in $\int_0^1 \widetilde{\Y}_i(u) \rho_\tau(t,u) du$. Consistent estimation therefore hinges on instruments that are (i) relevant for the endogenous regressors that enter the Stage~2 design and (ii) exogenous with respect to the structural quantile restrictions. In our SFoF-QR, the endogenous part enters via the lag-score vector $\widehat{\widetilde{\bm{\phi}}}_i$ (the B-spline coefficients of $\widetilde{\Y}_i(\cdot)$ used in the block $\widehat{\widetilde{\bm{\phi}}}_i^\top \otimes \bm{\phi}(t)^\top$). We construct instruments by propagating exogenous functional predictors through the spatial graph and the same spline machinery used in Stage~2.

While increasing $H$ enlarges the instrument set, higher-order lags typically suffer attenuating correlation with the endogenous lag scores as $h$ grows (heuristically governed by the spectral radius of $\bm{W}$). In practice, we recommend a small inventory $H = 1$ or $H \in \{1,2\}$, unless the empirical first-stage fit clearly benefits from $h > 2$. Relevance improves when instruments mirror the structure of the endogenous block. Concretely, Stage~1 regresses the vectorized lag $\mathrm{vec} \big(\widetilde{Y}~ \mathrm{E}_Y \big)$ on $\widetilde{\bm{\psi}}^{(0)} \otimes \mathrm{E}_Y, \widetilde{\bm{\psi}}^{(1)} \otimes \mathrm{E}_Y, \ldots$, where $\mathrm{E}_Y$ is the $t$-basis evaluation matrix. This alignment strengthens the link to the ultimate endogenous regressor
$\widehat{\widetilde{\bm{\phi}}}_i^\top \otimes \bm{\phi}(t)^\top$.

In the numerical studies, we fixed the instrument set to $\{\X_i,(\bm W\X)_i,(\bm W^2\X)_i\}$, corresponding to spatial lags up to order two.

\section{Computational details for the smoothed Stage~2 optimization}\label{subsec:stage2-optimization}

Let $\bm y=\operatorname{vec}(\Y)$ denote the response vector stacked by location and response-grid point, and let $\widehat{\bm\Pi}$ denote the Stage~2 design matrix constructed from the functional-intercept block, the local predictor-score block, and the instrumented spatial-lag-score block. In accordance with the numerical implementation, the coefficient vector is ordered as $\bm\theta = \left\{\bm b_0^\top, \operatorname{vec}\bigl(\bm b^{(\tau)}\bigr)^\top, \operatorname{vec}\bigl(\bm\rho^{(\tau)}\bigr)^\top \right\}^{\top}$. 

For fixed tuning parameters $(\lambda_\beta,\lambda_\rho)$, define the computational penalty matrix by
\begin{equation}\label{eq:supp-computational-penalty}
\bm{\mathcal P}_{\lambda_\beta,\lambda_\rho,\varepsilon} = \operatorname{blockdiag} \left\{\bm 0_{K_0\times K_0}, \lambda_\beta\bm R_\beta, \lambda_\rho\bm R_\rho \right\} + \varepsilon\bm I,
\end{equation}
where $\bm R_\beta$ and $\bm R_\rho$ are the tensor-product second-derivative roughness-penalty matrices for the regression and spatial coefficient surfaces, respectively. The leading block is unpenalized by the roughness penalty, whereas the small ridge term $\varepsilon\bm I$ is added to the complete penalty matrix solely for numerical stabilization. In all numerical studies, we used $\varepsilon=10^{-8}$.

Let $\widehat{\bm\pi}_a^\top$ denote row $a$ of $\widehat{\bm\Pi}$, and define $u_a(\bm\theta) = y_a-\widehat{\bm\pi}_a^\top\bm\theta$, $a=1,\ldots,N$ and $N=nR$. The computational Stage~2 objective is
\begin{equation} \label{eq:supp-lbfgsb-objective}
Q_{\tau,\alpha}(\bm\theta) = \frac{1}{n}\sum_{a=1}^{N} \ell_{\tau,\alpha}\bigl\{u_a(\bm\theta)\bigr\} + \frac{1}{2} \bm\theta^\top \bm{\mathcal P}_{\lambda_\beta,\lambda_\rho,\varepsilon} \bm\theta,
\end{equation}
where the differentiable approximation to the check loss is $\ell_{\tau,\alpha}(u) = \tau u + \alpha\log \left\{1+\exp\left(-\frac{u}{\alpha}\right) \right\}$, $\alpha>0$. Its derivative with respect to the residual is $\psi_{\tau,\alpha}(u) = \ell_{\tau,\alpha}'(u) = \tau - \left\{1+\exp\left(\frac{u}{\alpha}\right) \right\}^{-1}$. Consequently, the analytic gradient supplied to the optimizer is
\begin{equation}\label{eq:supp-lbfgsb-gradient}
\nabla_{\bm\theta}Q_{\tau,\alpha}(\bm\theta) = -\frac{1}{n}\widehat{\bm\Pi}^{\top} \bm\psi_{\tau,\alpha} \bigl\{\bm u(\bm\theta)\bigr\} + \bm{\mathcal P}_{\lambda_\beta,\lambda_\rho,\varepsilon} \bm\theta,
\end{equation}
where $\bm u(\bm\theta) = \left\{u_1(\bm\theta),\ldots,u_N(\bm\theta) \right\}^{\top}$ and $\bm\psi_{\tau,\alpha} \bigl\{\bm u(\bm\theta)\bigr\} = \left[ \psi_{\tau,\alpha}\bigl\{u_1(\bm\theta)\bigr\}, \ldots, \psi_{\tau,\alpha}\bigl\{u_N(\bm\theta)\bigr\} \right]^{\top}$.

The coefficient vector is initialized at $\bm\theta^{(0)}=\bm 0$. The objective function in \eqref{eq:supp-lbfgsb-objective} and the analytic gradient in \eqref{eq:supp-lbfgsb-gradient} are passed to the L-BFGS-B implementation in the \texttt{optim} function in \textsf{R}. In the reported numerical studies, the control settings were \texttt{maxit}$=8000$, \texttt{factr}$=10^7$, and \texttt{trace}$=0$. No lower or upper coefficient bounds were imposed. The optimization terminates according to the internal L-BFGS-B convergence criterion or when the maximum permitted number of iterations is reached.

For smoothing-parameter selection, candidate values of $(\lambda_\beta,\lambda_\rho)$ are considered over a Cartesian grid. For each candidate pair, the model is fitted by minimizing \eqref{eq:supp-lbfgsb-objective}. The fitted model is then evaluated using
\begin{equation}\label{eq:supp-bic}
\operatorname{BIC}(\lambda_\beta,\lambda_\rho) = \log \left[ \frac{1}{N} \sum_{a=1}^{N} \ell_\tau(\widehat u_a) \right] + \frac{\log N}{N} \left(K_0+K_yK_x+K_y^2 \right),
\end{equation}
where $\ell_\tau(u) = u\left\{\tau-\mathbf{1}(u<0) \right\}$ is the exact check loss and $\widehat u_a$ is the fitted residual corresponding to observation $a$. Thus, the differentiable approximation is used to facilitate numerical optimization, whereas the tuning criterion is evaluated using the exact quantile loss. The candidate pair minimizing \eqref{eq:supp-bic} is selected and used in the final model fit.

\begin{algorithm}[!ht]
\caption{L-BFGS-B optimization for the smoothed penalized Stage~2 IV quantile criterion}\label{alg:lbfgsb}
\DontPrintSemicolon

\KwIn{Response vector $\bm y$; IV-projected design matrix $\widehat{\bm\Pi}$; penalty matrix $\bm{\mathcal P}_{\lambda_\beta,\lambda_\rho,\varepsilon}$;
quantile level $\tau$; smoothing constant $\alpha$; L-BFGS-B control parameters.}

\KwOut{Estimated coefficient vector $\widehat{\bm\theta}$.}

Set the initial coefficient vector to $\bm\theta^{(0)}=\bm 0$.\;

For a candidate coefficient vector $\bm\theta$, compute the residuals $u_a(\bm\theta)=y_a-\widehat{\bm\pi}_a^\top\bm\theta$, $a=1,\ldots,N$.\;

Evaluate the smoothed penalized objective $Q_{\tau,\alpha}(\bm\theta)$ in \eqref{eq:supp-lbfgsb-objective}.\;

Evaluate the analytic gradient $\nabla_{\bm\theta}Q_{\tau,\alpha}(\bm\theta)$ in \eqref{eq:supp-lbfgsb-gradient}.\;

Apply the L-BFGS-B algorithm using the initial value $\bm\theta^{(0)}$, the objective function, and the analytic gradient.\;

Set the optimizer solution equal to $\widehat{\bm\theta}$.\;

Extract the coefficient blocks corresponding to $\widehat{\bm b}_0$, $\widehat{\bm b}^{(\tau)}$, and $\widehat{\bm\rho}^{(\tau)}$.\;

Reconstruct $\widehat\beta_\tau(t,s)$ and $\widehat\rho_\tau(t,u)$ using their tensor-product B-spline representations.\;

Return $\widehat{\bm\theta}$.\;
\end{algorithm}

\section{Theoretical proofs for Theorem 3.1}

We now establish the fundamental large-$n$ behavior of the estimators. The required assumptions for the consistency and asymptotic normality of the estimators are collected below.

\begin{assumption}\label{ass:A1}
Let $\bm W=(w_{ij})_{1\le i,j\le n}$ denote the row-normalized spatial weight matrix. Assume:
\begin{enumerate}[(i)]
\item $w_{ii}=0$ for all $i$.
\item $\sum_{j=1}^n w_{ij}=1$ for every $i$.
\item There exists $a_w<\infty$ such that 
      $\|\bm W\|_{\infty}\le a_w$.
\item Assume that $(\mathcal R_{\rho_\tau,\bm W}\bm f)_i(t) = \sum_{j=1}^n w_{ij} \int_{\mathcal I_Y} f_j(u)\rho_\tau(t,u)\,du$ defines a bounded linear operator on $(\mathcal{L}^p)^n[0,1]$, and that $\|\rho_\tau\|_{\infty} < \|\bm W\|_{\infty}^{-1}$. Since $\|\mathcal R_{\rho_\tau,\bm W}\|_{\mathrm{op}}\le \|\bm{W}\|_\infty\|\rho_\tau\|_\infty<1$, the operator $\mathbb I-\mathcal R_{\rho_\tau,\bm W}$ is bijective and admits the bounded inverse $(\mathbb I-\mathcal R_{\rho_\tau,\bm W})^{-1} =\sum_{r=0}^{\infty}\mathcal R_{\rho_\tau,\bm W}^r$, where the series converges in operator norm.   
\end{enumerate}
\end{assumption}

\begin{assumption}\label{ass:A2}
Let $p^*\ge 1$ be the degree of the B-splines and $q^*=2$ the order of derivative penalization. The spline dimensions $K_y$ and $K_x$ are fixed and do not depend on $n$. 
\begin{enumerate}[(i)]
\item There exists a unique $\bm\theta_{\tau,K}^{0}\in\mathbb R^{K_y^2+K_yK_x}$ (augmented by the intercept block) satisfying the population IV quantile moment condition of the working spline model,
\begin{equation*}
\E\bigl[\bm\Gamma_{i,r} \varphi_\tau\{\Y_i(t_r)-\bm d_{ir}^\top\bm\theta_{\tau,K}^{0}\}\bigr] =\bm 0,\qquad r=1,\ldots,R .
\end{equation*}
The induced sieve targets are $\rho_{\tau,K}(t,u)=\{\bm\phi^\top(u)\otimes\bm\phi^\top(t)\} \widetilde{\bm\rho}^{(\tau),0}$ and $\beta_{\tau,K}(t,s)=\{\bm\psi^\top(s)\otimes\bm\phi^\top(t)\} \widetilde{\bm b}^{(\tau),0}$. The approximation errors $a_K=\|\rho_\tau-\rho_{\tau,K}\|_\infty+\|\beta_\tau-\beta_{\tau,K}\|_\infty$ are finite but are not required to vanish.
\item The block penalty matrix $\bm{\mathcal P}_{\mathrm{surf},\lambda_\rho,\lambda_\beta}$ is constructed from the Gram matrices of the integrated squared $q^*$th B-spline derivatives, as in Section~3 of the main text.
\end{enumerate}
\end{assumption}

\begin{assumption}\label{ass:A3}
As $n\to\infty$, the smoothing parameters satisfy $\lambda_\rho\vee\lambda_\beta=o(n^{-1/2})$; consequently $\sqrt n\,\bm{\mathcal P}_{\mathrm{surf},\lambda_\rho,\lambda_\beta}\, \bm\theta=o(1)$ uniformly over compact sets of $\bm\theta$. In addition, the smoothing constant of the differentiable check-loss approximation satisfies $\alpha_n\downarrow 0$ with $\sqrt n\,\alpha_n\to 0$.
\end{assumption}

\begin{assumption}\label{ass:A4}
Let $\bm{\gamma}_{i}^{(h)} \in \mathbb{R}^{K_x}$ be the B-spline score vector of the $h$th instrument function $\X_i^{(h)}(s) = (\bm W^h \X)_i(s)$, $h = 0, 1, \ldots, H-1$, as defined in Section 3. Define the stacked instrument-score vector and matrix by
\begin{equation*}
\bm \Gamma_i = (\gamma_i^{{(0)}\top}, \gamma_i^{{(1)}\top}, \ldots, \gamma_i^{{(H-1)}\top})^\top, \qquad \bm \Gamma = (\bm \Gamma_1, \ldots, \bm \Gamma_n)^\top.
\end{equation*}
Consider the Stage-1 reduced form for the lag score and the implied reduced form for the vectorized response:
\begin{equation*}
Q_\tau \{\widetilde{\Y} \vert \bm \Gamma\} = \bm \Gamma \bm \Theta, \qquad Q_\tau \{{\rm vec}(\Y) \vert \bm \Gamma \} = \bm \Gamma \bm \Xi,
\end{equation*}
where $\bm \Theta$ and $\bm \Xi$ are the corresponding reduced-form coefficient matrices/vectors at the quantile level $\tau$. Define the reduced-form error terms
\begin{equation*}
\bm \nu = {\rm vec}(\Y) - \bm \Gamma \bm \Xi, \qquad \bm \zeta = {\rm vec}(\widetilde{\Y}) - \bm \Gamma \bm \Theta.
\end{equation*}
and let $\bm \nu_i, \bm \zeta_i, \bm \Gamma_i$ denote their $i$th block/rows. Then, the sequence of triplets $\{(\bm \nu_i, \bm \zeta_i, \bm \Gamma_i) \}_{i=1}^n$ consists of i.i.d. random elements.
\end{assumption}

\begin{assumption}\label{ass:A5}
\begin{enumerate}[(i)]
\item The third moment of the instrument-score vector is finite, $\mathbb{E}(\Vert \bm \Gamma_i \Vert^3) < \infty$, $\forall i$.
\item Let $\bm \theta^{(\tau)}$ be the Stage-2 structural parameter vector collecting the tensor product B-spline coefficients of $\rho_\tau (t,u)$ and $\beta_\tau (t,s)$, and let $\bm A (\bm \Theta)$ denote the linear mapping induced by substituting the Stage-1 reduced form $Q_\tau \{\widetilde{\Y} \vert \bm \Gamma \} = \bm \Gamma \bm \Theta$ into the finite-dimensional structural SFoF-QR representation. Assume that $\bm A (\bm \Theta )$ has full column rank, ensuring identifiability of $\bm \theta^{(\tau)}$.
\item Let $g_\tau^*(\cdot \mid \bm \Gamma_i)$ and $h_\tau^*(\cdot \mid \bm \Gamma_i)$ be the conditional density functions of $\nu_{i,r}$ and $\zeta_{i,r}$, respectively. These conditional densities are Lipschitz continuous in their arguments for all $\bm \Gamma_i$. In addition, the matrices $\bm Q_{0(\nu)} = \E \bigl\{ g_\tau^*(0 \mid \bm \Gamma_i ) \bm \Gamma_i \bm \Gamma_i^\top \bigr\}$ and $\bm Q_{0(\zeta)} = \E \bigl\{ h_\tau^*(0 \mid \bm \Gamma_i ) \bm \Gamma_i \bm \Gamma_i^\top \bigr\}$ are finite and positive definite.
\item $\E \{\varphi_\tau (\nu_{i,r}) \mid \bm \Gamma_i \} = \E \{\varphi_\tau (\zeta_{i,r}) \mid \bm \Gamma_i \} = 0$ for all $r=1,\ldots,R$.
\end{enumerate}
\end{assumption}

Assumption~\ref{ass:A1} imposes the standard regularity conditions required for a well-defined SAR structure. In particular, it guarantees identifiability of the spatial lag component and rules out degeneracies such as singular or explosive behavior of the associated spatial operator, thereby ensuring a stable and well-posed estimation problem \citep[cf.][]{Kelejian1998, Lee2004, BSGARC2025}. Assumption~\ref{ass:A2} and Assumption~\ref{ass:A3} specify the working spline spaces, the pseudo-true centering parameter, and the negligibility conditions on the penalty and the loss-smoothing constant. These conditions justify representing the underlying functions through suitable basis expansions and incorporating roughness penalties, which together provide regularization, mitigate overfitting, and enhance numerical stability of the resulting estimators. The spline-approximation background relevant for the sieve interpretation in Remark~\ref{rem:sieve-interp} is classical: for a univariate $f\in\mathcal C^{p^*+1}$ there exist spline coefficients such that $\sup_t\vert f(t)-\sum_{k=1}^K f_k \upsilon_k(t)\vert =\mathcal O(K^{-(p^*+1)})$ \citep{Claeskens2009,deBoor2001}, and analogous bounds hold for tensor-product spline systems \citep{Mobler2009,Belloni2015,Lyche2018}. These rates are used only in Remark~\ref{rem:sieve-interp}; they play no role in the fixed-dimension Theorem~3.1. Finally, Assumption~\ref{ass:A4} and Assumption~\ref{ass:A5} provide the additional regularity required to derive consistency and asymptotic normality of the estimators constructed in the finite-dimensional approximation space using the two-stage procedure of \cite{Kim_Muller2004}.

\begin{remark}\label{rem:sieve-interp}
If the spline dimensions are increased along an auxiliary sequence with $K_y\to\infty$ and $K_x\to\infty$, classical Jackson-type inequalities for tensor-product B-splines imply $a_K = \mathcal O\left\{ K_y^{-(p^*+1)} + K_x^{-(p^*+1)} \right\} \longrightarrow0$. This is an approximation-theoretic statement only. Replacing the fixed-spline centering functions $\rho_{\tau,K}$ and $\beta_{\tau,K}$ by the unrestricted true surfaces in a stochastic limit would require a separate growing-dimensional theory, including uniform Bahadur expansions and control of the increasing parameter dimension. No such result is claimed here.
\end{remark}

\begin{proof}[Proof of Theorem 3.1]
Throughout, $K_y$, $K_x$, and $H$ are fixed, so the dimension $D$ of the instrument-score design vector $\bm\Gamma_{i,r}$ (and of all parameter blocks below, including the unpenalized intercept block) is fixed and does not depend on $n$. The estimator is centered at the pseudo-true working-model parameter $\bm\theta_{\tau,K}^{0}$ of Assumption~\ref{ass:A2}(i), and the reconstructed surfaces are centered at the induced sieve targets $\rho_{\tau,K}$ and $\beta_{\tau,K}$. No claim is made about the distance between the sieve targets and the true surfaces; that distance is a fixed deterministic quantity under Assumption~\ref{ass:A2}(i) (cf.\ Remark~\ref{rem:sieve-interp}).

We now work with the spline-projected model. For each $i\in\{1,\dots,n\}$ and each grid index $r\in\{1,\dots,R\}$, let $\nu_{i,r}$ denote the reduced-form residual as defined in Assumption~\ref{ass:A4}, and let $\bm{\Gamma}_{i,r}$ be the corresponding $D\times 1$ design vector stacking the spatial and non-spatial components and instruments. The quantile score at level $\tau$ is $\varphi_\tau(c) = \tau - \mathbf{1}(c<0)$. For $e\in\mathbb{R}^D$, define
\begin{equation*}
m^*(\omega_{i,r},e) = \bm{\Gamma}_{i,r} \varphi_\tau\bigl(a \nu_{i,r} - \bm{\Gamma}_{i,r}^\top e\bigr), \qquad \omega_{i,r} = (\nu_{i,r}, \bm{\Gamma}_{i,r}^\top)^\top,
\end{equation*}
with $a>0$ a fixed constant as in \cite{Kim_Muller2004}. Set
\begin{equation}\label{eq:def-Mstar}
\mathcal{M}^*(e) = n^{-1/2}\sum_{i=1}^n\sum_{r=1}^R m^*(\omega_{i,r},e), \qquad V^*(e) = \mathcal{M}^*(e) - \mathbb{E}\{\mathcal{M}^*(e)\}.
\end{equation}

We recall the following technical conditions from \cite{Andrews1994}, adapted to our notation.
\begin{description}
\item[C$_1$] There exists an envelope function $\widetilde{\mathcal{M}}(\omega_{i,r})$ such that the class $\mathcal{F} = \bigl\{ m^*(\cdot,e) : e\in\mathbb{R}^D\bigr\}$ satisfies Pollard's entropy condition relative to $\widetilde{\mathcal{M}}$, i.e. its uniform covering numbers obey the polynomial bound required in Theorem~2 of \cite{Andrews1994}.
\item[C$_2$] There exists $\delta>2$ such that the uniform moment condition
\begin{equation*}
\sup_{n} \frac{1}{n}\sum_{i=1}^n\sum_{r=1}^R
\mathbb{E}\left\{\widetilde{\mathcal{M}}(\omega_{i,r})^\delta\right\}
< \infty
\end{equation*}
holds.
\end{description}

\begin{lemma}\label{lem:equicontinuity}
Under Assumptions~\ref{ass:A4} and \ref{ass:A5}, conditions \textup{C$_1$} and \textup{C$_2$} hold with envelope $\widetilde{\mathcal{M}}(\omega_{i,r}) := \max(1,\|\bm{\Gamma}_{i,r}\|)$, and, consequently, for any fixed $N^*>0$,
\begin{equation}\label{eq:A1f-new}
\sup_{\|e_1-e_2\|\le N^*}\bigl\|V^*(e_1)-V^*(e_2)\bigr\| = o_p(1),
\end{equation}
where $N^* = n^{-1/2}N$ and $\|\cdot\|$ denotes the Euclidean norm on $\mathbb{R}^D$.
\end{lemma}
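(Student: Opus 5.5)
The proof has two parts. First we verify the entropy condition C$_1$ and the moment condition C$_2$ for the class $\mathcal F=\{m^*(\cdot,e):e\in\mathbb R^D\}$ with envelope $\widetilde{\mathcal M}(\omega_{i,r})=\max(1,\|\bm\Gamma_{i,r}\|)$. Then we invoke Theorem~2 of \citet{Andrews1994}, which gives stochastic equicontinuity of the centred empirical process $V^*(\cdot)$ in \eqref{eq:def-Mstar}. Throughout, $D$, $R$, $K_y$, $K_x$ and $H$ are fixed, and the triplets $\{(\bm\nu_i,\bm\zeta_i,\bm\Gamma_i)\}$ are i.i.d. by Assumption~\ref{ass:A4}. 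For each fixed $r$, the summands $m^*(\omega_{i,r},e)$ are therefore i.i.d. across $i$. Since $\sum_r$ is a fixed finite sum, it suffices to prove equicontinuity coordinatewise in $r$ and in each of the $D$ components, and then sum.

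\emph{Envelope and moments (C$_2$).} Since $|\varphi_\tau(c)|\le\max(\tau,1-\tau)\le1$, each component satisfies $\|m^*(\omega_{i,r},e)\|\le\|\bm\Gamma_{i,r}\|\le\widetilde{\mathcal M}(\omega_{i,r})$ uniformly in $e$, so $\widetilde{\mathcal M}$ is a valid envelope. Take $\delta=3>2$. Then $\E\widetilde{\mathcal M}^3\le1+\E\|\bm\Gamma_{i,r}\|^3$, which is finite by Assumption~\ref{ass:A5}(i), since $\bm\Gamma_{i,r}$ is a fixed linear function of $\bm\Gamma_i$ and the row $\bm d_{ir}$, which has bounded spline evaluations. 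The bound holds uniformly in $n$ because of the identical distribution, and this gives C$_2$.

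\emph{Entropy (C$_1$).} Write the $k$th component as $\Gamma_{i,r,k}\{\tau-\mathbf 1(a\nu_{i,r}-\bm\Gamma_{i,r}^\top e<0)\}$. The indicator class $\{\mathbf 1(a\nu-\bm\Gamma^\top e<0):e\in\mathbb R^D\}$ consists of indicators of half-spaces in the $(\nu,\bm\Gamma)$ coordinates, so it is a VC class of index at most $D+2$. Multiplying by the single fixed function $\Gamma_{k}$ and adding the constant $\tau\Gamma_k$ preserves the polynomial uniform covering-number bound relative to the envelope, by standard permanence (Pollard) results. This is exactly the type~I/II class structure that \citet{Andrews1994} uses to establish Pollard's entropy condition, so C$_1$ holds.

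\emph{Equicontinuity.} With C$_1$, C$_2$ and independence, Theorem~2 of \citet{Andrews1994} gives stochastic equicontinuity of $V^*$ with respect to the $\mathcal L^2$ pseudometric $\varrho(e_1,e_2)=\{\E\|m^*(\omega,e_1)-m^*(\omega,e_2)\|^2\}^{1/2}$. The main step, and the expected obstacle, is translating closeness in $\|e_1-e_2\|\le N^*=n^{-1/2}N$ into $\varrho(e_1,e_2)\to0$ uniformly. For this, condition on $\bm\Gamma_{i,r}$ and bound
\[
\E\bigl\{\|\bm\Gamma_{i,r}\|^2\,\bigl|\mathbf 1(a\nu<\bm\Gamma^\top e_1)-\mathbf 1(a\nu<\bm\Gamma^\top e_2)\bigr|\bigr\}\le \E\Bigl\{\|\bm\Gamma_{i,r}\|^2\sup_c g_\tau^*(c\mid\bm\Gamma_i)\,a^{-1}\|\bm\Gamma_{i,r}\|\Bigr\}\|e_1-e_2\|,
\]
using the bounded conditional density of $\nu_{i,r}$; boundedness follows from Lipschitz continuity in Assumption~\ref{ass:A5}(iii) together with integrability. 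The third moment in Assumption~\ref{ass:A5}(i) makes the right-hand side $O(\|e_1-e_2\|)=O(n^{-1/2})\to0$. Hence $\varrho$-balls contain the shrinking Euclidean neighbourhoods, and equicontinuity yields $\sup_{\|e_1-e_2\|\le N^*}\|V^*(e_1)-V^*(e_2)\|=o_p(1)$, which is \eqref{eq:A1f-new}.

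If one insists on a compact index set for Andrews' theorem, a localization step is needed. Restrict to $e$ in a compact neighbourhood of the relevant centre, which suffices for the later Bahadur expansion, or note that the VC bound is uniform over all of $\mathbb R^D$, so no compactness is needed.
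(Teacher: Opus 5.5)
Your proposal is correct and follows the paper's route. You verify C$_1$ for the product class $\bm\Gamma_{i,r}\varphi_\tau(a\nu_{i,r}-\bm\Gamma_{i,r}^\top e)$, where the paper uses closure of Andrews's Type~I classes under products and you use the equivalent VC half-space argument. You verify C$_2$ from the third-moment condition in Assumption~\ref{ass:A5}(i), and then apply Theorem~2 of \citet{Andrews1994}. Your additional step goes further than the paper: it turns equicontinuity in the $\mathcal L^2$ pseudometric into the shrinking-Euclidean-ball statement by bounding the conditional density, which the paper leaves implicit. That step needs $\E\{\|\bm\Gamma_{i,r}\|^3\sup_c g_\tau^*(c\mid\bm\Gamma_i)\}<\infty$, and Assumption~\ref{ass:A5}(iii) delivers this only if the Lipschitz constant is uniform in $\bm\Gamma_i$.
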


\begin{proof}[Proof of Lemma~\ref{lem:equicontinuity}]
We first verify \textup{C$_1$} and \textup{C$_2$}, and then invoke Theorem~2 of \cite{Andrews1994}. Write $m^*(\omega_{i,r},e) = f_1(\omega_{i,r},e) f_2(\omega_{i,r},e)$ with
\begin{equation*}
f_1(\omega_{i,r},e) = \bm{\Gamma}_{i,r}, \qquad f_2(\omega_{i,r},e) = \varphi_\tau\bigl(a \nu_{i,r} - \bm{\Gamma}_{i,r}^\top e\bigr).
\end{equation*}
For each $e$, $f_2(\cdot,e)$ is a Type~I class in the sense of \cite{Andrews1994}, with envelope identically equal to $1$, and $f_1(\cdot,e)$ is Type~I with envelope $\|\bm{\Gamma}_{i,r}\|$. By closure of Type~I classes under finite products (see Theorems~2 and 3 in \cite{Andrews1994}), the class $\mathcal{F}=\{m^*(\cdot,e): e\in\mathbb{R}^D\}$ is again Type~I with envelope $\widetilde{\mathcal{M}}(\omega_{i,r}) = \max(1,\|\bm{\Gamma}_{i,r}\|)$. For Type~I classes, Pollard's entropy condition is automatically satisfied; thus \textup{C$_1$} holds.

By Assumption~\ref{ass:A5}(i), there exists $\delta>2$ such that
\begin{equation*}
\sup_{n} \frac{1}{n}\sum_{i=1}^n\sum_{r=1}^R \mathbb{E}(\|\bm{\Gamma}_{i,r}\|^\delta) < \infty.
\end{equation*}
Since $\max(1,\|\bm{\Gamma}_{i,r}\|)^\delta \le C\bigl(1+\|\bm{\Gamma}_{i,r}\|^\delta\bigr)$ for some constant $C>0$, it follows that
\begin{equation*}
\sup_{n} \frac{1}{n}\sum_{i=1}^n\sum_{r=1}^R \mathbb{E}\left\{\widetilde{\mathcal{M}}(\omega_{i,r})^\delta\right\}
< \infty.
\end{equation*}
Thus, \textup{C$_2$} holds.

With \textup{C$_1$} and \textup{C$_2$} verified, Theorem~2 of \cite{Andrews1994} yields the stochastic equicontinuity \eqref{eq:A1f-new}.
\end{proof}

For local asymptotic analysis, introduce a $\sqrt{n}$-scale reparameterization. Fix $N>0$ and write
\begin{equation*}
e = n^{-1/2}\triangle, \qquad \triangle\in\mathbb{R}^D,\quad \|\triangle\|\le N.
\end{equation*}
Define
\begin{equation}\label{eq:def-M}
\mathcal{M}(\triangle) = n^{-1/2}\sum_{i=1}^n\sum_{r=1}^R m(\omega_{i,r},\triangle), \qquad m(\omega_{i,r},\triangle) = \bm{\Gamma}_{i,r} \varphi_\tau\bigl(a \nu_{i,r} - n^{-1/2}\bm{\Gamma}_{i,r} ^\top\triangle\bigr),
\end{equation}
and $V(\triangle) = \mathcal{M}(\triangle)-\mathbb{E}\{\mathcal{M}(\triangle)\}$. By the change of variables $e_j = n^{-1/2}\triangle_j$ and $N^* = n^{-1/2}N$, Lemma~\ref{lem:equicontinuity} is equivalent to
\begin{equation}
\sup_{\|\triangle_1-\triangle_2\|\le N}\bigl\|V(\triangle_1)-V(\triangle_2)\bigr\| = o_p(1).
\label{eq:A2f-new}
\end{equation}

Taking $\triangle_2 = \bm{0}$ in \eqref{eq:A2f-new} yields
\begin{equation}
\sup_{\|\triangle\|\le N}\bigl\|\mathcal{M}(\triangle)-\mathcal{M}(\bm{0})-[\mathbb{E}\{\mathcal{M}(\triangle)\}-\mathbb{E}\{\mathcal{M}(\bm{0})\}]\bigr\| = o_p(1).
\label{eq:A3f-new}
\end{equation}
Under Assumptions~\ref{ass:A5}(i) and \ref{ass:A5}(iii), the standard expansion of the quantile score \citep[see, e.g.,][]{Kim_Muller2004} gives
\begin{equation*}
\mathbb{E}\{\mathcal{M}(\triangle)\} - \mathbb{E}\{\mathcal{M}(\bm{0})\}
= -a^{-1}\bm{Q}_{0(\nu)} \triangle + o(\|\triangle\|),
\end{equation*}
uniformly on $\{\|\triangle\|\le N\}$, where $\bm{Q}_{0(\nu)}$ is nonsingular by Assumption~\ref{ass:A5}(iii). Substituting into \eqref{eq:A3f-new} yields
\begin{equation}
\sup_{\|\triangle\|\le N} \bigl\|\mathcal{M}(\triangle)-\mathcal{M}(\bm{0}) + a^{-1}\bm{Q}_{0(\nu)} \triangle\bigr\| = o_p(1).
\label{eq:A4f-new}
\end{equation}

Let $\widehat{\bm{\Xi}}$ and $\widehat{\bm{\Theta}}$ be the first-stage spline estimators, with population values $\bm{\Xi}$ and $\bm{\Theta}$. Define
\begin{equation}\label{eq:Delta0-def}
\widehat{\triangle}_0 = (a-1)\sqrt{n}(\widehat{\bm{\Xi}}-\bm{\Xi}) + \sqrt{n}(\widehat{\bm{\Theta}}-\bm{\Theta}) \widetilde{\bm{\rho}}^{(\tau), 0}.
\end{equation}
Assumptions~\ref{ass:A4} and \ref{ass:A5} imply
\begin{equation*}
\sqrt{n}(\widehat{\bm{\Xi}}-\bm{\Xi}) = \mathcal{O}_p(1), \qquad \sqrt{n}(\widehat{\bm{\Theta}}-\bm{\Theta}) = \mathcal{O}_p(1),
\end{equation*}
hence $\widehat{\triangle}_0 = \mathcal{O}_p(1)$. Evaluating \eqref{eq:A4f-new} at $\triangle=\widehat{\triangle}_0$ gives
\begin{equation}\label{eq:M-Delta0}
\mathcal{M}(\widehat{\triangle}_0) = \mathcal{M}(\bm{0}) - a^{-1}\bm{Q}_{0(\nu)} \widehat{\triangle}_0 + o_p(1).
\end{equation}
Since $\varphi_\tau(a \nu_{i,r})=\varphi_\tau(\nu_{i,r})$ for any $a>0$,
\begin{equation*}
\mathcal{M}(\bm{0}) = n^{-1/2}\sum_{i=1}^n\sum_{r=1}^R \bm{\Gamma}_{i,r} \varphi_\tau(\nu_{i,r}).
\end{equation*}
By the Lindeberg-Feller central limit theorem and Assumptions~\ref{ass:A4}, \ref{ass:A5}(i), and \ref{ass:A5}(iv), $\mathcal{M}(\bm{0}) = \mathcal{O}_p(1)$, $\bm{Q}_{0(\nu)} \widehat{\triangle}_0 = \mathcal{O}_p(1)$, and hence
\begin{equation}\label{eq:A5f-new}
\mathcal{M}(\widehat{\triangle}_0) = \mathcal{O}_p(1).
\end{equation}

For $\delta\in\mathbb{R}^{K_y^2 + K_yK_x}$, consider the affine perturbation
\begin{equation}\label{eq:Delta1-def}
\widehat{\triangle}_1(\delta) = \bm{A}(\widehat{\bm{\Theta}}) \delta + \widehat{\triangle}_0,
\end{equation}
where $\bm{A}(\cdot)$ collects the relevant design blocks for the structural parameter $\bm{\theta}^{(\tau)}$ \citep[cf.][]{KoenkerZhao1996}. For any fixed $N_1>0$, \eqref{eq:A4f-new} implies
\begin{equation}
\sup_{\|\delta\|\le N_1} \bigl\|\mathcal{M}\{\widehat{\triangle}_1(\delta)\} -\mathcal{M}(\bm{0}) + a^{-1}\bm{Q}_{0(\nu)} \widehat{\triangle}_1(\delta)\bigr\| = o_p(1).
\label{eq:A7f-new}
\end{equation}
Introduce the transformed process $\widetilde{\mathcal{M}}(\delta) = \bm{A}(\widehat{\bm{\Theta}})^\top \mathcal{M}\{\widehat{\triangle}_1(\delta)\}$. Assumption~\ref{ass:A4} implies $\widehat{\bm{\Theta}}-\bm{\Theta}=o_p(1)$ and
\begin{equation*}
\|\bm{A}(\widehat{\bm{\Theta}})\|^2 = \operatorname{tr}\{\bm{A}(\widehat{\bm{\Theta}})\bm{A}(\widehat{\bm{\Theta}})^\top\} = \mathcal{O}_p(1).
\end{equation*}
Using \eqref{eq:A5f-new} and \eqref{eq:A7f-new}, together with the argument between (A.7) and (A.8) in \cite{Powel1983}, we obtain
\begin{equation}\label{eq:A8f-new}
\sup_{\|\delta\|\le N_1} \Bigl\| \widetilde{\mathcal{M}}(\delta)  - \bm{A}(\bm{\Theta})^\top \mathcal{M}(\widehat{\triangle}_0) + a^{-1}\bm{Q}_z \delta \Bigr\| = o_p(1),
\end{equation}
where $\bm{Q}_z = \bm{A}(\bm{\Theta})^\top \bm{Q}_{0(\nu)} \bm{A}(\bm{\Theta})$, which is positive definite by Assumption~\ref{ass:A5}(ii) and the identification conditions.

The penalized IV-quantile estimator $\widehat{\bm{\theta}}^{(\tau)}$ satisfies
\begin{equation*}
\frac{1}{n}\sum_{i=1}^n\sum_{r=1}^R \widehat{\bm{\Pi}}_{i,r} \varphi_\tau\bigl\{\Y_i(t_r) - [\widehat{\bm{\Pi}}\widehat{\bm{\theta}}^{(\tau)}]_{i,r}\bigr\} + \bm{\mathcal P}_{\mathrm{surf},\lambda_\rho,\lambda_\beta}\widehat{\bm{\theta}}^{(\tau)} = \bm{o}_p(n^{-1/2}),
\end{equation*}
where $\widehat{\bm{\Pi}}_{i,r}$ is the $(i,r)$-th row of the Stage~2 design matrix and $\bm{\mathcal P}_{\mathrm{surf},\lambda_\rho,\lambda_\beta}$ is the block penalty matrix. Multiplying by $\sqrt{n}$ and using the empirical process notation, we can rewrite this as
\begin{equation*}
\mathcal{M}\{\widehat{\Delta}_1(\widehat{\delta})\} + \sqrt{n} \bm{\mathcal P}_{\mathrm{surf},\lambda_\rho,\lambda_\beta}\widehat{\bm{\theta}}^{(\tau)} = o_p(1),
\end{equation*}
where $\widehat{\delta} = \sqrt{n} \left(\widehat{\bm\theta}_{\tau} - \bm\theta_{\tau,K}^{0} \right)$.

\begin{lemma}\label{lem:smoothing}
Let $\widehat{\bm\theta}^{(\tau)}_{\alpha}$ denote the minimizer of the smoothed penalized criterion with smoothing constant $\alpha=\alpha_n$, and let $\widehat{\bm\theta}^{(\tau)}$ denote the minimizer of the exact penalized check-loss criterion. Under Assumptions~\ref{ass:A3}, \ref{ass:A4}, and \ref{ass:A5}, $\widehat{\bm\theta}^{(\tau)}_{\alpha} -\widehat{\bm\theta}^{(\tau)}=o_p(n^{-1/2})$.
\end{lemma}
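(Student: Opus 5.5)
The plan is to compare the two criteria through a uniform approximation of the loss and then use the local quadratic (Bahadur-type) structure already established for the exact penalized criterion. Write $G_n(\bm\theta)$ for the exact penalized check-loss criterion and $G_{n,\alpha}(\bm\theta)$ for its smoothed counterpart, both built with the same IV-projected design $\widehat{\bm\Pi}$ and penalty. The elementary bound
\begin{equation*}
0\le \ell_{\tau,\alpha}(u)-\ell_\tau(u)=\alpha\log\{1+\exp(-|u|/\alpha)\}\le \alpha\log 2
\end{equation*}
holds for all $u$. It gives $\sup_{\bm\theta}|G_{n,\alpha}(\bm\theta)-G_n(\bm\theta)|\le R\alpha_n\log 2$ after dividing by $n$ (the sum has $nR$ terms). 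This is deterministic and does not depend on the data.

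Next, I would reparametrize locally as $\bm\theta=\bm\theta^{0}_{\tau,K}+n^{-1/2}\delta$ and consider the rescaled difference processes
\begin{align*}
\mathbb G_n(\delta)&=n\{G_n(\bm\theta^{0}_{\tau,K}+n^{-1/2}\delta)-G_n(\bm\theta^{0}_{\tau,K})\},\\
\mathbb G_{n,\alpha}(\delta)&=n\{G_{n,\alpha}(\bm\theta^{0}_{\tau,K}+n^{-1/2}\delta)-G_{n,\alpha}(\bm\theta^{0}_{\tau,K})\}.
\end{align*}
Knight's identity, combined with the equicontinuity of Lemma~\ref{lem:equicontinuity} and expansion \eqref{eq:A8f-new}, yields a quadratic representation of $\mathbb G_n$ on compacts: $\mathbb G_n(\delta)=-\delta^\top\bm S_n+\tfrac12 a^{-1}\delta^\top\bm Q_z\delta+o_p(1)$. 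Here $\bm S_n=\widetilde{\mathcal M}$-type score is $O_p(1)$ and absorbs the first-stage term $\widehat\triangle_0$. Assumption~\ref{ass:A3} handles the penalty: $n\cdot n^{-1}\delta^\top\bm{\mathcal P}\delta/2=o(1)$ and the cross term $\sqrt n\,\delta^\top\bm{\mathcal P}\bm\theta^0=o(1)$. By convexity and positive definiteness of $\bm Q_z$ (Assumption~\ref{ass:A5}(ii),(iii)), the minimizer of $\mathbb G_n$ is $\sqrt n(\widehat{\bm\theta}^{(\tau)}-\bm\theta^0_{\tau,K})=a\bm Q_z^{-1}\bm S_n+o_p(1)$.

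The crude uniform bound from the first step would give $|\mathbb G_{n,\alpha}-\mathbb G_n|\le 2nR\alpha_n\log 2$, which is not $o(1)$. So instead I would bound the \emph{increment} of the smoothing error. Let
\begin{equation*}
h_\alpha(u)=\ell_{\tau,\alpha}(u)-\ell_\tau(u),
\end{equation*}
which is bounded by $\alpha\log 2$, Lipschitz with constant $1$, and supported essentially on $|u|\lesssim\alpha\log(1/\alpha)$. The difference is then $\sum_{i,r}\{h_\alpha(u_{ir}-n^{-1/2}\widehat{\bm\pi}_{ir}^\top\delta)-h_\alpha(u_{ir})\}$. Each summand is bounded by $\min(2\alpha\log2,\, n^{-1/2}\|\widehat{\bm\pi}_{ir}\|N_1)$ and vanishes unless $u_{ir}$ lies within $O(\alpha_n\log(1/\alpha_n)+n^{-1/2})$ of zero. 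By the bounded conditional density (Assumption~\ref{ass:A5}(iii)) and the third-moment bound, the expected sum is of order $n\cdot n^{-1/2}(\alpha_n\log(1/\alpha_n)+n^{-1/2})$. Since $\sqrt n\alpha_n\to0$, this is $O(1)\cdot o(1)$ up to the log factor. A slightly sharper computation, using $\E h_\alpha'(u)=O(\alpha^2)$ by symmetry of the softplus correction around $0$ and the Lipschitz density, removes the log, and a variance bound of the same order gives $\sup_{\|\delta\|\le N_1}|\mathbb G_{n,\alpha}(\delta)-\mathbb G_n(\delta)|=o_p(1)$.

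Finally, $\mathbb G_{n,\alpha}$ is convex and converges uniformly on compacts to the same quadratic approximation. The convexity lemma of Hjort--Pollard / Pollard (1991) then gives $\sqrt n(\widehat{\bm\theta}^{(\tau)}_\alpha-\bm\theta^0_{\tau,K})=a\bm Q_z^{-1}\bm S_n+o_p(1)$, and subtracting the two representations yields the claim. The main obstacle is the increment bound in the previous paragraph: the smoothing error must be $o_p(1)$ after $n$-rescaling, not merely $O(n\alpha_n)$. If the first-moment cancellation argument proves delicate, I would fall back on comparing the gradients directly. That means showing
\begin{equation*}
n^{-1/2}\sum\widehat{\bm\pi}_{ir}\{\psi_{\tau,\alpha}(u_{ir})-\varphi_\tau(u_{ir})\}=o_p(1)
\end{equation*}
uniformly in the local neighbourhood, via $\E[\psi_{\tau,\alpha}(u)-\varphi_\tau(u)\mid\bm\Gamma]=O(\alpha_n^2+\alpha_n n^{-1/2})$ from the Lipschitz density, which is where $\sqrt n\alpha_n\to0$ enters, and then applying the equicontinuity argument of Lemma~\ref{lem:equicontinuity} to the smooth score class.
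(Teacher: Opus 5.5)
Your overall plan is sound: a quadratic approximation of the exact rescaled criterion, a proof that the smoothing error is negligible in the local process, and then the convexity lemma. But the step you yourself flag as the crux does not close as written.

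\textbf{The failing step.} The bound you state is
\[
n\cdot n^{-1/2}\{\alpha_n\log(1/\alpha_n)+n^{-1/2}\}=\sqrt n\,\alpha_n\log(1/\alpha_n)+1 .
\]
This is $O(1)$, not $o(1)$, and the culprit is not the logarithm. It is the term $n\cdot n^{-1/2}\cdot n^{-1/2}$, which arises because you charge every summand the Lipschitz bound $n^{-1/2}\|\widehat{\bm\pi}_{ir}\|N_1$ over the whole $n^{-1/2}$-window around zero. Since $\alpha_n=o(n^{-1/2})$, $|h_\alpha'|$ is of order one only on an $O(\alpha_n)$-window, and the increment is exponentially small on the rest of the $n^{-1/2}$-window.

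\textbf{A direct fix.} Note that $|h_\alpha'(w)|=\{1+\exp(|w|/\alpha)\}^{-1}=|\psi_{\tau,\alpha}(w)-\varphi_\tau(w)|$ and $\int_{\mathbb R}|h_\alpha'(w)|\,dw=2\alpha\log 2$. With $c_{ir}=n^{-1/2}\|\widehat{\bm\pi}_{ir}\|N_1$,
\[
\sup_{\|\delta\|\le N_1}\bigl|h_\alpha(u_{ir}-n^{-1/2}\widehat{\bm\pi}_{ir}^\top\delta)-h_\alpha(u_{ir})\bigr|\le\int_{u_{ir}-c_{ir}}^{u_{ir}+c_{ir}}|h_\alpha'(w)|\,dw .
\]
If the conditional density of $u_{ir}$ is bounded by $\bar f$, the conditional expectation of the right side is at most $4\log 2\,\bar f\,c_{ir}\alpha_n$. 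Summing over the $nR$ terms gives
\[
\sup_{\|\delta\|\le N_1}|\mathbb G_{n,\alpha}(\delta)-\mathbb G_n(\delta)|=O_p(\sqrt n\,\alpha_n)=o_p(1).
\]
This needs no log factor, no symmetry, no variance bound, and uses exactly the rate in Assumption~\ref{ass:A3}. Your sketched refinement (mean of the increment from the signed $h_\alpha'$ with $\E h_\alpha'=O(\alpha^2)$, a variance bound, then convexity to pass from pointwise to uniform) would also rescue the step. What it accomplishes, though, is to remove this $O(1)$ term, not the logarithm.

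\textbf{Comparison with the paper's route.} The paper works at the score level. It bounds $|\ell'_{\tau,\alpha}-\varphi_\tau|\le e^{-|u|/\alpha}$ and gets $\E e^{-|u|/\alpha}=O(\alpha)$ from the bounded density. It concludes that the smoothed and exact subgradient processes differ by $O_p(\sqrt n\,\alpha_n)$, then cites ``standard convexity arguments.'' In effect this verifies condition C$_5$ for $\widehat{\bm\theta}^{(\tau)}_\alpha$ and reuses the Koenker--Zhao Bahadur machinery already built for the exact estimator. That route is shorter and slots directly into the existing C$_3$--C$_6$ verification.

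Your corrected increment bound is just the integrated form of the same score bound, so both arguments rest on the same ingredient and the same rate. Your objective-level version buys two things:
\begin{itemize}
\item Uniformity on compacts is free, because the increment is an integral of $|h_\alpha'|$ and the supremum can be moved inside the expectation. For the score difference itself this fails: $\E\sup_{|c|\le Cn^{-1/2}}e^{-|u-c|/\alpha}$ is of order $n^{-1/2}\gg\alpha_n$. So the paper's claim of uniformity ``over $o_p(1)$ neighborhoods'' needs an empirical-process argument it does not spell out.
\item The convexity lemma delivers $\sqrt n$-consistency of the smoothed minimizer together with its representation, so no preliminary localization of $\widehat{\bm\theta}^{(\tau)}_\alpha$ is needed.
\end{itemize}

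Your fallback is essentially the paper's argument, with a sharper bias bound. Since $\psi_{\tau,\alpha}(u)=\E_V\varphi_\tau(u+\alpha V)$ with $V$ standard logistic (a symmetric convolution), the signed mean is $O(\alpha^2)$, whereas the paper uses the absolute $O(\alpha)$. With your bound the bias part needs only $\sqrt n\,\alpha_n^2\to0$. The condition $\sqrt n\,\alpha_n\to0$ is what the cruder absolute bounds require, not where your own bound uses it.
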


\begin{proof}
The derivative of the smoothed loss satisfies $|\ell_{\tau,\alpha}'(u)-\varphi_\tau(u)| =\{1+\exp(|u|/\alpha)\}^{-1}\le\exp(-|u|/\alpha)$, so the smoothed and exact subgradient processes differ, at any $\bm\theta$, by at most $n^{-1/2}\sum_{i,r}\|\bm\Gamma_{i,r}\|\exp\{-|u_{ir}(\bm\theta)|/\alpha\}$. By Assumption~\ref{ass:A5}(iii) the conditional density of the residual at zero is bounded, so $\E[\exp\{-|u_{ir}|/\alpha\}]=\mathcal O(\alpha)$, and the difference of the two score processes is $\mathcal O_p(\sqrt n\,\alpha_n)=o_p(1)$ uniformly over $o_p(1)$ neighborhoods of $\bm\theta_{\tau,K}^{0}$. Standard convexity arguments \citep[e.g.,][]{Pollard1991} then yield the claim.
\end{proof}

We now invoke Lemma~A.4 of \cite{KoenkerZhao1996}, which requires the following conditions:
\begin{itemize}
\item[C$_3$] There exists $c^*>1$ such that $\delta^\top \widetilde{\mathcal{M}}(c^*\delta) \ge \delta^\top \widetilde{\mathcal{M}}(\delta)$ for all $\delta\in\mathbb{R}^{d_0}$ where $d_0$ denotes the (fixed) dimension of $\bm \theta^{(\tau)}$ including the intercept block;
\item[C$_4$] $\|\bm{A}(\bm{\Theta})^\top \mathcal{M}(\widehat{\triangle}_0)\| = \mathcal{O}_p(1)$;
\item[C$_5$] $\widetilde{\mathcal{M}}(\widehat{\delta}) = o_p(1)$;
\item[C$_6$] $\bm{Q}_z$ is positive definite.
\end{itemize}

Condition~C$_6$ has already been verified. Condition~C$_4$ follows from \eqref{eq:A5f-new}. For C$_3$, consider
\begin{equation*}
h^*(c) = \sum_{i=1}^n\sum_{r=1}^R \eta_\tau\left\{a \nu_{i,r}
- n^{-1/2}\bm{\Gamma}_{i,r}^\top\bm{A}(\widehat{\bm{\Theta}})\delta c
- n^{-1/2}\bm{\Gamma}_{i,r}^\top\widehat{\Delta}_0 \right\},
\end{equation*}
where $\eta_\tau(c) = c \varphi_\tau(c)$ is the check loss. The function $c\mapsto h^*(c)$ is convex, so its derivative in $c$, equal to $\delta^\top\widetilde{\mathcal{M}}(c\delta)$, is nondecreasing in $c$. This implies C$_3$.

For C$_5$, note that
\begin{equation*}
\sqrt{n} \widetilde{\mathcal{M}}(\widehat{\delta}) = \left[\left.\frac{\partial S(\bm{\theta}^{(\tau)})}{\partial\bm{\theta}}\right|_{\bm{\theta}=\widehat{\bm{\theta}}^{(\tau)}}\right]_{-},
\end{equation*}
the left-hand partial derivative of the penalized objective $S$ evaluated at its minimizer $\widehat{\bm{\theta}}^{(\tau)}$. First-order optimality and the fact that the penalty term is $o_p(n^{-1/2})$ imply that this derivative is $o_p(1)$, and thus $\widetilde{\mathcal{M}}(\widehat{\delta})=o_p(1)$.

Hence all conditions C$_3$-C$_6$ hold, and Lemma~A.4 of \cite{KoenkerZhao1996} yields
\begin{equation}\label{eq:Bahadur-theta}
\widehat{\delta} = \mathcal{O}_p(1), \qquad \widehat{\delta} = a \bm{Q}_z^{-1} \bm{A}(\bm{\Theta})^\top \mathcal{M}(\widehat{\Delta}_0) + o_p(1).
\end{equation}
By Lemma~\ref{lem:smoothing}, the same expansion holds for the smoothed estimator computed in practice.

Next, incorporate the asymptotic linear representations of the first-stage estimators \citep[cf.][]{Kim_Muller2004}:
\begin{align*}
\sqrt{n}(\widehat{\bm{\Xi}}-\bm{\Xi}) &= \bm{Q}_{0(\nu)}^{-1} n^{-1/2}\sum_{i=1}^n\sum_{r=1}^R \bm{\Gamma}_{i,r} \varphi_\tau(\nu_{i,r}) + o_p(1), \\
\sqrt{n}(\widehat{\bm{\Theta}}-\bm{\Theta}) &= \bm{Q}_{0(\zeta)}^{-1} n^{-1/2}\sum_{i=1}^n\sum_{r=1}^R \bm{\Gamma}_{i,r} \varphi_\tau(\zeta_{i,r}) + o_p(1),
\end{align*}
where $\zeta_{i,r}$ are auxiliary residuals from the first stage and $\bm{Q}_{0(\zeta)}$ is nonsingular. Substituting these into \eqref{eq:Delta0-def} and then \eqref{eq:Bahadur-theta}, and collecting terms, we obtain
\begin{align}\label{eq:A10f-new}
\sqrt{n}\bigl(\widehat{\bm{\theta}}^{(\tau)}-\bm{\theta}_{\tau,K}^{0}\bigr) &= n^{-1/2}\sum_{i=1}^n\sum_{r=1}^R \bm{Q}_z^{-1}\bm{A}(\bm{\Theta})^\top \left\{ \bm{\Gamma}_{i,r}\, \varphi_\tau(\nu_{i,r}) - \bm{Q}_{0(\nu)}\bm{Q}_{0(\zeta)}^{-1} \bigl(\widetilde{\bm{\rho}}^{(\tau),0 \top}\otimes\bm I\bigr) \bigl\{\varphi_\tau(\zeta_{i,r})\otimes\bm{\Gamma} _{i,r}\bigr\}\right\} + o_p(1) \nonumber\\
&= \bm{J} n^{-1/2}\sum_{i=1}^n \bm{Z}_i + o_p(1),
\end{align}
where
\begin{equation*}
\bm{Z}_i= \sum_{r=1}^R \bigl\{\varphi_\tau(\widetilde{\mathcal{W}}_{i,r})\otimes\bm{\Gamma}_{i,r}\bigr\}, \qquad \varphi_\tau(\widetilde{\mathcal{W}}_{i,r})
= \bigl\{\varphi_\tau(\nu_{i,r}),\varphi_\tau(\zeta_{i,r})\bigr\}^\top,
\end{equation*}
and the deterministic matrix $\bm{J}$ is implicit in \eqref{eq:A10f-new}. Assumption~\ref{ass:A4} guarantees that $\{\bm{Z}_i\}$ are i.i.d.; since $|\varphi_\tau(\cdot)|\le 1$ and $\mathbb{E}\|\bm{\Gamma}_{i,r}\|^2<\infty$ by Assumption~\ref{ass:A5}(i), we have $\mathbb{E}\|\bm{Z}_i\|^2<\infty$. By the Lindeberg--Feller central limit theorem,
\begin{equation}\label{eq:theta-CLT}
\sqrt{n} \left(\widehat{\bm\theta}_{\tau} - \bm\theta_{\tau,K}^{0} \right) \xrightarrow{d} \mathcal N \left( \bm0, \bm\Sigma_{\theta,\tau} \right).
\end{equation}
with $\bm{\Sigma}_{\theta,\tau} = \bm{J}\, \bm{\mathcal{S}} \bm{J}^\top$ and $\bm{\mathcal{S}} = \mathbb{E}(\bm{Z}_i\bm{Z}_i^\top)$. By Lemma~\ref{lem:smoothing}, the same limit law holds for the smoothed estimator $\widehat{\bm\theta}^{(\tau)}_{\alpha_n}$ used in computation.

By construction,
\begin{equation*}
\widehat{\rho}_\tau(t,u) = \bigl\{\bm{\phi}^\top(u)\otimes\bm{\phi}^\top(t)\bigr\} \widehat{\widetilde{\bm{\rho}}}^{(\tau)}, \qquad \widehat{\beta}_\tau(t,s) = \bigl\{\bm{\psi}^\top(s)\otimes\bm{\phi}^\top(t)\bigr\} \widehat{\widetilde{\bm{b}}}^{(\tau)},
\end{equation*}
where $\widehat{\widetilde{\bm{\rho}}}^{(\tau)}$ and $\widehat{\widetilde{\bm{b}}}^{(\tau)}$ are the subvectors of $\widehat{\bm{\theta}}^{(\tau)}$ corresponding to $\rho_\tau$ and $\beta_\tau$. Define the (row-vector) evaluation operators 
\begin{equation*}
\mathcal{F}_\rho(t,u) = \bm{\phi}^\top(u)\otimes\bm{\phi}^\top(t), \qquad \mathcal{F}_\beta(t,s) = \bm{\psi}^\top(s)\otimes\bm{\phi}^\top(t).
\end{equation*}
Then, for fixed $(t,u)$ and $(t,s)$,
\begin{align*}
\sqrt{n}\bigl\{\widehat{\rho}_\tau(t,u)-\rho_{\tau,K}(t,u)\bigr\}
= \mathcal{F}_\rho(t,u)\, \sqrt{n}\bigl(\widehat{\widetilde{\bm{\rho}}}^{(\tau)}-\widetilde{\bm{\rho}}^{(\tau),0}\bigr), \\
\sqrt{n}\bigl\{\widehat{\beta}_\tau(t,s)-\beta_{\tau,K}(t,s)\bigr\}
= \mathcal{F}_\beta(t,s)\, \sqrt{n}\bigl(\widehat{\widetilde{\bm{b}}}^{(\tau)}-\widetilde{\bm{b}}^{(\tau),0}\bigr).
\end{align*}
Since $\mathcal{F}_\rho(t,u)$ and $\mathcal{F}_\beta(t,s)$ are continuous linear maps and \eqref{eq:theta-CLT} holds, the continuous mapping theorem implies that, for each fixed $(t,u)$ and $(t',u')$,
\begin{align*}
\sqrt{n}\bigl\{\widehat{\rho}_\tau(t,u)-\rho_{\tau,K}(t,u)\bigr\}
&\xrightarrow{d} \mathcal{GP}\left\{0,\Sigma_{\rho_\tau}(t,u;t',u')\right\}, \\
\sqrt{n}\bigl\{\widehat{\beta}_\tau(t,s)-\beta_{\tau,K}(t,s)\bigr\}
&\xrightarrow{d} \mathcal{GP}\left\{0,\Sigma_{\beta_\tau}(t,s;t',s')\right\},
\end{align*}
where $\mathcal{GP}\{0,\cdot\}$ denotes a mean-zero Gaussian process with the indicated covariance kernel:
\begin{align}
\Sigma_{\rho_\tau}(t,u;t',u')
&= \mathcal{F}_\rho(t,u) \bm{\Sigma}_{\bm{\rho}^{(\tau)}} \mathcal{F}_\rho^\top(t',u'), \label{eq:Sigma-rho}\\
\Sigma_{\beta_\tau}(t,s;t',s') &= \mathcal{F}_\beta(t,s) \bm{\Sigma}_{\bm{b}^{(\tau)}} \mathcal{F}_\beta^\top(t',s'), \label{eq:Sigma-beta}
\end{align}
where $\bm{\Sigma}_{\bm{\rho}^{(\tau)}}$ and $\bm{\Sigma}_{\bm{b}^{(\tau)}}$ are the $K_y^2\times K_y^2$ and $K_yK_x\times K_yK_x$ submatrices of $\bm\Sigma_{\theta,\tau}$ corresponding to $\widetilde{\bm{\rho}}^{(\tau)}$ and $\widetilde{\bm{b}}^{(\tau)}$, respectively. In explicit tensor form,
\begin{align*}
\Sigma_{\rho_\tau}(t,u;t',u') &= \bigl\{\bm{\phi}^\top(u)\otimes\bm{\phi}^\top(t)\bigr\} \bm{\Sigma}_{\bm{\rho}^{(\tau)}} \bigl\{\bm{\phi}(u')\otimes\bm{\phi}(t')\bigr\}, \\
\Sigma_{\beta_\tau}(t,s;t',s') &= \bigl\{\bm{\psi}^\top(s)\otimes\bm{\phi}^\top(t)\bigr\} \bm{\Sigma}_{\bm{b}^{(\tau)}} \bigl\{\bm{\psi}(s')\otimes\bm{\phi}(t')\bigr\}.
\end{align*}

Since $K_y$ and $K_x$ are fixed, the maps $\bm c\mapsto\{\bm\phi^\top(u)\otimes\bm\phi^\top(t)\}\bm c$ and $\bm c\mapsto\{\bm\psi^\top(s)\otimes\bm\phi^\top(t)\}\bm c$ are continuous linear maps from $\mathbb R^{K_y^2}$ and $\mathbb R^{K_yK_x}$ into $C([0,1]^2)$. By the continuous mapping theorem applied to \eqref{eq:theta-CLT},
\begin{equation*}
\sqrt n (\widehat\rho_\tau-\rho_{\tau,K})\rightsquigarrow
\mathbb G_{\rho,\tau}\ \text{in }C([0,1]^2), \qquad \sqrt n (\widehat\beta_\tau-\beta_{\tau,K})\rightsquigarrow \mathbb G_{\beta,\tau}\ \text{in }C([0,1]^2),
\end{equation*}
where $\mathbb G_{\rho,\tau}$ and $\mathbb G_{\beta,\tau}$ are the mean-zero finite-rank Gaussian processes with covariance kernels \eqref{eq:Sigma-rho}-\eqref{eq:Sigma-beta}. If the true surfaces belong to the working spline spaces, then $\rho_{\tau,K}=\rho_\tau$ and $\beta_{\tau,K}=\beta_\tau$, and the centering may be taken at the true surfaces. This completes the proof of Theorem~3.1.
\end{proof}

\section{Additional simulation details and results}
\label{sec:supp-simulation}

The general simulation framework, including the spatial weight matrix construction, the functional predictor generation, and the SAR operator specification, follows the setup described in Section 4 of the main text. Here, we provide the detailed data generation processes for the error scenarios omitted from the main manuscript for the sake of brevity.

The first scenario (Case 1) serves as a baseline to evaluate the models under ideal conditions where the assumptions of mean-based spatial regression are fully satisfied. The error processes are independent and identically distributed (i.i.d.) Gaussian white noise: $\epsilon_i(t) \stackrel{i.i.d.}{\sim} \mathcal{N}(0, \sigma_{\epsilon}^2)$, where the scale parameter is set to $\sigma_{\epsilon} = 0.01$. This case assesses the relative efficiency of the SFoF-QR estimator when the least-squares criterion is theoretically optimal.

The third scenario (Case 3) evaluates the robustness of the proposed framework to leptokurtosis (heavy tails) without the influence of skewness. The errors are generated from an Asymmetric Laplace Distribution (ALD) with a skewness parameter $s_p=0.5$, which yields the symmetric Laplace distribution: $\epsilon_i(t) \stackrel{i.i.d.}{\sim} \text{ALD}(\mu=0, \sigma=\sigma_{\epsilon}, s_p=0.5)$. The probability density function is given by $f(\epsilon) = \frac{1}{2b_e} \exp(-\frac{|\epsilon - \mu|}{b_e})$, where $\mu=0$ and $b_e$ is the scale parameter. With a kurtosis of 6 (twice that of a Gaussian distribution), this scenario provides a clear challenge for mean-based models, which are sensitive to the frequent high-magnitude shocks characteristic of heavy-tailed processes.

The results obtained from our Monte Carlo experiments are presented in Tables~\ref{tab:tab_1}-\ref{tab:tab_3}. The results provide overwhelming evidence that correctly incorporating the functional SAR structure is paramount for consistent estimation and prediction, particularly when the spatial dependence strength ($s_d$) is moderate to strong. Models that ignore spatial dependence (M$_3$, M$_4$, M$_5$) suffer catastrophic performance degradation across all metrics, confirming that the SAR specification cannot be neglected in spatially correlated functional data. Furthermore, the comparison between the penalized B-spline methodology (M$_2$, M$_6$) and the FPCA-based methodology (M$_1$, M$_5$) is fundamental to understanding the superior performance of the proposed methods. From the results in Tables~\ref{tab:tab_1}-\ref{tab:tab_3}, the FPCA-based non-spatial quantile method (M$_5$) fails to produce reliable or informative PIs when directly estimating extreme quantiles (e.g., $Q_{0.025}$ and $Q_{0.975}$). This limitation arises primarily from two methodological challenges: the distortion introduced by dimension reduction in the tails and the resulting high variance of the quantile-specific parameter estimates. Therefore, in what follows, we focus our PI performance assessment exclusively on the proposed M$_6$ method.

Across all three data generation scenarios, the B-spline penalized models (M$_2$ and M$_6$) demonstrate a massive and consistent advantage in coefficient estimation accuracy (RRISPEE($\beta$)) compared to their FPCA-based counterparts (M$_1$ and M$_5$). This superiority holds irrespective of the spatial dependence strength or the sample size, suggesting a fundamental limitation in the FPCA approach for functional regression involving complex bivariate kernels. For instance, examining the large sample size setting ($n=500$) across the three error cases (see Table~\ref{tab:pretab_1}) provides clear quantification of this disparity. 

In the ideal Case 1, $n=500$, and $s_d=0.1$, the penalized mean model (M$_2$) achieved a RRISPEE($\beta$) of $0.074$, whereas the FPCA-based model (M$_1$) resulted in a RRISPEE($\beta$) of $31.249$. Similarly, the proposed $M_6$ achieved $0.078$ compared to $M_5$'s $17.332$. This indicates that $M_2$ recovered the true coefficient surface $\beta(t,s)$ with over 422 times the accuracy of $M_1$, and $M_6$ showed over 222 times the accuracy of $M_5$. The discrepancy is equally profound in Case 3, where $M_6$ at $n=500, s_d=0.9$ (RRISPEE($\beta$)=0.102) was vastly superior to $M_5$ (RRISPEE($\beta$)=17.722). This systematic performance gap suggests that relying on FPCA for dimension reduction, typically set to retain 95\% of total variance, is insufficient when the true functional coefficients ($\beta(t,s)$ and $\rho(t,u)$) possess complex, non-separable structures. These functional parameters are themselves infinite-dimensional objects, and their accurate representation often requires capturing components that contribute minimally to the total variance of the data process $\X_i(s)$ but are crucial for describing the regression relationship. The FPCA method projects the data onto a truncated basis derived from the data's covariance operator, potentially missing the necessary structure required for precise recovery of the complex regression surface. In contrast, the penalized B-spline approach (M$_2$, M$_6$) bypasses this limitation. By expressing the coefficient surfaces directly via tensor-product B-splines and applying quadratic roughness penalties, the method ensures an optimal degree of smoothness while fitting the loss function. This regularization strategy allows the model to flexibly and accurately recover the true functional shapes without the destructive approximation error inherent in truncating the functional space based purely on principal component variance. Consequently, the success of both M$_2$ and M$_6$ hinges critically on this sophisticated penalized estimation procedure.

\begin{small}
\begin{center}
\tabcolsep 0.255in
\renewcommand{\arraystretch}{0.92}
\begin{longtable}{@{}lcccccc@{}} 
\caption{Estimation accuracy benchmark: Penalized B-Splines vs.\ FPCA (RRISPEE$(\beta)$ for $n=500$).}\label{tab:pretab_1} \\
\toprule
Case / $s_d$ & $M_{1}$ & $M_{2}$ & $M_{5}$ & $M_{6}$ & $M_{1}/M_{2}$ Ratio & $M_{5}/M_{6}$ Ratio \\
\midrule
\endfirsthead
\toprule
Case / $s_d$ & $M_{1}$ & $M_{2}$ & $M_{5}$ & $M_{6}$ & $M_{1}/M_{2}$ Ratio & $M_{5}/M_{6}$ Ratio \\
\midrule
\endhead
\midrule
\multicolumn{7}{r}{Continued on next page} \\ 
\endfoot
\endlastfoot
Case~1 / 0.1 & 31.249 & 0.074 & 17.332 & 0.078 & $\sim 422:1$ & $\sim 222:1$ \\
Case~2 / 0.5 & 32.775 & 0.252 & 17.375 & 0.146 & $\sim 130:1$ & $\sim 119:1$ \\
Case~3 / 0.9 & 31.317 & 0.172 & 17.722 & 0.102 & $\sim 182:1$ & $\sim 174:1$ \\
\bottomrule
\end{longtable}
\end{center}
\end{small}

Under Case 1, M$_6$ (SFoF-QR, estimating the median) performs similarly to, or marginally worse than, its conditional mean counterpart (M$_2$), as expected by statistical theory. For the largest sample size ($n=500$) and weak spatial dependence ($s_d=0.1$), M$_2$ achieved RRISPEE($\beta$)=0.074, marginally better than M$_6$'s $0.078$. Under strong dependence ($s_d=0.9$), M$_2$ achieved $0.075$ compared to M$_6$'s $0.234$. This finite-sample variance trade-off is consistent with the general statistical property that quantile regression, while robust, is typically less efficient than least squares when the error distribution is perfectly Gaussian.

A significant observation across both M$_2$ and M$_6$ is the consistently higher estimation error for the spatial kernel, RRISPEE($\rho$), compared to the predictor kernel, RRISPEE($\beta$). For $n=500, s_d=0.9$, RRISPEE($\beta$) was $0.075$ (M$_2$) versus RRISPEE($\rho$)=$6.578$ (M$_2$). The discrepancy is attributable to the methodological complexity inherent in estimating the SAR functional kernel $\rho(t,u)$. The estimation of $\rho(t,u)$ relies on a two-stage IV procedure required to address endogeneity of the lagged response $\widetilde{\Y}_{i}(u)$. This introduces additional uncertainty and estimation variance in the first stage (where $\widetilde{\Y}_{i}$ is instrumented), propagating into the final estimate $\widehat{\rho}(t,u)$. Thus, the estimation of the spatial dependence structure is significantly more challenging than the estimation of the exogenous covariate effect.

The RMSPE results underscore the critical importance of explicitly modeling spatial dependence, especially as its strength increases. For weak spatial dependence ($s_d=0.1$), non-spatial models show competitive, albeit inferior, performance compared to the spatial models. For $n=500$, the non-spatial penalized mean (M$_3$) achieved an RMSPE of $1.754$, which is substantially higher than M$_2$'s $0.203$ and M$_6$'s $0.423$, but relatively low in absolute terms. However, the necessity of the SAR structure becomes overwhelming as $s_d$ increases. The non-spatial models (M$_3$, M$_4$, M$_5$) yield prediction errors that are 10 to 13 times higher than the best spatial models. This dramatic failure is a direct consequence of model misspecification: when $s_d$ is large, the SAR operator accounts for a substantial proportion of the functional response variance. By omitting the spatially lagged response term $\widetilde{\Y}_{i}(u)$, non-spatial models assume independent errors. Since the true errors in these models contain the spatial spillover effect, they become highly correlated with the functional predictor $\X_i(s)$, violating the fundamental assumption of strict exogeneity in the regression. This leads to severely biased and inconsistent estimates for $\widehat{\beta}(t,s)$, rendering the out-of-sample predictions (RMSPE) useless. Classical functional regression approaches such as the penalized function-on-function regression of \texttt{pffr}/\texttt{refund} (M$_3$) and the boosting-based \texttt{FDboost} (M$_4$) are not equipped to handle this type of SAR structure and thus break down in the presence of strong spatial dependence.

For the interval results in Tables~\ref{tab:tab_1}-\ref{tab:tab_3}, empirical coverage (EC) is defined curvewise over the response evaluation grid. Specifically, a test response curve is counted as covered only when $\widehat Q_{0.025,i}(t_r) \leq \Y_i(t_r) \leq \widehat Q_{0.975,i}(t_r)$ for every $r=1,\ldots,R$. Thus, EC represents simultaneous coverage of the complete discretized response curve, rather than average pointwise coverage over the grid. Because failure at any one grid point makes the curvewise coverage indicator zero, this criterion is substantially stricter than pointwise coverage and helps explain the zero EC values obtained for M$_5$.

The analysis of the prediction intervals constructed by M$_6$ (using $\tau=0.025$ and $\tau=0.975$) confirms the reliability of the SFoF-QR framework for uncertainty quantification. Across all sample sizes and spatial strengths in Case~1, M$_6$ demonstrated high EC. For instance, at $n=500$ and $s_d=0.9$, the EC was $0.989$ against the nominal $95\%$ level, resulting in a CPD of $0.064$ and a low score of $1.405$. The consistent over-coverage (EC $> 0.95$) coupled with a low score indicates that M$_6$ provides prediction intervals that are conservative (slightly wider than strictly necessary) but highly informative (sharp). This reliable, albeit conservative, behavior stems from the two-stage penalized estimation process where BIC selection favors highly smooth functions, which can slightly inflate the residual variance estimates in the tails, providing robust uncertainty bands even in the presence of estimation uncertainty introduced by the IV procedure.

\begin{small}
\begin{center}
\tabcolsep 0.175in
\renewcommand{\arraystretch}{0.92}
\begin{longtable}{@{}ccccccccc@{}} 
\caption{Computed mean $\text{RRISPEE}(\widehat{\beta})$, $\text{RRISPEE}(\widehat{\rho})$, RMSPE, $\text{EC}$, $\text{CPD}$, and $\text{score}$ values with their standard errors (given in brackets) under Case 1. The results are obtained over 250 Monte-Carlo replications using three sample sizes ($n$) and three different strength of spatial dependence ($s_d$). Metrics EC, CPD, and score are not applicable for mean-regression models, and are thus presented as ----.}\label{tab:tab_1} \\
\toprule
{$s_d$} & {$n$} & Method & $\text{RRISPEE}(\widehat{\beta})$ & $\text{RRISPEE}(\widehat{\rho})$ & RMSPE & $\text{EC}$ & $\text{CPD}$ & $\text{score}$ \\ \midrule
\endfirsthead
\toprule
{$s_d$} & {$n$} & Method & $\text{RRISPEE}(\widehat{\beta})$ & $\text{RRISPEE}(\widehat{\rho})$ & RMSPE & $\text{EC}$ & $\text{CPD}$ & $\text{score}$ \\ \midrule
\endhead
\midrule 
\multicolumn{9}{r}{Continued on next page} \\ 
\endfoot
\endlastfoot
0.1 & 100   &  M$_1$  & 31.015 & 152.538 & 5.703 & ---- & ---- & ---- \\
    &       &         & (22.329) & (818.949) & (2.502) & (----) & (----) & (----) \\
    &       &  M$_2$  & 0.130 & 12.601 & 0.300 & ---- & ---- & ---- \\
    &       &         & (0.014) & (3.082) & (0.021) & (----) & (----) & (----) \\
    &       &  M$_3$  & 2.910 & ---- & 1.986 & ---- & ---- & ---- \\
    &       &         & (0.693) & (----) & (0.338) & (----) & (----) & (----) \\
    &       &  M$_4$  & 3.399 & ---- & 5.315 & ---- & ---- & ---- \\
    &       &         & (2.142) & (----) & (1.647) & (----) & (----) & (----) \\
    &       &  M$_5$  & 19.405 & ---- & 4.915 & 0.000 & 0.950 & 15.829 \\
    &       &         & (3.825) & (----) & (1.931) & (0.000) & (0.000) & (11.707) \\
    &       &  M$_6$  & 0.129 & 16.051 & 0.578 & 0.998 & 0.048 & 0.091 \\
    &       &         & (0.014) & (4.852) & (0.094) & (0.001) & (0.001) & (0.001) \\
    &       &  $\text{M}_6^{(\text{Opt})}$  & 0.120 & 14.121 & 0.551 & 0.999 & 0.046 & 0.090 \\
    &       &         & (0.012) & (4.544) & (0.091) & (0.001) & (0.001) & (0.001) \\
\cmidrule(l){2-9}				
    & 250   &  M$_1$  & 33.760 & 115.116 & 5.727 & ---- & ---- & ---- \\
    &       &         & (25.020) & (410.589) & (1.610) & (----) & (----) & (----) \\
    &       &  M$_2$  & 0.094 & 12.334 & 0.239 & ---- & ---- & ---- \\
    &       &         & (0.009) & (3.364) & (0.014) & (----) & (----) & (----) \\
    &       &  M$_3$  & 2.854 & ---- & 1.820 & ---- & ---- & ---- \\
    &       &         & (0.381) & (----) & (0.227) & (----) & (----) & (----) \\
    &       &  M$_4$  & 2.149 & ---- & 4.512 & ---- & ---- & ---- \\
    &       &         & (0.835) & (----) & (1.262) & (----) & (----) & (----) \\
    &       &  M$_5$  & 17.798 & ---- & 3.982 & 0.000 & 0.950 & 10.981 \\
    &       &         & (2.056) & (----) & (1.489) & (0.000) & (0.000) & (6.888) \\
    &       &  M$_6$  & 0.097 & 16.616 & 0.515 & 0.999 & 0.049 & 0.091 \\
    &       &         & (0.010) & (5.547) & (0.114) & (0.001) & (0.001) & (0.001) \\
    &       &  $\text{M}_6^{(\text{Opt})}$  & 0.088 & 15.311 & 0.502 & 0.999 & 0.049 & 0.091 \\
    &       &         & (0.011) & (5.331) & (0.111) & (0.001) & (0.001) & (0.001) \\
\cmidrule(l){2-9}
    & 500   &  M$_1$  & 31.249 & 252.022 & 5.734 & ---- & ---- & ---- \\
    &       &         & (21.009) & (1192.677) & (3.177) & (----) & (----) & (----) \\
    &       &  M$_2$  & 0.074 & 12.385 & 0.203 & ---- & ---- & ---- \\
    &       &         & (0.006) & (3.629) & (0.010) & (----) & (----) & (----) \\
    &       &  M$_3$  & 2.839 & ---- & 1.754 & ---- & ---- & ---- \\
    &       &         & (0.254) & (----) & (0.144) & (----) & (----) & (----) \\
    &       &  M$_4$  & 1.728 & ---- & 4.001 & ---- & ---- & ---- \\
    &       &         & (0.314) & (----) & (0.979) & (----) & (----) & (----) \\
    &       &  M$_5$  & 17.332 & ---- & 3.288 & 0.000 & 0.950 & 9.548 \\
    &       &         & (1.376) & (----) & (1.178) & (0.000) & (0.000) & (5.141) \\
    &       &  M$_6$  & 0.078 & 14.466 & 0.423 & 0.999 & 0.049 & 0.091 \\
    &       &         & (0.007) & (4.853) & (0.109) & (0.001) & (0.001) & (0.001) \\
    &       &  $\text{M}_6^{(\text{Opt})}$  & 0.073 & 13.899 & 0.400 & 0.999 & 0.049 & 0.090 \\
    &       &         & (0.007) & (4.611) & (0.107) & (0.001) & (0.001) & (0.001) \\
\midrule

0.5 & 100   &  M$_1$  & 33.795 & 119.034 & 6.428 & ---- & ---- & ---- \\
    &       &         & (46.907) & (1442.225) & (10.310) & (----) & (----) & (----) \\
    &       &  M$_2$  & 0.130 & 7.930 & 0.340 & ---- & ---- & ---- \\
    &       &         & (0.014) & (0.561) & (0.042) & (----) & (----) & (----) \\
    &       &  M$_3$  & 6.530 & ---- & 5.337 & ---- & ---- & ---- \\
    &       &         & (3.557) & (----) & (1.306) & (----) & (----) & (----) \\
    &       &  M$_4$  & 4.647 & ---- & 7.153 & ---- & ---- & ---- \\
    &       &         & (2.209) & (----) & (2.183) & (----) & (----) & (----) \\
    &       &  M$_5$  & 21.371 & ---- & 6.872 & 0.000 & 0.950 & 43.943 \\
    &       &         & (7.576) & (----) & (2.276) & (0.000) & (0.000) & (21.190) \\
    &       &  M$_6$  & 0.130 & 7.844 & 0.798 & 0.999 & 0.049 & 0.163 \\
    &       &         & (0.014) & (0.966) & (0.155) & (0.001) & (0.001) & (0.001) \\
    &       &  $\text{M}_6^{(\text{Opt})}$  & 0.125 & 7.522 & 0.761 & 0.999 & 0.049 & 0.161 \\
    &       &         & (0.011) & (0.950) & (0.151) & (0.001) & (0.001) & (0.001) \\
\cmidrule(l){2-9}
					
    & 250   &  M$_1$  & 33.967 & 49.458 & 5.884 & ---- & ---- & ---- \\
    &       &         & (24.072) & (215.981) & (1.825) & (----) & (----) & (----) \\
    &       &  M$_2$  & 0.093 & 7.480 & 0.281 & ---- & ---- & ---- \\
    &       &         & (0.009) & (0.735) & (0.033) & (----) & (----) & (----) \\
    &       &  M$_3$  & 4.447 & ---- & 4.728 & ---- & ---- & ---- \\
    &       &         & (1.921) & (----) & (0.905) & (----) & (----) & (----) \\
    &       &  M$_4$  & 2.669 & ---- & 6.083 & ---- & ---- & ---- \\
    &       &         & (0.907) & (----) & (1.638) & (----) & (----) & (----) \\
    &       &  M$_5$  & 17.874 & ---- & 5.759 & 0.000 & 0.950 & 37.881 \\
    &       &         & (2.079) & (----) & (1.625) & (0.000) & (0.000) & (17.220) \\
    &       &  M$_6$  & 0.100 & 7.538 & 0.757 & 0.999 & 0.049 & 0.163 \\
    &       &         & (0.011) & (1.350) & (0.201) & (0.001) & (0.001) & (0.001) \\
    &       &  $\text{M}_6^{(\text{Opt})}$  & 0.101 & 7.542 & 0.759 & 0.999 & 0.049 & 0.164 \\
    &       &         & (0.011) & (1.354) & (0.206) & (0.001) & (0.001) & (0.001) \\
\cmidrule(l){2-9}

    & 500   &  M$_1$  & 35.288 & 93.437 & 6.317 & ---- & ---- & ---- \\
    &       &         & (64.404) & (480.723) & (6.521) & (----) & (----) & (----) \\
    &       &  M$_2$  & 0.075 & 7.075 & 0.253 & ---- & ---- & ---- \\
    &       &         & (0.006) & (0.756) & (0.032) & (----) & (----) & (----) \\
    &       &  M$_3$  & 3.719 & ---- & 4.456 & ---- & ---- & ---- \\
    &       &         & (1.342) & (----) & (0.643) & (----) & (----) & (----) \\
    &       &  M$_4$  & 2.068 & ---- & 5.483 & ---- & ---- & ---- \\
    &       &         & (0.454) & (----) & (1.284) & (----) & (----) & (----) \\
    &       &  M$_5$  & 17.348 & ---- & 5.114 & 0.000 & 0.950 & 28.207 \\
    &       &         & (1.380) & (----) & (1.216) & (0.000) & (0.000) & (15.786) \\
    &       &  M$_6$  & 0.083 & 7.650 & 0.623 & 0.999 & 0.049 & 0.164 \\
    &       &         & (0.010) & (1.862) & (0.255) & (0.001) & (0.001) & (0.001) \\
    &       &  $\text{M}_6^{(\text{Opt})}$  & 0.081 & 7.649 & 0.620 & 0.999 & 0.049 & 0.161 \\
    &       &         & (0.010) & (1.859) & (0.251) & (0.001) & (0.001) & (0.001) \\
\midrule

0.9 & 100   &  M$_1$  & 32.315 & 24.613 & 10.530 & ---- & ---- & ---- \\
    &       &         & (22.619) & (23.124) & (12.346) & (----) & (----) & (----) \\
    &       &  M$_2$  & 0.130 & 7.669 & 0.855 & ---- & ---- & ---- \\
    &       &         & (0.014) & (0.355) & (0.304) & (----) & (----) & (----) \\
    &       &  M$_3$  & 14.962 & ---- & 13.919 & ---- & ---- & ---- \\
    &       &         & (9.316) & (----) & (5.119) & (----) & (----) & (----) \\
    &       &  M$_4$  & 8.753 & ---- & 14.724 & ---- & ---- & ---- \\
    &       &         & (4.316) & (----) & (5.335) & (----) & (----) & (----) \\
    &       &  M$_5$  & 26.136 & ---- & 14.579 & 0.000 & 0.950 & 108.837 \\
    &       &         & (11.086) & (----) & (5.389) & (0.000) & (0.000) & (76.185) \\
    &       &  M$_6$  & 0.201 & 8.459 & 2.078 & 0.992 & 0.053 & 0.921 \\
    &       &         & (0.070) & (2.103) & (0.710) & (0.048) & (0.037) & (0.553) \\
    &       &  $\text{M}_6^{(\text{Opt})}$  & 0.194 & 8.229 & 2.008 & 0.995 & 0.051 & 0.913 \\
    &       &         & (0.068) & (2.100) & (0.706) & (0.044) & (0.034) & (0.547) \\
\cmidrule(l){2-9}
					
    & 250   &  M$_1$  & 33.171 & 21.388 & 9.028 & ---- & ---- & ---- \\
    &       &         & (22.859) & (9.893) & (9.391) & (----) & (----) & (----) \\
    &       &  M$_2$  & 0.094 & 7.116 & 0.774 & ---- & ---- & ---- \\
    &       &         & (0.009) & (0.498) & (0.189) & (----) & (----) & (----) \\
    &       &  M$_3$  & 9.209 & ---- & 11.593 & ---- & ---- & ---- \\
    &       &         & (4.996) & (----) & (4.073) & (----) & (----) & (----) \\
    &       &  M$_4$  & 4.976 & ---- & 12.234 & ---- & ---- & ---- \\
    &       &         & (2.039) & (----) & (4.302) & (----) & (----) & (----) \\
    &       &  M$_5$  & 20.121 & ---- & 12.055 & 0.000 & 0.950 & 84.579 \\
    &       &         & (6.749) & (----) & (4.279) & (0.000) & (0.000) & (59.331) \\
    &       &  M$_6$  & 0.261 & 11.565 & 2.629 & 0.969 & 0.066 & 1.595 \\
    &       &         & (0.108) & (2.386) & (1.017) & (0.082) & (0.059) & (3.102) \\
    &       &  $\text{M}_6^{(\text{Opt})}$  & 0.257 & 11.207 & 2.309 & 0.977 & 0.061 & 1.438 \\
    &       &         & (0.101) & (2.226) & (1.006) & (0.078) & (0.053) & (3.055) \\
\cmidrule(l){2-9}

    & 500   &  M$_1$  & 31.656 & 19.932 & 7.054 & ---- & ---- & ---- \\
    &       &         & (20.704) & (2.529) & (4.780) & (----) & (----) & (----) \\
    &       &  M$_2$  & 0.075 & 6.578 & 0.793 & ---- & ---- & ---- \\
    &       &         & (0.006) & (0.506) & (0.164) & (----) & (----) & (----) \\
    &       &  M$_3$  & 6.866 & ---- & 10.259 & ---- & ---- & ---- \\
    &       &         & (3.433) & (----) & (3.334) & (----) & (----) & (----) \\
    &       &  M$_4$  & 3.611 & ---- & 10.732 & ---- & ---- & ---- \\
    &       &         & (1.375) & (----) & (3.601) & (----) & (----) & (----) \\
    &       &  M$_5$  & 17.774 & ---- & 10.583 & 0.000 & 0.950 & 72.421 \\
    &       &         & (2.167) & (----) & (3.525) & (0.000) & (0.000) & (42.127) \\
    &       &  M$_6$  & 0.234 & 9.158 & 2.261 & 0.989 & 0.064 & 1.405 \\
    &       &         & (0.109) & (6.115) & (1.050) & (0.101) & (0.082) & (3.403) \\
    &       &  $\text{M}_6^{(\text{Opt})}$  & 0.222 & 8.981 & 2.188 & 0.993 & 0.059 & 1.378 \\
    &       &         & (0.096) & (6.102) & (1.003) & (0.099) & (0.084) & (3.307) \\
\bottomrule
\end{longtable}
\end{center}
\end{small}

Case 2 specifically examines the robustness and efficiency gains of quantile regression under challenging conditions. The simulation results presented in Table~\ref{tab:tab_2} definitively establish the superiority of the SFoF-QR ($M_6$) framework for coefficient recovery when the error distribution is asymmetric and contaminated. Examining estimation accuracy (RRISPEE($\beta$)) across increasing sample sizes shows that M$_6$ consistently outperforms M$_2$. The significant estimation error reduction provided by M$_6$ (up to $42\%$ more accurate than M$_2$ at $n=500, s_d=0.5$) confirms the theoretical advantage of quantile regression. Because the errors are contaminated with one-sided positive shocks that scale with the underlying signal, the conditional error distribution is right-skewed. The least squares criterion employed by M$_2$ minimizes the squared residual errors, which disproportionately weights large errors caused by the heteroscedasticity. This sensitivity pulls the fitted conditional mean function away from the true function, resulting in a biased estimate for $\widehat{\beta}(t,s)$ and thus a higher RRISPEE($\beta$). In contrast, M$_6$, minimizing the check loss, is intrinsically robust to heteroskedasticity and asymmetric tails, allowing it to estimate the true median function without bias, leading to far more accurate recovery of the underlying functional coefficient surfaces.

The superior coefficient estimation of M$_6$ translates directly into improved out-of-sample predictive performance (RMSPE). Under strong spatial dependence ($s_d=0.9$) and large sample size ($n=500$): M$_6$ achieved an RMSPE of $1.110$ and M$_2$ achieved an RMSPE of $1.340$. By accurately capturing the central tendency (median) of the conditional distribution, which is not unduly influenced by the asymmetric contamination, M$_6$ provides predictions that are closer to the robust center of the true conditional response, maintaining predictive consistency that M$_2$ loses due to bias toward the heavy tail. The failure of non-spatial models (M$_3$, M$_4$, M$_5$) remains pronounced. For $n=500, s_d=0.9$, M$_3$ yielded an RMSPE of $10.010$, and M$_5$ yielded $10.325$, confirming that spatial misspecification (ignoring $\rho(t,u)$) remains the dominant source of error, even when robustness measures are employed (as in M$_5$).

The performance of the prediction intervals constructed by M$_6$ confirms its utility for uncertainty quantification in heterogeneous environments. Even when the error variance is non-constant (heteroscedasticity) and dependent on the underlying signal magnitude, M$_6$ maintains robust coverage quality. For example, for $n=500, s_d=0.9$, the EC remained nearly perfect at $0.999$, with a CPD of $0.050$.

\begin{small}
\begin{center}
\tabcolsep 0.175in
\renewcommand{\arraystretch}{0.92}
\begin{longtable}{@{}ccccccccc@{}} 
\caption{Computed mean $\text{RRISPEE}(\widehat{\beta})$, $\text{RRISPEE}(\widehat{\rho})$, RMSPE, $\text{EC}$, $\text{CPD}$, and $\text{score}$ values with their standard errors (given in brackets) under Case 2. The results are obtained over 250 Monte-Carlo replications using three sample sizes ($n$) and three different strength of spatial dependence ($s_d$). Metrics EC, CPD, and score are not applicable for mean-regression models, and are thus presented as ----.}\label{tab:tab_2} \\
\toprule
{$s_d$} & {$n$} & Method & $\text{RRISPEE}(\widehat{\beta})$ & $\text{RRISPEE}(\widehat{\rho})$ & RMSPE & $\text{EC}$ & $\text{CPD}$ & $\text{score}$ \\ \midrule
\endfirsthead
\toprule
{$s_d$} & {$n$} & Method & $\text{RRISPEE}(\widehat{\beta})$ & $\text{RRISPEE}(\widehat{\rho})$ & RMSPE & $\text{EC}$ & $\text{CPD}$ & $\text{score}$ \\ \midrule
\endhead
\midrule
\multicolumn{9}{r}{Continued on next page} \\ 
\endfoot
\endlastfoot
0.1 & 100   &  M$_1$  & 29.768 & 96.111 & 5.606 & ---- & ---- & ---- \\
    &       &         & (19.418) & (249.357) & (1.558) & (----) & (----) & (--------) \\
    &       &  M$_2$  & 0.468 & 62.103 & 1.090 & ---- & ---- & ---- \\
    &       &         & (0.098) & (26.556) & (0.114) & (----) & (----) & (----) \\
    &       &  M$_3$  & 2.948 & ---- & 2.000 & ---- & ---- & ---- \\
    &       &         & (0.685) & (----) & (0.401) & (----) & (----) & (----) \\
    &       &  M$_4$  & 3.448 & ---- & 5.148 & ---- & ---- & ---- \\
    &       &         & (2.462) & (----) & (1.758) & (----) & (----) & (----) \\
    &       &  M$_5$  & 19.729 & ---- & 4.791 & 0.000 & 0.950 & 15.946 \\
    &       &         & (3.767) & (----) & (2.023) & (0.000) & (0.000) & (11.932) \\
    &       &  M$_6$  & 0.256 & 30.996 & 0.911 & 0.931 & 0.081 & 0.475 \\
    &       &         & (0.036) & (10.163) & (0.133) & (0.071) & (0.051) & (0.452) \\
    &       &  $\text{M}_6^{(\text{Opt})}$  & 0.233 & 28.106 & 0.841 & 0.951 & 0.072 & 0.411 \\
    &       &         & (0.022) & (8.252) & (0.111) & (0.068) & (0.047) & (0.444) \\
\cmidrule(l){2-9}				
    & 250   &  M$_1$  & 36.072 & 164.129 & 5.901 & ---- & ---- & ---- \\
    &       &         & (27.220) & (756.275) & (1.863) & (----) & (----) & (----) \\
    &       &  M$_2$  & 0.337 & 51.106 & 1.037 & ---- & ---- & ---- \\
    &       &         & (0.056) & (18.857) & (0.073) & (----) & (----) & (----) \\
    &       &  M$_3$  & 2.853 & ---- & 1.837 & ---- & ---- & ---- \\
    &       &         & (0.400) & (----) & (0.224) & (----) & (----) & (----) \\
    &       &  M$_4$  & 2.047 & ---- & 4.427 & ---- & ---- & ---- \\
    &       &         & (0.668) & (----) & (1.239) & (----) & (----) & (----) \\
    &       &  M$_5$  & 17.892 & ---- & 3.768 & 0.000 & 0.950 & 10.761 \\
    &       &         & (2.055) & (----) & (1.427) & (0.000) & (0.000) & (6.278) \\
    &       &  M$_6$  & 0.191 & 30.471 & 0.845 & 0.968 & 0.059 & 0.275 \\
    &       &         & (0.022) & (10.685) & (0.148) & (0.040) & (0.024) & (0.186) \\
    &       &  $\text{M}_6^{(\text{Opt})}$  & 0.167 & 25.551 & 0.778 & 0.972 & 0.052 & 0.233 \\
    &       &         & (0.020) & (8.554) & (0.131) & (0.035) & (0.026) & (0.177) \\
\cmidrule(l){2-9}
    & 500   &  M$_1$  & 32.678 & 119.643 & 5.812 & ---- & ---- & ---- \\
    &       &         & (20.090) & (347.512) & (2.021) & (----) & (----) & (----) \\
    &       &  M$_2$  & 0.253 & 60.353 & 0.998 & ---- & ---- & ---- \\
    &       &         & (0.050) & (30.114) & (0.065) & (----) & (----) & (----) \\
    &       &  M$_3$  & 2.821 & ---- & 1.784 & ---- & ---- & ---- \\
    &       &         & (0.260) & (----) & (0.195) & (----) & (----) & (----) \\
    &       &  M$_4$  & 1.723 & ---- & 3.983 & ---- & ---- & ---- \\
    &       &         & (0.306) & (----) & (1.052) & (----) & (----) & (----) \\
    &       &  M$_5$  & 17.330 & ---- & 3.257 & 0.000 & 0.950 & 9.204 \\
    &       &         & (1.419) & (----) & (1.179) & (0.000) & (0.000) & (5.226) \\
    &       &  M$_6$  & 0.145 & 32.619 & 0.871 & 0.976 & 0.055 & 0.235 \\
    &       &         & (0.018) & (14.310) & (0.209) & (0.031) & (0.018) & (0.092) \\
    &       &  $\text{M}_6^{(\text{Opt})}$  & 0.135 & 23.402 & 0.795 & 0.983 & 0.051 & 0.222 \\
    &       &         & (0.016) & (12.352) & (0.188) & (0.030) & (0.015) & (0.090) \\
\midrule

0.5 & 100   &  M$_1$  & 32.211 & 52.869 & 5.910 & ---- & ---- & ---- \\
    &       &         & (20.913) & (155.701) & (1.733) & (----) & (----) & (----) \\
    &       &  M$_2$  & 0.460 & 14.325 & 1.170 & ---- & ---- & ---- \\
    &       &         & (0.101) & (4.721) & (0.194) & (----) & (----) & (----) \\
    &       &  M$_3$  & 6.631 & ---- & 5.222 & ---- & ---- & ---- \\
    &       &         & (3.482) & (----) & (1.410) & (----) & (----) & (----) \\
    &       &  M$_4$  & 4.529 & ---- & 7.077 & ---- & ---- & ---- \\
    &       &         & (2.527) & (----) & (2.330) & (----) & (----) & (----) \\
    &       &  M$_5$  & 24.999 & ---- & 6.623 & 0.000 & 0.950 & 42.958 \\
    &       &         & (11.190) & (----) & (2.451) & (0.000) & (0.000) & (18.479) \\
    &       &  M$_6$  & 0.254 & 9.612 & 1.008 & 0.957 & 0.069 & 0.626 \\
    &       &         & (0.037) & (1.850) & (0.219) & (0.063) & (0.041) & (0.584) \\
    &       &  $\text{M}_6^{(\text{Opt})}$  & 0.221 & 8.404 & 1.001 & 0.967 & 0.063 & 0.611 \\
    &       &         & (0.033) & (1.724) & (0.211) & (0.063) & (0.040) & (0.556) \\
\cmidrule(l){2-9}
					
    & 250   &  M$_1$  & 35.768 & 29.065 & 5.901 & ---- & ---- & ---- \\
    &       &         & (26.837) & (40.640) & (1.599) & (----) & (----) & (----) \\
    &       &  M$_2$  & 0.336 & 12.527 & 1.082 & ---- & ---- & ---- \\
    &       &         & (0.055) & (4.171) & (0.146) & (----) & (----) & (----) \\
    &       &  M$_3$  & 4.500 & ---- & 4.663 & ---- & ---- & ---- \\
    &       &         & (1.920) & (----) & (0.852) & (----) & (----) & (----) \\
    &       &  M$_4$  & 2.575 & ---- & 5.925 & ---- & ---- & ---- \\
    &       &         & (0.792) & (----) & (1.620) & (----) & (----) & (----) \\
    &       &  M$_5$  & 18.294 & ---- & 5.553 & 0.000 & 0.950 & 39.791 \\
    &       &         & (3.342) & (----) & (1.576) & (0.000) & (0.000) & (18.067) \\
    &       &  M$_6$  & 0.190 & 9.261 & 0.944 & 0.980 & 0.056 & 0.422 \\
    &       &         & (0.021) & (2.272) & (0.246) & (0.038) & (0.022) & (0.275) \\
    &       &  $\text{M}_6^{(\text{Opt})}$  & 0.177 & 8.348 & 0.888 & 0.990 & 0.050 & 0.405 \\
    &       &         & (0.019) & (2.056) & (0.232) & (0.036) & (0.018) & (0.263) \\
\cmidrule(l){2-9}

    & 500   &  M$_1$  & 32.775 & 35.732 & 5.801 & ---- & ---- & ---- \\
    &       &         & (20.029) & (56.879) & (1.435) & (--------) & (----) & (----) \\
    &       &  M$_2$  & 0.252 & 13.858 & 1.020 & ---- & ---- & ---- \\
    &       &         & (0.047) & (6.471) & (0.111) & (----) & (----) & (----) \\
    &       &  M$_3$  & 3.670 & ---- & 4.451 & ---- & ---- & ---- \\
    &       &         & (1.287) & (----) & (0.702) & (----) & (----) & (----) \\
    &       &  M$_4$  & 2.019 & ---- & 5.421 & ---- & ---- & ---- \\
    &       &         & (0.400) & (----) & (1.328) & (----) & (----) & (----) \\
    &       &  M$_5$  & 17.375 & ---- & 5.045 & 0.000 & 0.950 & 32.446 \\
    &       &         & (1.419) & (----) & (1.224) & (0.000) & (0.000) & (17.241) \\
    &       &  M$_6$  & 0.146 & 9.403 & 1.037 & 0.988 & 0.052 & 0.365 \\
    &       &         & (0.018) & (2.847) & (0.286) & (0.023) & (0.010) & (0.093) \\
    &       &  $\text{M}_6^{(\text{Opt})}$  & 0.133 & 8.112 & 1.002 & 0.993 & 0.050 & 0.333 \\
    &       &         & (0.013) & (2.609) & (0.284) & (0.019) & (0.009) & (0.094) \\
\midrule

0.9 & 100   &  M$_1$  & 83.165 & 40.747 & 12.268 & ---- & ---- & ---- \\
    &       &         & (522.018) & (181.636) & (31.114) & (----) & (----) & (----) \\
    &       &  M$_2$  & 0.465 & 10.145 & 2.168 & ---- & ---- & ---- \\
    &       &         & (0.099) & (2.358) & (1.148) & (----) & (----) & (----) \\
    &       &  M$_3$  & 15.783 & ---- & 13.325 & ---- & ---- & ---- \\
    &       &         & (9.311) & (----) & (5.575) & (----) & (----) & (----) \\
    &       &  M$_4$  & 8.342 & ---- & 14.135 & ---- & ---- & ---- \\
    &       &         & (4.164) & (----) & (5.752) & (----) & (----) & (----) \\
    &       &  M$_5$  & 31.919 & ---- & 13.938 & 0.000 & 0.950 & 103.405 \\
    &       &         & (12.744) & (----) & (5.914) & (0.000) & (0.000) & (72.741) \\
    &       &  M$_6$  & 0.278 & 8.482 & 2.107 & 0.948 & 0.080 & 4.072 \\
    &       &         & (0.048) & (1.514) & (0.547) & (0.142) & (0.119) & (9.677) \\
    &       &  $\text{M}_6^{(\text{Opt})}$  & 0.261 & 7.883 & 2.044 & 0.970 & 0.068 & 3.852 \\
    &       &         & (0.044) & (1.489) & (0.533) & (0.133) & (0.101) & (7.023) \\
\cmidrule(l){2-9}
					
    & 250   &  M$_1$  & 35.997 & 20.690 & 8.546 & ---- & ---- & ---- \\
    &       &         & (26.348) & (4.435) & (8.030) & (----) & (----) & (----) \\
    &       &  M$_2$  & 0.337 & 9.320 & 1.659 & ---- & ---- & ---- \\
    &       &         & (0.057) & (2.430) & (0.648) & (----) & (----) & (----) \\
    &       &  M$_3$  & 9.304 & ---- & 11.344 & ---- & ---- & ---- \\
    &       &         & (5.140) & (----) & (4.041) & (----) & (----) & (----) \\
    &       &  M$_4$  & 4.775 & ---- & 11.978 & ---- & ---- & ---- \\
    &       &         & (1.956) & (----) & (4.280) & (----) & (----) & (----) \\
    &       &  M$_5$  & 21.629 & ---- & 11.736 & 0.000 & 0.950 & 85.881 \\
    &       &         & (8.729) & (----) & (4.287) & (0.000) & (0.000) & (56.761) \\
    &       &  M$_6$  & 0.293 & 8.740 & 1.637 & 0.973 & 0.062 & 2.519 \\
    &       &         & (0.098) & (2.648) & (0.618) & (0.086) & (0.065) & (2.597) \\
    &       &  $\text{M}_6^{(\text{Opt})}$  & 0.281 & 8.455 & 1.589 & 0.988 & 0.058 & 2.499 \\
    &       &         & (0.091) & (2.501) & (0.601) & (0.080) & (0.061) & (2.588) \\
\cmidrule(l){2-9}

    & 500   &  M$_1$  & 32.517 & 20.269 & 7.142 & ---- & ---- & ---- \\
    &       &         & (19.846) & (5.457) & (4.498) & (----) & (----) & (----) \\
    &       &  M$_2$  & 0.254 & 9.341 & 1.340 & ---- & ---- & ---- \\
    &       &         & (0.050) & (3.180) & (0.567) & (----) & (----) & (----) \\
    &       &  M$_3$  & 6.642 & ---- & 10.010 & ---- & ---- & ---- \\
    &       &         & (3.094) & (----) & (3.429) & (----) & (----) & (----) \\
    &       &  M$_4$  & 3.449 & ---- & 10.459 & ---- & ---- & ---- \\
    &       &         & (1.218) & (----) & (3.678) & (----) & (----) & (----) \\
    &       &  M$_5$  & 17.684 & ---- & 10.325 & 0.000 & 0.950 & 71.021 \\
    &       &         & (1.503) & (----) & (3.570) & (0.000) & (0.000) & (42.721) \\
    &       &  M$_6$  & 0.199 & 7.872 & 1.110 & 0.999 & 0.050 & 1.945 \\
    &       &         & (0.106) & (3.418) & (0.754) & (0.099) & (0.051) & (2.241) \\
    &       &  $\text{M}_6^{(\text{Opt})}$  & 0.191 & 7.637 & 1.077 & 0.999 & 0.050 & 1.938 \\
    &       &         & (0.101) & (3.331) & (0.744) & (0.091) & (0.047) & (2.133) \\
\bottomrule
\end{longtable}
\end{center}
\end{small}

Case 3 assesses the performance of the models under leptokurtosis (heavy tails). The results presented in Table~\ref{tab:tab_3} confirm the theoretical efficiency gains of M$_6$ in the presence of heavy tails: $M_6$ consistently demonstrates superior accuracy in coefficient recovery compared to M$_2$. The $L_2$ norm used by M$_2$ (least squares) is highly sensitive to the large residuals generated by the heavy tails of the Laplace distribution, leading to inflated variance in its estimates and thus higher RRISPEE. The $L_1$ norm derived from the check loss function (used by M$_6$) minimizes absolute errors, which is the optimal criterion for leptokurtic data. This intrinsic alignment between the statistical loss function and the data generation process grants M$_6$ greater asymptotic efficiency, leading to demonstrably lower estimation error for both $\widehat{\beta}(t,s)$ and $\widehat{\rho}(t,u)$.

While M$_6$ is statistically more efficient for coefficient estimation, both M$_6$ and M$_2$ show highly competitive RMSPE values for point prediction of the central tendency, particularly as sample size $n$ increases. The observation that M$_2$ remains competitive for point prediction despite being less asymptotically efficient for coefficient recovery suggests that for prediction tasks specifically targeting the central tendency under symmetric distributions, the least squares procedure benefits from variance reduction that slightly compensates for its theoretical sub-optimality compared to the median estimator when $n$ is sufficient. Nevertheless, both M$_2$ and M$_6$ maintain predictive errors far below the non-spatial models (e.g., M$_3$ at $10.588$ and M$_5$ at $10.933$).

Under heavy-tailed errors, the prediction intervals constructed by M$_6$ remain highly reliable, demonstrating near-perfect EC across all tested scenarios (EC $\approx 0.999-1.000$). For example, for $n=500, s_d=0.9$, the EC was $1.000$ (CPD $0.050$), with a score of $1.666$.

\begin{small}
\begin{center}
\tabcolsep 0.175in
\renewcommand{\arraystretch}{0.92}
\begin{longtable}{@{}ccccccccc@{}} 
\caption{Computed mean $\text{RRISPEE}(\widehat{\beta})$, $\text{RRISPEE}(\widehat{\rho})$, RMSPE, $\text{EC}$, $\text{CPD}$, and $\text{score}$ values with their standard errors (given in brackets) under Case 3. The results are obtained over 250 Monte-Carlo replications using three sample sizes ($n$) and three different strength of spatial dependence ($s_d$). Metrics EC, CPD, and score are not applicable for mean-regression models, and are thus presented as ----.}\label{tab:tab_3} \\
\toprule
{$s_d$} & {$n$} & Method & $\text{RRISPEE}(\widehat{\beta})$ & $\text{RRISPEE}(\widehat{\rho})$ & RMSPE & $\text{EC}$ & $\text{CPD}$ & $\text{score}$ \\ \midrule
\endfirsthead
\toprule
{$s_d$} & {$n$} & Method & $\text{RRISPEE}(\widehat{\beta})$ & $\text{RRISPEE}(\widehat{\rho})$ & RMSPE & $\text{EC}$ & $\text{CPD}$ & $\text{score}$ \\ \midrule
\endhead
\midrule
\multicolumn{9}{r}{Continued on next page} \\ 
\endfoot
\endlastfoot
0.1 & 100   &  M$_1$  & 32.366 & 255.993 & 5.978 & ---- & ---- & ---- \\
    &       &         & (48.229) & (2642.872) & (8.879) & (----) & (----) & (----) \\
    &       &  M$_2$  & 0.272 & 29.138 & 0.484 & ---- & ---- & ---- \\
    &       &         & (0.032) & (9.110) & (0.034) & (----) & (----) & (----) \\
    &       &  M$_3$  & 2.907 & ---- & 1.988 & ---- & ---- & ---- \\
    &       &         & (0.684) & (----) & (0.337) & (----) & (----) & (----) \\
    &       &  M$_4$  & 3.387 & ---- & 5.310 & ---- & ---- & ---- \\
    &       &         & (2.119) & (----) & (1.666) & (----) & (----) & (----) \\
    &       &  M$_5$  & 19.444 & ---- & 4.894 & 0.000 & 0.950 & 17.378 \\
    &       &         & (3.824) & (----) & (1.896) & (0.000) & (0.000) & (13.086) \\
    &       &  M$_6$  & 0.242 & 27.949 & 0.780 & 0.994 & 0.048 & 0.156 \\
    &       &         & (0.028) & (8.947) & (0.131) & (0.008) & (0.003) & (0.009) \\
    &       &  $\text{M}_6^{(\text{Opt})}$  & 0.211 & 21.602 & 0.699 & 0.998 & 0.040 & 0.131 \\
    &       &         & (0.021) & (7.665) & (0.122) & (0.005) & (0.002) & (0.005) \\
\cmidrule(l){2-9}				
    & 250   &  M$_1$  & 37.912 & 204.875 & 5.939 & ---- & ---- & ---- \\
    &       &         & (65.486) & (1170.712) & (2.494) & (----) & (----) & (----) \\
    &       &  M$_2$  & 0.214 & 28.609 & 0.390 & ---- & ---- & ---- \\
    &       &         & (0.023) & (8.639) & (0.022) & (----) & (----) & (----) \\
    &       &  M$_3$  & 2.876 & ---- & 1.840 & ---- & ---- & ---- \\
    &       &         & (0.385) & (----) & (0.228) & (----) & (----) & (----) \\
    &       &  M$_4$  & 2.140 & ---- & 4.480 & ---- & ---- & ---- \\
    &       &         & (0.834) & (----) & (1.326) & (----) & (----) & (----) \\
    &       &  M$_5$  & 17.776 & ---- & 3.973 & 0.000 & 0.950 & 11.459 \\
    &       &         & (2.052) & (----) & (1.508) & (0.000) & (0.000) & (6.904) \\
    &       &  M$_6$  & 0.190 & 26.481 & 0.645 & 0.998 & 0.049 & 0.152 \\
    &       &         & (0.020) & (8.356) & (0.160) & (0.002) & (0.001) & (0.001) \\
    &       &  $\text{M}_6^{(\text{Opt})}$  & 0.181 & 24.388 & 0.558 & 0.999 & 0.047 & 0.149 \\
    &       &         & (0.018) & (7.881) & (0.158) & (0.001) & (0.001) & (0.001) \\
\cmidrule(l){2-9}
    & 500   &  M$_1$  & 31.361 & 142.312 & 5.547 & ---- & ---- & ---- \\
    &       &         & (20.698) & (421.357) & (1.403) & (----) & (----) & (----) \\
    &       &  M$_2$  & 0.173 & 31.245 & 0.332 & ---- & ---- & ---- \\
    &       &         & (0.020) & (11.165) & (0.017) & (----) & (----) & (----) \\
    &       &  M$_3$  & 2.852 & ---- & 1.763 & ---- & ---- & ---- \\
    &       &         & (0.257) & (----) & (0.163) & (----) & (----) & (----) \\
    &       &  M$_4$  & 1.735 & ---- & 4.065 & ---- & ---- & ---- \\
    &       &         & (0.326) & (----) & (1.041) & (----) & (----) & (----) \\
    &       &  M$_5$  & 17.277 & ---- & 3.289 & 0.000 & 0.950 & 9.914 \\
    &       &         & (1.334) & (----) & (1.192) & (0.000) & (0.000) & (0.000) \\
    &       &  M$_6$  & 0.131 & 24.658 & 0.508 & 0.999 & 0.049 & 0.152 \\
    &       &         & (0.011) & (10.343) & (0.142) & (0.001) & (0.001) & (0.001) \\
    &       &  $\text{M}_6^{(\text{Opt})}$  & 0.128 & 22.441 & 0.489 & 0.999 & 0.048 & 0.150 \\
    &       &         & (0.010) & (10.002) & (0.137) & (0.001) & (0.001) & (0.001) \\
\midrule

0.5 & 100   &  M$_1$  & 42.521 & 159.902 & 7.941 & ---- & ---- & ---- \\
    &       &         & (141.507) & (1317.453) & (17.312) & (----) & (----) & (----) \\
    &       &  M$_2$  & 0.273 & 9.291 & 0.519 & ---- & ---- & ---- \\
    &       &         & (0.033) & (1.632) & (0.062) & (----) & (----) & (----) \\
    &       &  M$_3$  & 6.335 & ---- & 5.354 & ---- & ---- & ---- \\
    &       &         & (3.507) & (----) & (1.300) & (----) & (----) & (----) \\
    &       &  M$_4$  & 4.757 & ---- & 7.207 & ---- & ---- & ---- \\
    &       &         & (2.385) & (----) & (2.227) & (----) & (----) & (----) \\
    &       &  M$_5$  & 22.683 & ---- & 6.874 & 0.000 & 0.950 & 44.757 \\
    &       &         & (9.693) & (----) & (2.295) & (0.000) & (0.000) & (20.311) \\
    &       &  M$_6$  & 0.244 & 8.968 & 0.933 & 0.998 & 0.049 & 0.272 \\
    &       &         & (0.028) & (1.663) & (0.200) & (0.004) & (0.001) & (0.006) \\
    &       &  $\text{M}_6^{(\text{Opt})}$  & 0.202 & 8.445 & 0.901 & 0.999 & 0.047 & 0.255 \\
    &       &         & (0.021) & (1.543) & (0.189) & (0.002) & (0.001) & (0.005) \\
\cmidrule(l){2-9}
					
    & 250   &  M$_1$  & 36.647 & 36.046 & 6.413 & ---- & ---- & ---- \\
    &       &         & (43.469) & (68.345) & (8.081) & (----) & (----) & (----) \\
    &       &  M$_2$  & 0.215 & 8.800 & 0.413 & ---- & ---- & ---- \\
    &       &         & (0.022) & (1.713) & (0.035) & (----) & (----) & (----) \\
    &       &  M$_3$  & 4.550 & ---- & 4.751 & ---- & ---- & ---- \\
    &       &         & (1.953) & (----) & (0.927) & (----) & (----) & (----) \\
    &       &  M$_4$  & 2.686 & ---- & 6.033 & ---- & ---- & ---- \\
    &       &         & (0.930) & (----) & (1.733) & (----) & (----) & (----) \\
    &       &  M$_5$  & 17.782 & ---- & 5.737 & 0.000 & 0.950 & 40.569 \\
    &       &         & (2.050) & (----) & (1.662) & (0.000) & (0.000) & (17.649) \\
    &       &  M$_6$  & 0.192 & 7.624 & 0.683 & 0.999 & 0.049 & 0.272 \\
    &       &         & (0.020) & (1.564) & (0.114) & (0.001) & (0.001) & (0.00) \\
    &       &  $\text{M}_6^{(\text{Opt})}$  & 0.178 & 6.774 & 0.644 & 0.999 & 0.047 & 0.253 \\
    &       &         & (0.017) & (1.434) & (0.111) & (0.001) & (0.001) & (0.00) \\
\cmidrule(l){2-9}

    & 500   &  M$_1$  & 32.216 & 64.847 & 6.123 & ---- & ---- & ---- \\
    &       &         & (23.586) & (209.425) & (4.946) & (----) & (----) & (----) \\
    &       &  M$_2$  & 0.173 & 9.004 & 0.358 & ---- & ---- & ---- \\
    &       &         & (0.020) & (2.198) & (0.027) & (----) & (----) & (----) \\
    &       &  M$_3$  & 3.715 & ---- & 4.494 & ---- & ---- & ---- \\
    &       &         & (1.345) & (----) & (0.703) & (----) & (----) & (----) \\
    &       &  M$_4$  & 2.073 & ---- & 5.572 & ---- & ---- & ---- \\
    &       &         & (0.453) & (----) & (1.372) & (----) & (----) & (----) \\
    &       &  M$_5$  & 17.274 & ---- & 5.15 & 0.000 & 0.950 & 32.508 \\
    &       &         & (1.346) & (----) & (1.219) & (0.000) & (0.000) & (16.351) \\
    &       &  M$_6$  & 0.132 & 7.252 & 0.487 & 0.999 & 0.049 & 0.272 \\
    &       &         & (0.017) & (2.190) & (0.151) & (0.001) & (0.001) & (0.001) \\
    &       &  $\text{M}_6^{(\text{Opt})}$  & 0.119 & 6.335 & 0.451 & 0.999 & 0.047 & 0.251 \\
    &       &         & (0.011) & (2.005) & (0.147) & (0.001) & (0.001) & (0.001) \\
\midrule

0.9 & 100   &  M$_1$  & 32.427 & 25.650 & 10.512 & ---- & ---- & ---- \\
    &       &         & (22.769) & (21.546) & (11.791) & (----) & (----) & (----) \\
    &       &  M$_2$  & 0.273 & 8.052 & 1.160 & ---- & ---- & ---- \\
    &       &         & (0.032) & (0.934) & (0.518) & (----) & (----) & (----) \\
    &       &  M$_3$  & 14.571 & ---- & 14.099 & ---- & ---- & ---- \\
    &       &         & (9.279) & (----) & (5.546) & (----) & (----) & (----) \\
    &       &  M$_4$  & 8.720 & ---- & 14.893 & ---- & ---- & ---- \\
    &       &         & (4.397) & (----) & (5.825) & (----) & (----) & (----) \\
    &       &  M$_5$  & 27.936 & ---- & 14.740 & 0.000 & 0.950 & 114.015 \\
    &       &         & (12.409) & (----) & (5.868) & (0.000) & (0.000) & (78.309) \\
    &       &  M$_6$  & 0.268 & 8.319 & 1.987 & 0.999 & 0.049 & 1.402 \\
    &       &         & (0.038) & (1.843) & (0.709) & (0.005) & (0.001) & (0.128) \\
    &       &  $\text{M}_6^{(\text{Opt})}$  & 0.237 & 7.668 & 1.801 & 0.999 & 0.047 & 1.333 \\
    &       &         & (0.035) & (1.756) & (0.687) & (0.004) & (0.001) & (0.128) \\
\cmidrule(l){2-9}
					
    & 250   &  M$_1$  & 32.945 & 21.403 & 8.428 & ---- & ---- & ---- \\
    &       &         & (22.750) & (13.498) & (7.715) & (----) & (----) & (----) \\
    &       &  M$_2$  & 0.214 & 7.495 & 0.895 & ---- & ---- & ---- \\
    &       &         & (0.022) & (1.049) & (0.272) & (----) & (----) & (----) \\
    &       &  M$_3$  & 9.249 & ---- & 11.553 & ---- & ---- & ---- \\
    &       &         & (5.106) & (----) & (4.232) & (----) & (----) & (----) \\
    &       &  M$_4$  & 5.012 & ---- & 12.138 & ---- & ---- & ---- \\
    &       &         & (2.081) & (----) & (4.509) & (----) & (----) & (----) \\
    &       &  M$_5$  & 21.270 & ---- & 12.009 & 0.000 & 0.950 & 87.717 \\
    &       &         & (8.611) & (----) & (4.412) & (0.000) & (0.000) & (60.193) \\
    &       &  M$_6$  & 0.213 & 6.988 & 1.546 & 0.993 & 0.052 & 1.649 \\
    &       &         & (0.091) & (0.970) & (0.964) & (0.031) & (0.017) & (0.788) \\
    &       &  $\text{M}_6^{(\text{Opt})}$  & 0.189 & 6.641 & 1.381 & 0.995 & 0.050 & 1.511 \\
    &       &         & (0.087) & (0.881) & (0.957) & (0.030) & (0.012) & (0.655) \\
\cmidrule(l){2-9}

    & 500   &  M$_1$  & 31.317 & 20.001 & 7.067 & ---- & ---- & ---- \\
    &       &         & (18.984) & (3.937) & (4.893) & (----) & (----) & (----) \\
    &       &  M$_2$  & 0.172 & 7.132 & 0.874 & ---- & ---- & ---- \\
    &       &         & (0.020) & (1.185) & (0.241) & (----) & (----) & (----) \\
    &       &  M$_3$  & 7.038 & ---- & 10.588 & ---- & ---- & ---- \\
    &       &         & (3.637) & (----) & (3.437) & (----) & (----) & (----) \\
    &       &  M$_4$  & 3.694 & ---- & 11.095 & ---- & ---- & ---- \\
    &       &         & (1.405) & (----) & (3.714) & (----) & (----) & (----) \\
    &       &  M$_5$  & 17.722 & ---- & 10.933 & 0.000 & 0.950 & 76.371 \\
    &       &         & (1.380) & (----) & (3.583) & (0.000) & (0.000) & (49.109) \\
    &       &  M$_6$  & 0.102 & 6.314 & 0.971 & 1.000 & 0.050 & 1.666 \\
    &       &         & (0.024) & (1.542) & (0.712) & (0.085) & (0.061) & (0.801) \\
    &       &  $\text{M}_6^{(\text{Opt})}$  & 0.099 & 6.031 & 0.952 & 1.000 & 0.050 & 1.533 \\
    &       &         & (0.021) & (1.432) & (0.705) & (0.081) & (0.064) & (0.805) \\
\bottomrule
\end{longtable}
\end{center}
\end{small}

The most striking conclusion from the Monte Carlo experiments is the catastrophic failure of all non-spatial functional regression models (M$_3$, M$_4$, M$_5$) when spatial dependence ($s_d$) is moderate or strong. This failure is quantified by comparing predictive performance across models when $n=500$ and $s_d=0.9$, see, for example, Table~\ref{tab:pretab_2}. Classical penalized function-on-function regression models, represented by M$_3$, and non-spatial quantile regression models (M$_5$) operate under the assumption that observations are independent across locations. In a functional SAR data generating process, the response $\Y_{i}(t)$ is causally dependent on the neighboring responses $\widetilde{\Y}_{i}(u)$ through the spatial kernel $\rho(t,u)$. By omitting this term, non-spatial models effectively absorb this causal spatial effect into the error term $\epsilon_i(t)$. This critical misspecification violates the strict exogeneity assumption, as the predictor $\X_i(s)$ and the omitted spatial lag term are still related through the spatial weights matrix $\bm{W}$. Consequently, the non-spatial models suffer catastrophic bias and inconsistency in the estimated coefficient surface $\widehat{\beta}(t,s)$, rendering them entirely unsuitable for spatial functional data analysis, irrespective of their specific target (mean or quantile) or regularization method (B-spline or boosting). This structural inadequacy validates the necessity of the functional SAR framework used by both M$_2$ and M$_6$, and explains why they maintain stable, low predictive error even under strong spatial dependence ($s_d=0.9$).

\begin{small}
\begin{center}
\tabcolsep 0.255in
\renewcommand{\arraystretch}{0.92}
\begin{longtable}{@{}lcccccc@{}}
\caption{Predictive performance (RMSPE) and the failure of non-spatial models ($n=500, s_d=0.9$). \% denotes the improvements.}\label{tab:pretab_2} \\
\toprule
Error Case & M$_6$ & $M_{2}$ & $M_{3}$ & $M_{5}$ & \% (M$_6$ over M$_3$) & \% (M$_6$ over M$_5$) \\
\midrule
\endfirsthead
\toprule
Error Case & M$_6$ & $M_{2}$ & $M_{3}$ & $M_{5}$ & \% (M$_6$ over M$_3$) & \% (M$_6$ over M$_5$) \\
\midrule
\endhead
\midrule
\multicolumn{7}{r}{Continued on next page} \\
\endfoot
\endlastfoot
Case~1 & 2.261 & 0.793 & 10.259 & 10.583 & $77.9\%$ & $78.6\%$ \\
Case~2 & 1.110 & 1.340 & 10.010 & 10.325 & $88.9\%$ & $89.2\%$ \\
Case~3 & 0.971 & 0.874 & 10.588 & 10.933 & $90.8\%$ & $91.1\%$ \\
\bottomrule
\end{longtable}
\end{center}
\end{small}

Finally, we evaluate the efficacy of the proposed data-driven smoothing parameter selection via the BIC, denoted as $\text{M}_6^{(\text{Opt})}$. The results, presented in the last rows of Tables~\ref{tab:tab_1}-\ref{tab:tab_3}, demonstrate that optimizing $(\lambda_\rho, \lambda_\beta)$ yields consistent improvements over the fixed-parameter specification (M$_6$) across almost all scenarios. This advantage is most pronounced in challenging regimes characterized by complex error structures and limited sample sizes. For instance, in Case 2 (heteroscedastic errors) with strong spatial dependence ($s_d = 0.9$) and $n=100$, the BIC-optimized model reduces the RRISPEE for $\beta(t,s)$ to 0.261, compared to 0.278 for the fixed-parameter model. These findings underscore the practical value of the BIC in the context of proposed approach. While fixed parameters may suffice for simple Gaussian errors, they lack the flexibility to adapt to varying signal-to-noise ratios found in heteroscedastic or heavy-tailed data. The BIC effectively navigates this bias-variance trade-off, penalizing model complexity to prevent the overfitting of spatial noise while retaining sufficient flexibility to capture the structural functional relationship. Consequently, the BIC-guided approach proves to be a robust and valuable mechanism for automated estimation, ensuring superior accuracy without reliance on arbitrary manual tuning.

\subsection{Coverage of coefficient intervals}

To assess the finite-sample behavior of uncertainty quantification for the coefficient surfaces, we conducted an additional coverage experiment for the proposed method M$_6$ in one representative simulation setting. We considered Case~2 with $n=500$, moderate spatial dependence $s_d=0.5$, and the median surface $\tau=0.5$. Since the current implementation returns the estimated coefficient surfaces but not closed-form pointwise standard-error surfaces, we estimated the pointwise sampling variability of $\widehat{\beta}_{0.5}(t,s)$ and $\widehat{\rho}_{0.5}(t,u)$ from the Monte Carlo replications and used normal intervals of the form
\begin{equation*}
\widehat{\beta}_{0.5}(t,s)\pm 1.96 \widehat{\mathrm{sd}}\{\widehat{\beta}_{0.5}(t,s)\}, \qquad \widehat{\rho}_{0.5}(t,u)\pm 1.96 \widehat{\mathrm{sd}}\{\widehat{\rho}_{0.5}(t,u)\}.
\end{equation*}
Coverage was evaluated pointwise over the evaluation grid and then averaged over grid points and Monte Carlo replications. Specifically,
\begin{equation*}
\mathrm{Cov}_{\beta} = \frac{1}{N_{\mathrm{sim}}RT_x} \sum_{b=1}^{N_{\mathrm{sim}}} \sum_{r=1}^{R} \sum_{g=1}^{T_x} \mathbf{1} \left[ \beta(t_r,s_g) \in \mathrm{CI}^{(b)}_{\beta}(t_r,s_g) \right], \end{equation*}
and
\begin{equation*}
\mathrm{Cov}_{\rho} = \frac{1}{N_{\mathrm{sim}}R^2} \sum_{b=1}^{N_{\mathrm{sim}}} \sum_{r=1}^{R} \sum_{q=1}^{R} \mathbf{1} \left[ \rho(t_r,u_q) \in \mathrm{CI}^{(b)}_{\rho}(t_r,u_q) \right].
\end{equation*}
The results in Table~\ref{tab:coef_coverage} show that the pointwise intervals for $\beta_{0.5}(t,s)$ attain coverage very close to the 95\% level. The coverage for $\rho_{0.5}(t,u)$ is lower, indicating that uncertainty for the spatial spillover surface is harder to quantify in finite samples, especially because this component is estimated through the endogenous spatial-lag block and the first-stage IV projection. Thus, the additional experiment supports the finite-sample reliability of the proposed uncertainty quantification for the regression surface $\beta_\tau$, while also revealing that coefficient intervals for the spatial dependence surface $\rho_\tau$ can be anti-conservative in moderate samples.

\begin{table}[!ht]
\centering
\caption{Empirical pointwise coverage of 95\% coefficient intervals for the coefficient surfaces of M$_6$ under Case~2 with $n=500$, $s_d=0.5$, and $\tau=0.5$. Coverage and average interval width are averaged over the evaluation grid and Monte Carlo replications. Standard deviations are given in parentheses.}
\label{tab:coef_coverage}
\begin{tabular}{lcc}
\toprule
Coefficient surface & Empirical coverage & Average interval width \\
\midrule
$\beta_{0.5}(t,s)$ & 0.949 (0.219) & 0.017 (0.007) \\
$\rho_{0.5}(t,u)$ & 0.819 (0.358) & 0.094 (0.035) \\
\bottomrule
\end{tabular}
\end{table}

\subsection{Computing time}\label{subsec:computing-time}

Table~\ref{tab:comp_time_case2} reports representative wall-clock computing times for one Monte Carlo replication under the heteroskedastic upper-tail contamination design (Case~2). All computations were carried out on a DELL workstation equipped with a 13th Gen Intel(R) Core(TM) i9-13900HX CPU (2.20 GHz) and 64.0 GB RAM. The reported values are in seconds and correspond to the implementation used in the simulation study.

\begin{table}[!htb]
\centering
\caption{Representative computing time in seconds for one Monte Carlo replication under Case~2.}
\label{tab:comp_time_case2}
\setlength{\tabcolsep}{0.22in}
\renewcommand{\arraystretch}{0.90}
\scriptsize
\begin{tabular}{@{}lccccccccccc@{}}
\toprule
& \multicolumn{3}{c}{$s_d=0.1$}
&& \multicolumn{3}{c}{$s_d=0.5$}
&& \multicolumn{3}{c}{$s_d=0.9$} \\
\cmidrule(lr){2-4} \cmidrule(lr){6-8} \cmidrule(lr){10-12}
& $n=100$ & $250$ & $500$
&& $100$ & $250$ & $500$
&& $100$ & $250$ & $500$ \\
\midrule
M$_1$ & 0.22 & 0.27 & 1.63 && 0.20 & 0.22 & 0.63 && 0.22 & 0.24 & 1.00 \\
M$_2$ & 1.57 & 2.94 & 5.71 && 1.00 & 2.45 & 5.64 && 1.00 & 3.03 & 5.73 \\
M$_3$ & 2.54 & 8.70 & 9.69 && 1.98 & 7.73 & 9.45 && 2.72 & 6.22 & 12.14 \\
M$_4$ & 0.57 & 0.81 & 1.34 && 0.41 & 0.82 & 1.53 && 0.50 & 0.92 & 1.51 \\
M$_5$ & 0.39 & 2.18 & 3.81 && 1.22 & 2.16 & 3.65 && 1.21 & 2.08 & 3.51 \\
M$_6$ & 4.31 & 10.72 & 27.39 && 5.13 & 20.25 & 41.73 && 8.72 & 22.83 & 54.34 \\
\bottomrule
\end{tabular}
\end{table}

As expected, the proposed SFoF-QR estimator M$_6$ is computationally more demanding than the mean-based and non-spatial competitors because it combines a first-stage instrumental quantile fit, tensor-product spline bases, smoothing-parameter selection, and L-BFGS-B iterations for the smoothed quantile objective. Nevertheless, the computing times remain moderate in the simulation settings: even for $n=500$, one complete Case~2 replication takes less than one minute for M$_6$ on the workstation described above. The increase in runtime with $n$ and $s_d$ reflects the larger Stage~1 and Stage~2 design matrices, the greater cost of repeated objective and gradient evaluations, and the stronger spatial feedback handled by the two-stage estimator.

\section{Additional results for the air-quality application}
\label{sec:supp-air-quality}

This section provides additional graphical results from the
PM$_{2.5}$/PM$_{10}$ air-quality application. These displays supplement the numerical summaries and the selected coefficient-surface comparisons reported in the main manuscript.

\subsection{Fitted-curve comparison across methods}

Figure~\ref{fig:S_app_methods} presents the complete fitted-curve comparison for the six methods considered in the empirical analysis. The graphical findings are consistent with the fitted-performance measures reported in Table~6 of the main manuscript.

\begin{figure}[!htbp]
\centering
\includegraphics[width=5.05cm]{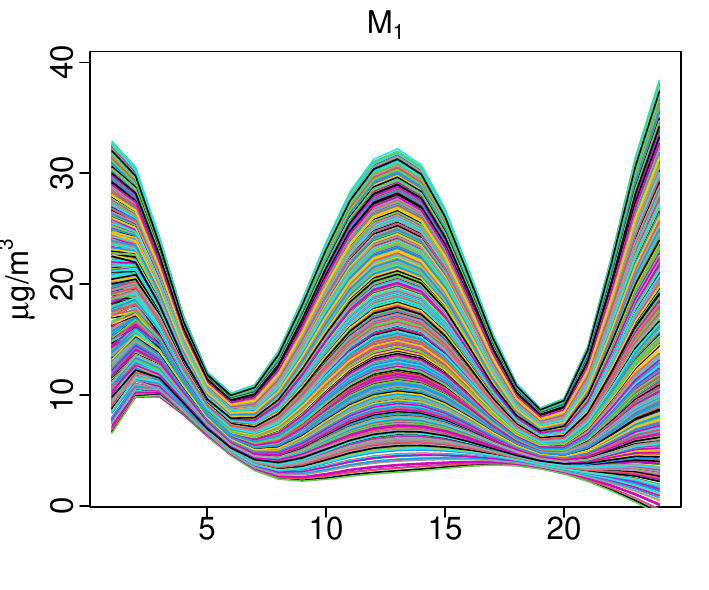}
\quad
\includegraphics[width=5.05cm]{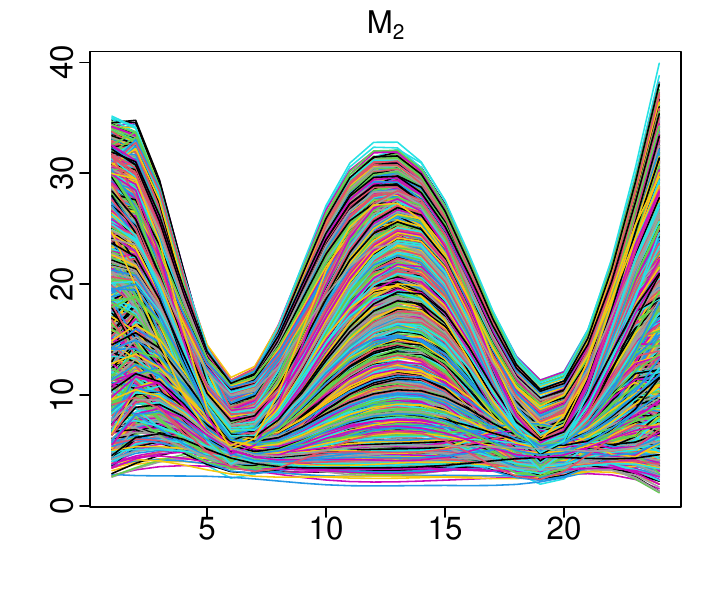}
\quad
\includegraphics[width=5.05cm]{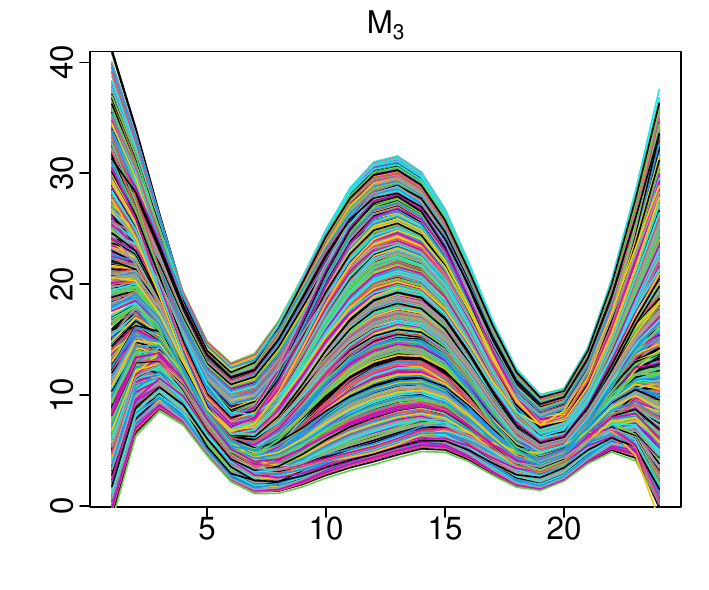}
\\
\includegraphics[width=5.05cm]{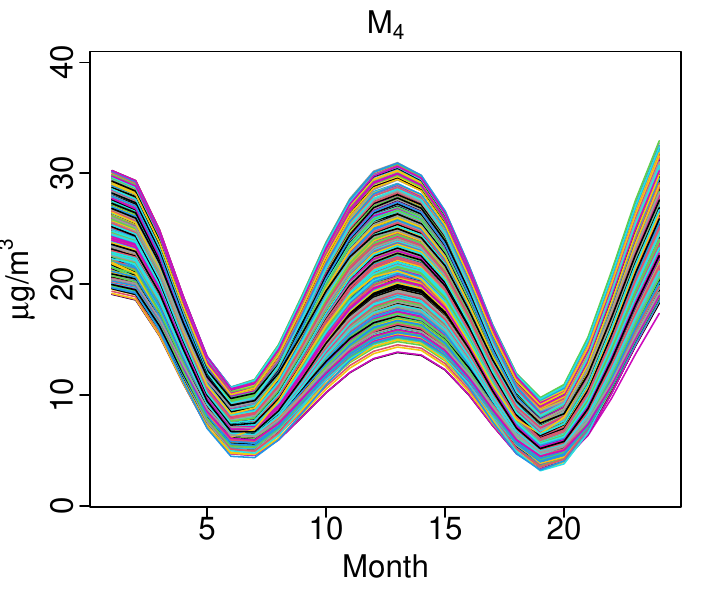}
\quad
\includegraphics[width=5.05cm]{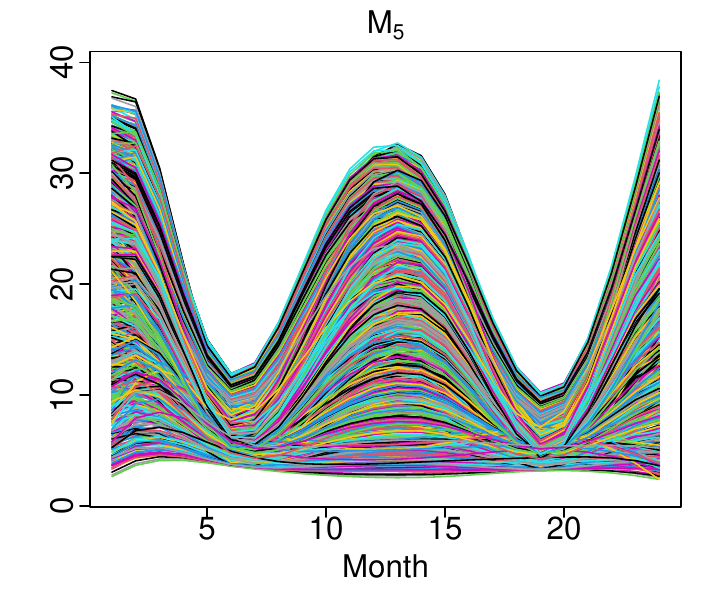}
\quad
\includegraphics[width=5.05cm]{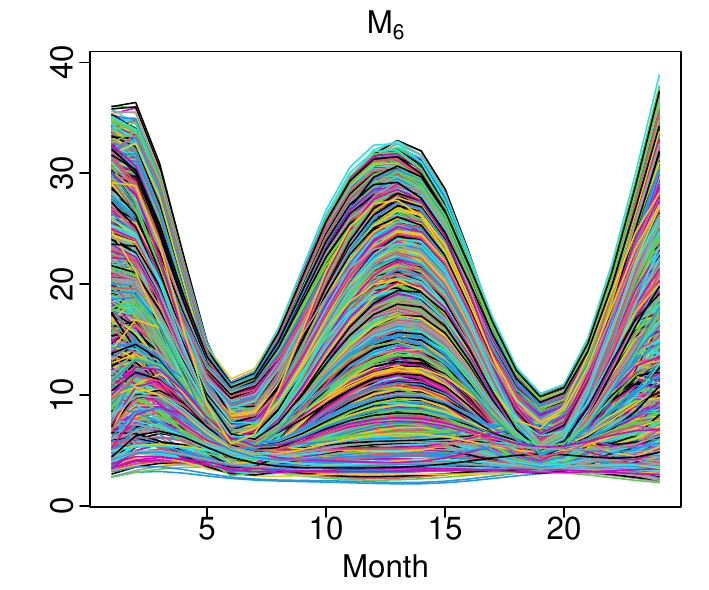}
\caption{Fitted PM$_{2.5}$ trajectories in the air-quality application under M$_1$, M$_2$, M$_3$, M$_4$, M$_5$, and M$_6$. The fitted curves for the quantile-regression methods are obtained at $\tau=0.5$.}
\label{fig:S_app_methods}
\end{figure}

\subsection{Fitted curves across quantile levels}

Figure~\ref{fig:S_app_quantiles} presents the complete collection of fitted conditional quantile curves obtained by the proposed method. The greater separation between the lower and upper quantile curves in winter provides graphical evidence of temporally varying conditional dispersion.

\begin{figure}[!htbp]
\centering
\includegraphics[width=5.05cm]{Fig_4a.pdf}
\quad
\includegraphics[width=5.05cm]{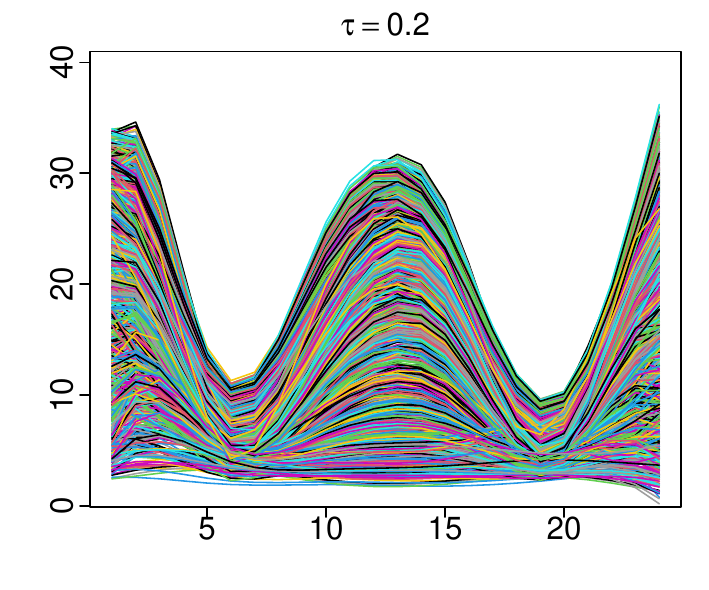}
\quad
\includegraphics[width=5.05cm]{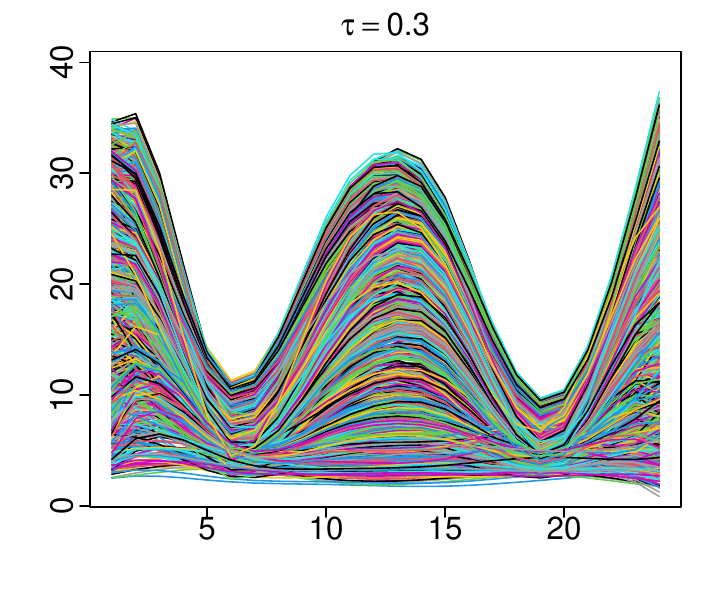}
\\
\includegraphics[width=5.05cm]{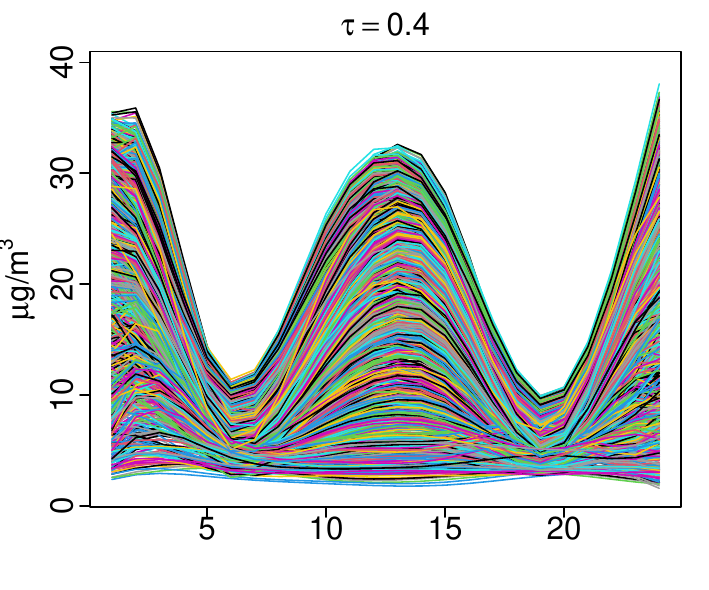}
\quad
\includegraphics[width=5.05cm]{Fig_4e.pdf}
\quad
\includegraphics[width=5.05cm]{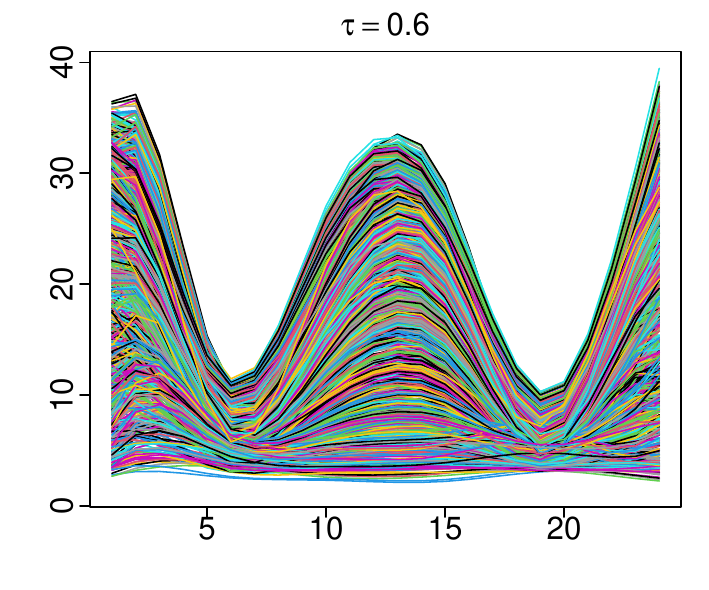}
\\
\includegraphics[width=5.05cm]{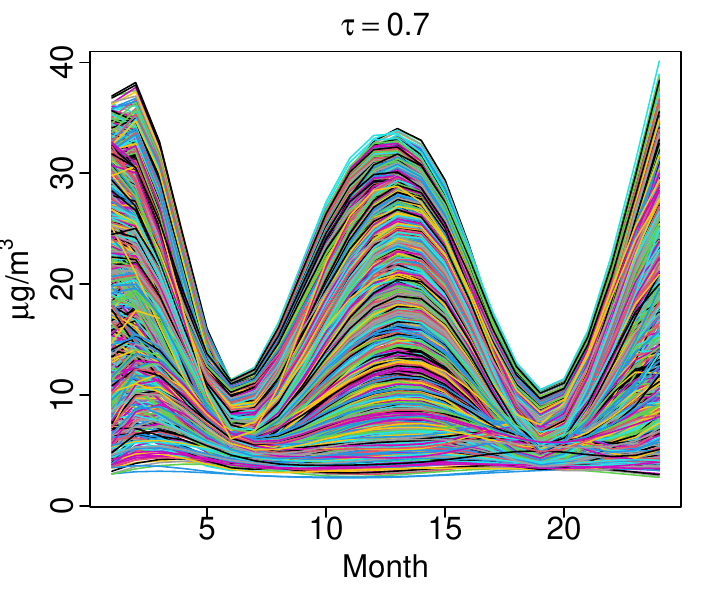}
\quad
\includegraphics[width=5.05cm]{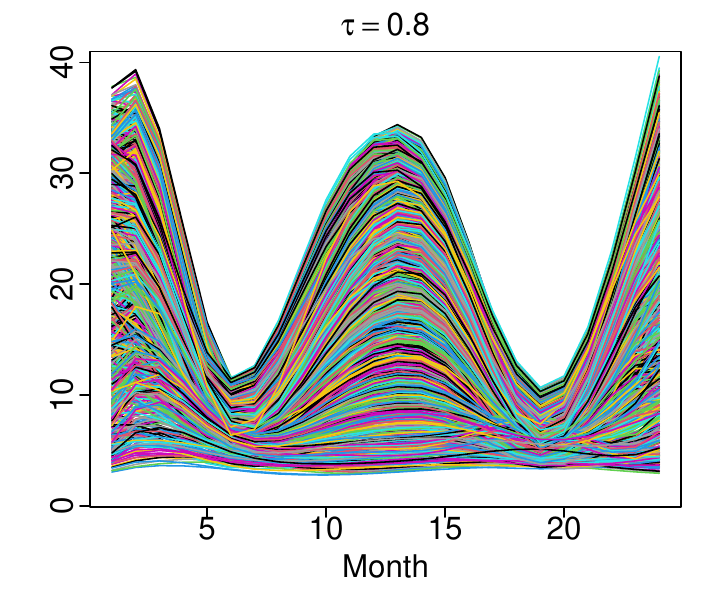}
\quad
\includegraphics[width=5.05cm]{Fig_4i.pdf}
\caption{Fitted conditional PM$_{2.5}$ quantile curves obtained by the
proposed method at
$\tau\in\{0.1,0.2,\ldots,0.9\}$.}
\label{fig:S_app_quantiles}
\end{figure}

\clearpage
\bibliographystyle{agsm}
\bibliography{bibliography.bib}

\end{document}